\definecolor{ForestGreen}{RGB}{34,139,34}
\theoremstyle{plain}
\newtheorem{theo}{Theorem}
\newtheorem{lem}{Lemma}
\newtheorem{prop}{Proposition}
\newtheorem{cor}{Corollary}
\newtheorem{defi}{Definition}
\newtheorem{assumpCons}{Assumption}
\newtheorem{assumpId}{Assumption}
\newcommand{\Cor}[1]{\textcolor{black}{#1}}
\newcommand{\Coragain}[1]{\textcolor{black}{#1}}
\theoremstyle{remark}
\newcommand{\bY}{\boldsymbol{Y}}
\newcommand{\bpi}{\boldsymbol{\pi}}
\newcommand{\bmu}{\boldsymbol{\mu}}
\newcommand{\bsigma}{\boldsymbol{\sigma}}
\newcommand{\bT}{\boldsymbol{T}}
\newcommand{\btheta}{\boldsymbol{\theta}}
\newcommand{\bz}{\boldsymbol{z}}
\newcommand{\bx}{\boldsymbol{x}}
\newcommand{\bX}{\boldsymbol{X}}
\newcommand{\bxi}{\boldsymbol{\xi}}
\newcommand{\bw}{\boldsymbol{w}}
\newcommand{\zik}{z_{ik}}
\newcommand{\pik}{\pi_{k}}
\newcommand{\sigmah}{\widehat{\sigma}}
\newcommand{\muh}{\widehat{\mu}}
\newcommand{\Yibarr}[2]{\overline{\bY}_{i,#1:#2,r}}
\newcommand{\Yibarrc}[2]{\Yibarr{#1}{#2}^{(c)}}
\newcommand{\Prob}[1]{\mathbb{P}\left(#1\right)}
\newcommand{\Esp}[1]{\mathbb{E}\left[#1\right]}
\newcommand{\Espc}[2]{\mathbb{E}_{#1}\left[#2\right]}
\newcommand{\Scal}[2]{\left<#1,#2\right>} %%% Produit scalaire
\newcommand{\bthetac}{\btheta^{(c)}}
\newcommand{\bTc}{\bT^{(c)}}
\newcommand{\Lk}{L_{k}}
\newcommand{\sik}{s_{ik}}
\newcommand{\sikc}{\sik^{(c)}}
\newcommand{\pikc}{\pi_k^{(c)}}
\newcommand{\muc}{\mu^{(c)}}
\newcommand{\sigmac}{\sigma^{(c)}}
\newcommand{\Tc}{T^{(c)}}
\newcommand{\pic}{\pi^{(c)}}
\newcommand{\Lc}{L^{(c)}}
\newcommand{\Deltakc}{\Delta_k^{(c)}}
\newcommand{\splusk}{s_{+k}}
\newcommand{\spluskc}{\splusk^{(c)}}
\newcommand{\setR}{\mathbb{R}}
\newcommand{\bzs}{\bz^{\star}}
\newcommand{\bTs}{\bT^{\star}}
\newcommand{\bthetas}{\btheta^{\star}}
\newcommand{\zs}{z^{\star}}
\newcommand{\Zs}{Z^{\star}}
\newcommand{\thetas}{\theta^{\star}}
\newcommand{\pis}{\pi^{\star}}
\newcommand{\RK}{\mathds{R}_K\left(\bzs,\bz\right)}
\newcommand{\NL}[2]{\mathds{N}_{(L_{#1}+1)\times (L_{#2}+1)}^{\left({#1},{#2}\right)}\left(\bTs,\bT\right)}
\newcommand{\NLk}{\NL{k}{k'}}
\newcommand{\mus}{\mu^{\star}}
\renewcommand{\muh}{\widehat{\mu}}
\newcommand{\bTheta}{\boldsymbol{\Theta}}
\newcommand{\bpis}{\bpi^\star}
\newcommand{\Fntz}{F_n\left(\btheta,\bT,\bz\right)}
\newcommand{\Gntz}{g_n\left(\btheta,\bT,\bz\right)}
\newcommand{\Lambdat}{\widetilde{\Lambda}_n}
\newcommand{\zsik}{\zs_{ik}}
\newcommand{\KL}[2]{\text{KL}\left(#1,#2\right)}
\newcommand{\bGamma}{\boldsymbol{\Gamma}}
\newcommand{\pih}[1]{\widehat{\bpi}\left(#1\right)}
\newcommand{\pisk}{\pis_k}
\begin{document}

\begin{frontmatter}

\title{Mixture of segmentation for heterogeneous functional data }
%\title{A sample article title with some additional note\thanksref{T1}}
\runtitle{Mixture of segmentation}

\begin{aug}

\author[A]{\fnms{Vincent} \snm{Brault}\ead[label=e1]{vincent.brault@univ-grenoble-alpes.fr}},
\author[B]{\fnms{\'{E}milie} \snm{Devijver}\ead[label=e2]{emilie.devivjer@univ-grenoble-alpes.fr}}
\and
\author[C]{\fnms{Charlotte} \snm{Laclau}\ead[label=e3]{charlotte.laclau@telecom-paris.fr}}

%%% Les affiliations buguent donc si ça ne compile pas, mettre en commentaire, compiler, décommenter et recompiler
%%% Je les enlève pour l'instant
\address[A]{Univ. Grenoble Alpes, CNRS, Grenoble INP\footnote[1]{Institute of Engineering Univ. Grenoble Alpes}, LJK, 38000 Grenoble, France, \printead{e1}}
\address[B]{CNRS, Univ. Grenoble Alpes, Grenoble INP\footnotemark[1], LIG, 38000 Grenoble, France, \printead{e2}}
\address[C]{Télécom Paris, Institut Polytechnique de Paris, \printead{e3}}
\end{aug}

\begin{abstract}
In this paper, we consider functional data with heterogeneity in time and population. 
We propose a mixture model with segmentation of time to represent this heterogeneity while keeping the functional structure. 
The maximum likelihood estimator is considered and proved to be identifiable and consistent. 
In practice, an EM algorithm is used, combined with dynamic programming for the maximization step, to approximate the maximum likelihood estimator. 
The method is illustrated on a simulated dataset and used on a real dataset of electricity consumption. 
\end{abstract}

\begin{keyword}
\kwd{Mixture model}
\kwd{Segmentation}
\kwd{Functional Data}
\kwd{Consistency}
\end{keyword}

\end{frontmatter}
%%%%%%%%%%%%%%%%%%%%%%%%%%%%%%%%%%%%%%%%%%%%%%
%% Please use \tableofcontents for articles %%
%% with 50 pages and more                   %%
%%%%%%%%%%%%%%%%%%%%%%%%%%%%%%%%%%%%%%%%%%%%%%
\tableofcontents

\section{Introduction}

%\paragraph*{Functional data analysis (FDA)}
Functional Data Analysis (FDA) deals with the theory and the exploration of data observed over a finite discrete grid and expressed as curves (or mathematical functions) varying over some continuum such as time. This type of data is commonly encountered in many fields, including economy \citep{bugni2009}, computational biology \citep{giacofci2013} or environmental sciences \citep{bouveyron:21}, to name a few.
\Cor{For an in-depth review of techniques and applications, we refer the interested readers to recent surveys \citep{Li2022,wang2016} and more exhaustive books \citep{FerratyView2006,ramsay2005,kokoszka2017}}.  
%Functional data are encountered in many fields, illustrated on many example in the books 
In many of these applications, such as \Cor{electricity load \citep{Fontana, devijver2020, Maturo2023}}, used for illustration here, we observe multiple curves corresponding to several individuals over a given time interval. As a result, one can expect a high heterogeneity of the data, both at the level of the studied individuals, that may correspond to different behavior or consumer profiles, but also on the time dimension where changes in power consumption regimes are likely to occur over one year for instance. 
%are studied over time.
To consider a parametric model, homogeneous data is required, both at population and time levels. In this paper, we propose to split the considered heterogeneous data into homogeneous clusters of individual curves, each of them being segmented over time into homogeneous regimes. To this end, we consider a mixture of segmentation over the projection of the curves onto \Cor{a wavelet basis, which retains a temporal aspect that is coherent with the segmentation}. Figure \ref{fig:motivation} serves to illustrate this objective. The top row represents the initial functional data consisting of 100 individuals (curves) observed over 50 days. The following rows allow us to visualize on the one hand the decomposition of the population into clusters (here 3 clusters - red yellow, purple), and on the other hand, within each cluster the segmentation obtained on the time dimension. Note that, in our case we allow each cluster to have a different segmentation, leading to a more flexible model. \Cor{In this example, we visualize the segmentation on the projection on a wavelet basis into $3$ dimensions.}

\begin{figure}[!htpb]
    \centering
    \includegraphics[width=\linewidth]{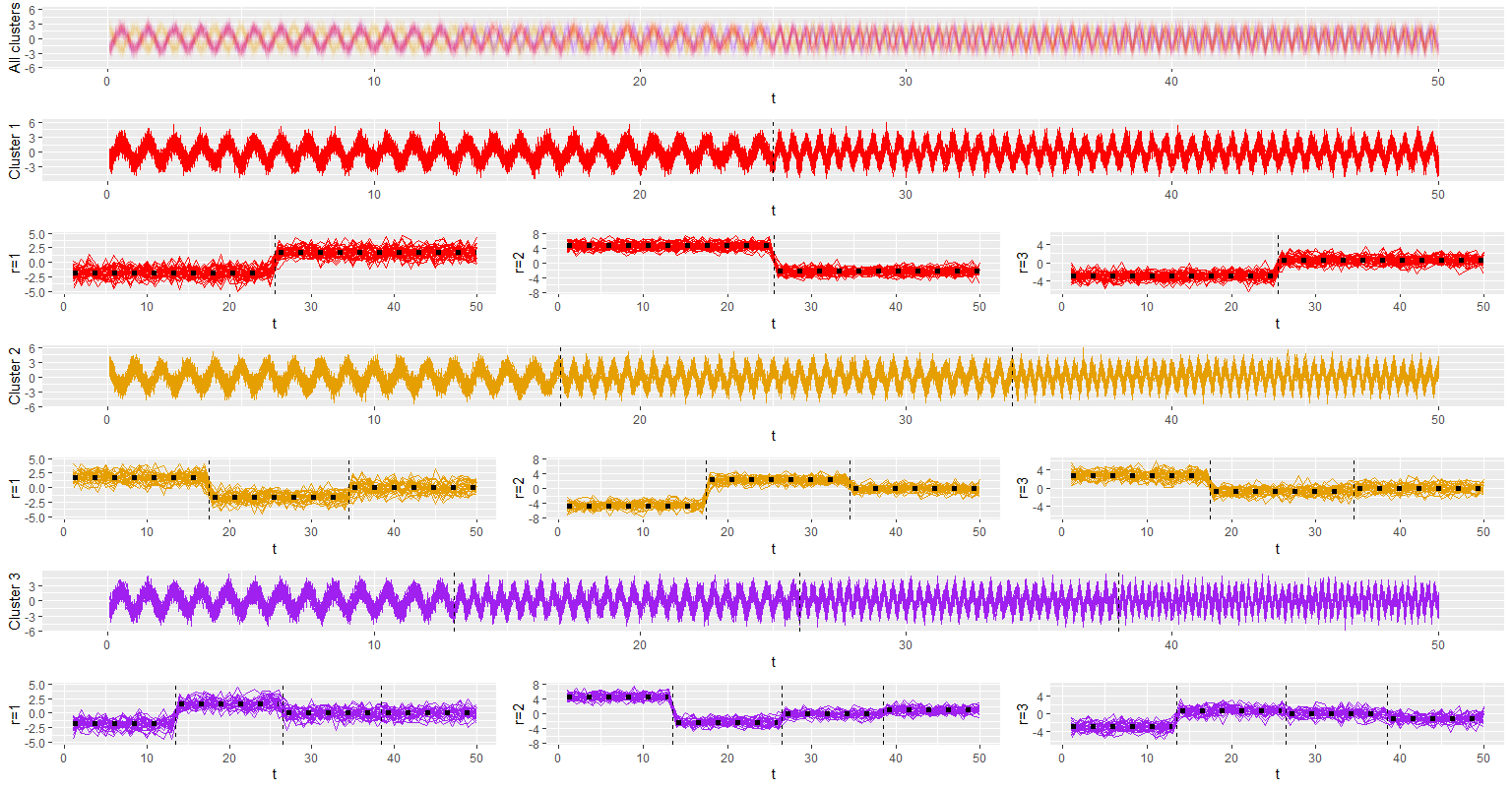}
    \caption{Motivation example. The top row represents the initial functional data consisting of 100 individuals (curves) observed every half-hour for 50 days. The following rows allow to visualize 1/ the decomposition of the population into clusters (here 3 clusters - red yellow, purple), 2/ within each cluster the segmentation obtained on the time dimension, 3/ the projection on a wavelet basis at several levels $r\Cor{\in\{1,2,p=3\}}$.}
    \label{fig:motivation}
\end{figure}

Model-based clustering approaches for functional data have been extensively studied in the literature (\cite{james2003_sparsefunc,Liu2009,bouveyron2011,jjacques2013,jjacques2014, devijver2017}). For the particular case of heterogeneous data that interests us in this article, one can broadly differentiate between methods that perform simultaneously clustering and segmentation (\cite{alon2003,Hebrail_2010,same2011,Same_2012,chamroukhi_2016}) and co-clustering based methods (\cite{bouveyron:2017, bouveyron:21, bouveyron:2021_2, GALVANI2021107219}). We provide further details on these two families of approaches hereafter and position our contributions with respect to the existing state of the art.
%Biblio : segmentation rapide, 2020, \url{https://arxiv.org/abs/2002.06633}
%Model-based clustering of functional data have been extensively studied over the past decade \citep{jjacques2014}. 

%\paragraph*{Model-based clustering for functional data}
%\CL{Peut etre introduire ici les différentes méthodes de projections. Pas la peine de beaucoup détailler le contenu des ref.}\ED{Je ne sios pas sure que lister les différentes projections ce soit la peine - et en fait je serai bien embetée de le faire pour être honnête. Je vais mettre de la motivation pour les ondelettes dans la partie modele}
%To add : \cite{gaffney99}; \cite{james2003_sparsefunc};\cite{Liu2009};  \cite{jjacques2013}; 
%\cite{devijver2017}; \cite{bouveyron2011};
%\cite{jjacques2014} (survey on clustering for functional data).

%\subsection{Related works}
\paragraph*{Clustering and segmentation}

%\CL{Je ne sais pas ou le mettre Same2011 pour le moment - je dirais ici parce que le coté changement de regime c'est de la segmentation quand meme}

%Mixture of hidden logistic process regressions: 
\cite{same2011} proposed to deal with heterogeneous time series by integrating the notion of change of regimes within a mixture of hidden logistic process regressions. The model considers two latent variables, one for the mixture component and one for the segmentation. 
Model selection is done through an adapted BIC criterion. However, while attempting to consider changes of regime, this approach fails to account for the ordering of observations, a key feature when dealing with functional data.  
%\ED{mais je ne comprneds pas comment ils assurent que les points voisins sont dans le même segment... }\CL{J'ai regardé, en effet il n'y a aucune contrainte de ce type... Abracadabra}
\cite{Same_2012} extended this model for online segmentation of time series.
In an effort to account for these potential changes of regimes, another family of mixture models, namely the mixture of piecewise regression, has been proposed. 
\cite{Hebrail_2010} first define this notion of piecewise regression to analyze temporal data, by proposing a distance-based model that simultaneously performs clustering on the set of functional observations (through a Kmeans-like algorithm) and segmentation (in the form of piecewise constant function summarizing) within each of the obtained cluster. This work was further generalized to a more flexible probabilistic framework by \cite{chamroukhi_2016}, who designed a model based on a mixture of piecewise regression densities. The piecewise regression is modeled by a segmentation of polynomial functions, as a generalization of spline basis where knots have to be fixed. However, this sets a particular form within each segment. %The clustering is done through an hard-assignment, within a CEM algorithm. 

%Mixture of hidden Markov regression models.
%To add \cite{alon2003}: 

%\CL{Les deux paragraphes précédents c'est que de l'univarié - donc le côté multivarié peut etre la transition avec les versions co-clustering} 

\paragraph*{Model-based co-clustering for FDA}
%In \cite{chamroukhi_2016} il continue un peu le même travail, avec le code dans son package \verb?flamingos?
\cite{bouveyron:2017} proposed a co-clustering model to analyze multivariate functional data. They applied this model to analyze electricity consumption curves and found that due to the nature of the temporal data, the clustering over time points is close to a segmentation over time.
\cite{bouveyron:21} extend this method to multivariate time series (with several time series for each observation and each timepoint), using a sparse representation over principal components. In \cite{bouveyron:2021_2}, authors extend this co-clustering approach using a shape invariant model, allowing for translation in time, and translation and scaling in mean.
\cite{GALVANI2021107219} propose another bi-clustering algorithm for functional data while considering a potential misalignment through translation. While co-clustering based approach have proven efficient in this context, the clustering obtained on the time dimension do not account for the ordering of the observation.

%Machné, R., Murray, D.B. \& Stadler, P.F. Similarity-Based Segmentation of Multi-Dimensional Signals. Sci Rep 7, 12355 (2017). https://doi.org/10.1038/s41598-017-12401-8

%\CL{Selling points for us : multivariate + strong theoretical guarantees (completely absent from all aforementioned papers) + application on real (for real) world data.}
%\CL{Je n'ai pas vu d'autre papier utilisé la version Lasso pour la maximisation. Tous les papiers clustering + segmentation font de la prog. dynamique}

\paragraph{Contributions and organization of the paper}
Our contribution is threefold, we propose (1) a method to study multivariate functional data, decomposing the population into homogeneous clusters and the time into homogeneous segments, where we ensure coherence on the time order; (2) we then focus on the theoretical study of the model (identifiability) and the estimation of the parameters (consistency), which is completely missing from all aforementioned related articles; \Cor{(3) we perform extension simulation experiments to study the behavior of our approach.}(4) finally, we study a real-world electricity consumption data set to illustrate the benefits of our method.

The paper is organized as follows. 
In Section \ref{sec:framework} the modeling framework is introduced together with the necessary notations. The identifiability of the model is obtained. Details about the estimation procedure are provided in Section \ref{sec:estimation}. The maximum likelihood estimator is proposed, approximated by an EM algorithm. The maximization step is solved by a dynamical programming. The consistency of the estimator is provided. The finite-sample performance of the proposed estimation method is investigated in Section \ref{sec:simu}. The methodology is finally used to analyse electricity consumption in Section \ref{sec:real_data}.  The paper concludes by some discussion in Section \ref{sec:conc}. The code is publicly available at \url{https://github.com/laclauc/MixtSegmentation}. All proofs are given in the Appendix. 

\section{The model and its identifiability}\label{sec:framework}

\paragraph{Notations}
\Cor{Random variables are indicated by uppercase letters and observations by lowercase letters. Then for matrices and vectors we use bold type (such as $\mathbf{A}$, $\mathbf{B}$) while scalars are denoted without the bold formatting (such as $A$). To indicate their dimension, we use notations like $\mathbf{A}\in\mathbb{R}^n$ or $\mathbf{B}\in\mathbb{R}^{n\times d}$. The i(,j)-th element of $\mathbf{A}$ is written as $\mathbf{A}_i$ or $\mathbf{B}_{ij}$}. 
In the following, we suppose one observes multivariate individual  curves ${X}_1(t), \ldots, {X}_n(t)$ on discrete timepoints $t_1,\ldots, t_d$. 
First, we introduce the various elements of the modeling framework, and provide the identifiability of the model. The proof of identifiability can be found in Appendix \ref{App:identifiability}.

\subsection{A multivariate functional model with segments in time and clusters in  population}
%Up to a scaling in time, without loss of generality, we assume that every individual  curve is observed on the interval $[0,1]$. 
We observe multivariate individual curves $({X}_{ih}(t_j))_{1\leq i \leq n, 1\leq j \leq d, 1\leq h \leq H}$ of dimension $H$ over $d$ timepoints  and within a population of size $n$.
The heterogeneous population is studied through a mixture model of $K$ clusters, encoded indifferently in its binary form, $z_{ik}=1$ if and only if the curve $i$ belongs to the cluster $k$, and its vector form, $z_{i}=k$ if and only if the curve $i$ belongs to the cluster $k$, for $1 \leq k \leq K$ and $1\leq i \leq n$. Each observation belongs to the cluster $k\in\{1,\ldots, K\}$ with probability $\pi_k \in [0,1]$.
The heterogeneity in time is represented through $L_k +1$ segments $(I_{k \ell})_{ 0\leq \ell \leq L_K}$: if $z_{ik}=1$ and $j \in I_{k \ell}$, encoded by $w_{j\ell} = 1$,
\begin{align}
    \label{model_fda}
    {X}_{ih}(t_j) = f_{k \ell h}(t_j) + \eta_{ijh}
\end{align}
with $\eta_{ijh}$ corresponds to some random noise, more details being given in Section \ref{sec:proj}.
Usually in segmentation, we assume that the signal is constant. Here, we would like to emphasize some coherence in time \Cor{given by the underlying function $f_{k\ell h}$}, but not necessarily through a strong assumption as constant. Then, we propose to decompose our signal into several time periods that are meaningful in practice (in hours, in days, in weeks depending on the application), and to have the same function $f_{k \ell h}$ within the considered interval, through the same segment. 

The modeling assumption is equivalent to a main function $f_{k \ell h}$ for the $h$th component, for individuals belonging to the cluster $k$, and for a timepoint in the $\ell$th segment. This means that within a segment and a cluster, there is a random variation (seen as a noise) independent and identically distributed over each component of the multivariate curve. 
%In the next section, we propose to project each curve observed on discrete timepoints onto a functional basis to keep the time aspect while doing statistics. 

\subsection{Projection onto a functional basis and matrix-variate model}\label{sec:proj}

We denote $(\boldsymbol{Y}_{ij}) \in \mathbb{R}^{p}$ the  coefficient decomposition vectors of the  component $j \in \{1,\ldots, d\}$ onto the functional basis, and the individual $i \in \{1,\ldots, n\}$, and the orthonormal characterization leads to, for the level $M$,
$$(\mathbf{X}_{i.}(t_j))_{1\leq j \leq d}= \Pi \boldsymbol{Y}_{ij} ;$$
where $\Pi$ is a matrix defined by  the functional basis of size $M$. \Cor{We focus on the wavelet family, among which one can choose between several wavelets, such as Haar, Daubechies to cite a few. \Coragain{Selecting an appropriate basis presents inherent challenges, although our paper doesn't delve extensively into this aspect. Rather, our emphasis lies in investigating the influence of a complex model on coefficients to capture a range of variations. Our approach prioritizes a nuanced understanding over exhaustive testing of various bases.
The localized assumption inherent in wavelets aligns seamlessly with the requirement for constancy in segmentation. The intentional choice of the Haar basis stems from its alignment with our research goals, demonstrated effectiveness in experiments, and its ability to maintain simplicity and interpretability.}%\footnote{Selecting an appropriate basis presents inherent challenges, but our paper doesn't delve extensively into this aspect.}. 
Wavelets are particularly well suited for the segmentation, as they are ordered and keep a temporal aspect. In Figure \ref{fig:wavelet}, the Haar basis employed in the experimental section is depicted, alongside reconstructions of coefficients at different levels on a simulated curve. Every coefficient carries meaningful information, showcasing the capacity of wavelets to broaden the spectrum of available choices.}
\begin{figure}[t]
   \includegraphics[scale=0.25]{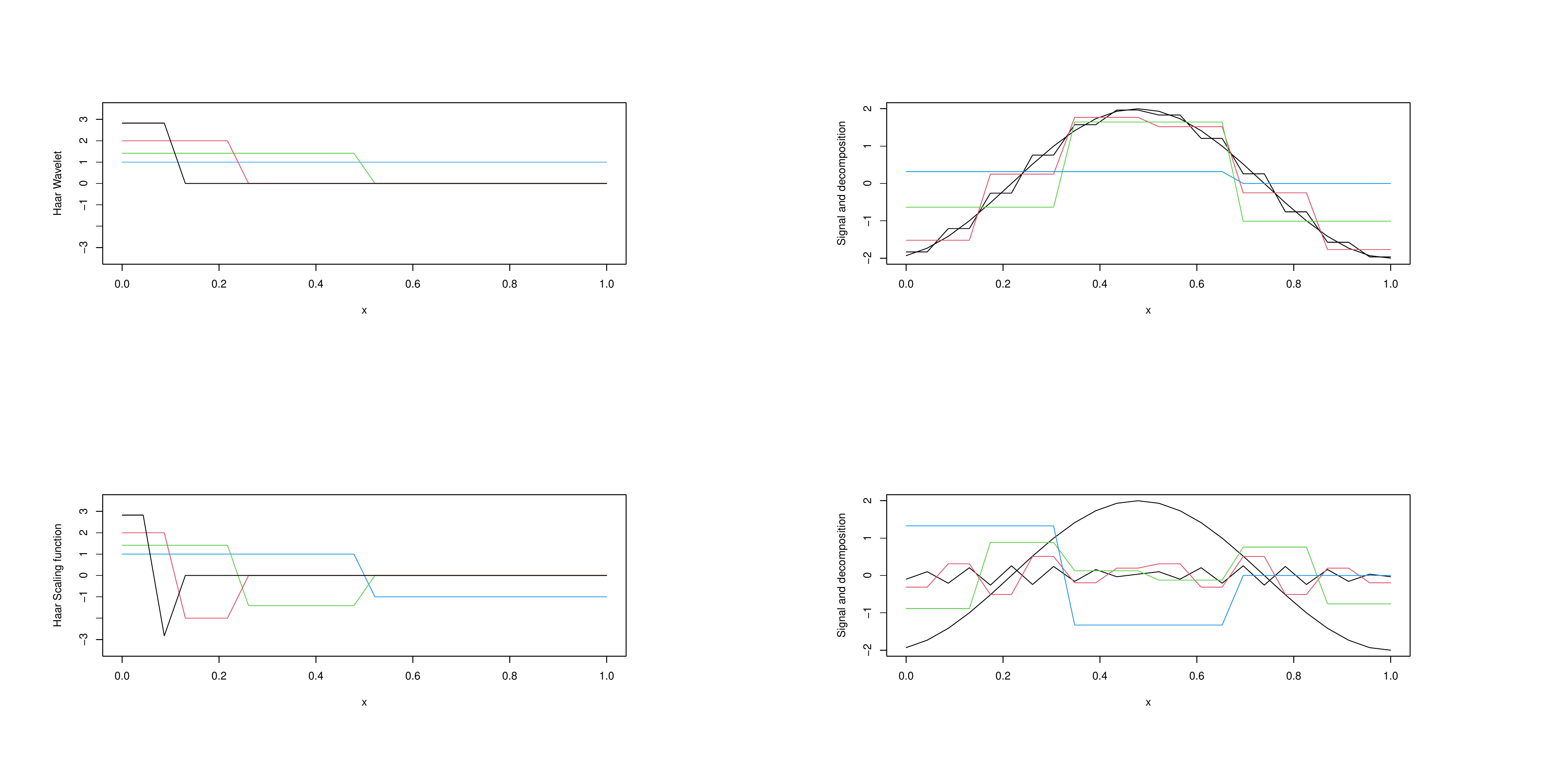}
   \caption{Illustration of the Haar basis. Top left: the Haar wavelet  and its scaled version. Bottom left: the Haar scaling function and its scaled version. Top right: reconstruction of the Haar coefficients at several levels. We focus in the analysis on the level 3, displayed in green. Bottom right: the residuals of the decomposition. } \label{fig:wavelet}
\end{figure}
We consider the wavelet coefficient dataset $(\mathbf{Y}_{i})_{1\leq i \leq n} = (\mathbf Y_{i.})_{1\leq i \leq n} \in (\mathbb{R}^{d \times p})^n$, which defines $n$ observations whose probability distribution is modeled by the following finite matrix-variate Gaussian mixture of segmentation model. 
As mentioned previously, the heterogeneous population is represented by $K$ clusters. For the cluster $k \in \{1,\ldots, K\}$, the heterogeneity in time is described by $L_k+1$ segments, defined by $\Lk$ break-points $T_{k0}<T_{k1}<\cdots<T_{k\Lk}<T_{k,\Lk+1}$. 
Then, for an observation $i\in \{1,\ldots,n\}$  in the cluster $k\in \{1,\ldots,K\}$, 
%for  $\Lk+1$ means $\left(\bmu_{k\ell}\right)_{1\leq \ell\leq \Lk+1}\in\setR^{\left(\Lk+1\right)\times p}$ and $\Lk+1$ variances $\left(\bsigma_{k\ell}\right)_{1\leq \ell\leq \Lk+1}\in\left(\setR_{\star}^+\right)^{\left(\Lk+1\right)\times p}$ such that for all $k\in\left\{1,\ldots,K\right\}$, for all $i$ in the cluster $k$, 
 for  $\ell\in\left\{0,\ldots,\Lk\right\}$ such that  $j\in\left\{T_{k, \ell}+1,T_{k, \ell}+2,\ldots,T_{k,\ell+1}\right\}$, we have:
\begin{align}
[\mathbf{Y}_{i}]_{j.} | (Z_{ik} = 1, W_{j\ell} = 1) = \bmu_{k \ell} + \varepsilon_{ij}
\label{model_multiv}
\end{align}
with $\bmu_{k \ell}\in\mathbb{R}^{\Cor{p}}$, $\varepsilon_{ij} \sim \mathcal{N}_p\left(0, \Sigma_{k\ell}\right)$ where $\Sigma_{k\ell}$ is diagonal with the values $\left(\sigma_{k \ell r}\right)_{1\leq r \leq p}$. 
%The equivalence between segments in \eqref{model_fda} and \eqref{model_multiv} is explained by the choice of wavelet coefficients, that keep the sorting of time. 
%\ED{discuter les hypotheses d'independence}
\subsection{Identifiability of the  model}
In this section, we first establish the identifiability of the multivariate model \eqref{model_multiv}. 

\begin{theo}[Identifiability of \eqref{model_multiv}]\label{Th:Identifiability} Assume that:
\begin{enumerate}[label=(ID.\arabic*)]
    \item \label{ID:1} For every $k\in\{1,\ldots,K\}$ and $\ell\in\{0,\ldots,\Lk\}$, there exists at least one $r\in\{1,\ldots,p\}$ such that $\sigma_{k \ell r}\neq\sigma_{k,\ell+1,r}$ or $\mu_{k \ell r}\neq\mu_{k,\ell+1,r}$.
    \item \label{ID:2} We have:
    \[p \geq \underset{k\in\{1,\ldots,K\}}{\max}\Lk+1.\]
     \item \label{ID:3} If there exists $k\neq k'$ such that $\Lk=L_{k'}$ then:
    \begin{itemize}
        \item there exists $\ell\in\{0,\ldots,\Lk\}$ such that $T_{k \ell}\neq T_{k',\ell}$,
        \item or there exists $\ell\in\{0,\ldots,\Lk\}$ and $r\in\{1,\ldots,p\}$ such that:
    \[\sigma_{k \ell r}\neq\sigma_{k',\ell,r} \text{ or }\mu_{k \ell r}\neq\mu_{k',\ell,r}.\]
    \end{itemize}
    \item \label{ID.4} For every $k\in\{1,\ldots,K\}$, $\pi_k>0$.
\end{enumerate}
Under these assumptions, the model \eqref{model_multiv} is identifiable.
\end{theo}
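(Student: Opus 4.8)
The plan is to recognize the law of $\bY_i$ under \eqref{model_multiv} as a finite mixture of Gaussian distributions on $\setR^{d\times p}$ and to reduce the claim to two ingredients: the classical identifiability of finite Gaussian mixtures, and an injective reading of the segmentation off each mixture component. Conditionally on $Z_{ik}=1$ the rows $[\bY_i]_{j\cdot}$ are independent with $[\bY_i]_{j\cdot}\sim\mathcal N_p(\bmu_{k\ell_k(j)},\Sigma_{k\ell_k(j)})$, where $\ell_k(j)$ is the segment of cluster $k$ containing $j$; hence component $k$ is the $dp$-dimensional Gaussian $\mathcal N_{dp}(\mathbf m_k,\mathbf D_k)$, with $\mathbf m_k$ the stacking of the $\bmu_{k\ell_k(j)}$ and $\mathbf D_k$ the block-diagonal stacking of the $\Sigma_{k\ell_k(j)}$, so that the density of $\bY_i$ equals $\sum_{k=1}^K \pi_k\,\mathcal N_{dp}(\mathbf m_k,\mathbf D_k)$. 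Assumption \ref{ID.4} ensures every component genuinely appears, so no cluster is lost, and identifiability is meant up to a relabelling of the clusters, the unavoidable ambiguity of any mixture.

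First I would prove that the map sending the parameters of a single cluster (its break-points together with its segmentwise means and variances) to the pair $(\mathbf m_k,\mathbf D_k)$ is injective. Knowing $(\mathbf m_k,\mathbf D_k)$ is the same as knowing, for every coordinate $r$ and index $j$, the pair $(\mu_{k\ell_k(j)r},\sigma_{k\ell_k(j)r})$; the break-points of cluster $k$ are then exactly the indices $j$ at which this coordinatewise profile changes value. Assumption \ref{ID:1} makes this reading exact: since consecutive segments differ in at least one coordinate, every true break-point produces a visible jump, while no jump can occur strictly inside a segment, so the recovered jump set coincides with $\{T_{k1},\dots,T_{kL_k}\}$ and the segment parameters follow. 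Assumption \ref{ID:2}, giving at least as many coordinates as segments, enters here to furnish the dimensional room needed to disentangle the $L_k+1$ segment laws simultaneously, so the recovery is unambiguous.

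Next I would establish that, within a fixed parametrization, distinct clusters yield distinct component Gaussians. If $L_k\neq L_{k'}$ the two coordinatewise profiles have different numbers of jump locations (each break-point being witnessed by some coordinate, by \ref{ID:1}), so $(\mathbf m_k,\mathbf D_k)\neq(\mathbf m_{k'},\mathbf D_{k'})$. If $L_k=L_{k'}$, Assumption \ref{ID:3} forces either a break-point mismatch or a parameter mismatch; in the first case the earlier break-point makes one profile switch segments before the other, and \ref{ID:1} guarantees the two segment laws differ there, while in the second case the profiles differ directly at the offending coordinate. In both cases the components are distinct. With pairwise distinct components on both sides, I would then invoke the identifiability of finite mixtures of multivariate Gaussians to conclude that two parametrizations producing the same density of $\bY_i$ must share the same $K$, the same weights, and the same component Gaussians up to a permutation; the injectivity of the previous step finally transfers this equality back to the break-points and the segmentwise parameters, proving \eqref{model_multiv} identifiable.

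The main obstacle is the cross-cluster distinctness in the tie case $L_k=L_{k'}$: one must check that a difference in break-point locations, and not only in segment parameters, always propagates into a genuine difference between the two $dp$-dimensional Gaussians, which is where the interplay of \ref{ID:1} and \ref{ID:3} is delicate. A secondary point is whether to cite the identifiability of Gaussian mixtures as a black box or to reprove it here through the linear independence (analyticity) of distinct Gaussian densities; the latter is the technical heart if a self-contained argument is preferred.
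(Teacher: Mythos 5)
Your proposal is correct and follows essentially the same route as the paper: reduce to per-cluster identifiability of the segmentation (the paper's Lemma~\ref{lem:Identifiability}, proved via characteristic functions, establishes exactly your injectivity step, namely that the cluster law determines the per-timepoint, per-coordinate means and variances and hence, by \ref{ID:1}, the breakpoints), then show via \ref{ID:1} and \ref{ID:3} that distinct clusters yield distinct Gaussian components, and conclude by linear independence of distinct Gaussian densities (the paper's ``free family'' argument citing \cite{droesbeke2013modeles}). The only cosmetic difference is that you recover the breakpoints globally as the jump set of the parameter profile, whereas the paper's lemma recovers them by a sequential induction along the segments.
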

The breakpoints models of each cluster are identifiable by the\linebreak assumption~(ID.1) and~(ID.2) (as introduced in \cite{lebarbier2005detecting,harchaoui2010multiple,brault2017efficient}). The  assumption~(ID.3) allows to differentiate the clusters (see \cite{droesbeke2013modeles}). 

Mixture models are known to be identifiable up to a label switching: two partitions can be the same while the cluster labels being reversed (see \cite{stephens2000dealing}, \cite[Chapter 10]{mengersen2011mixtures} and \cite{robert2021}). In this model, a natural order is to choose the labeling of each cluster such that
\[k\leq k'\Leftrightarrow L_k\leq L_{k'}.\]
This alleviates the problem of label switching; and it can be completely removed when the $(L_k)_{1\leq k\leq K}$ are all different.

\section{Estimation}\label{sec:estimation}
In this paper, we assume that $K$ the number of clusters is known, as well as the number of segment within each cluster $(L_k)_{1\leq  k \leq K}$.
\subsection{Maximum Likelihood Estimation}
Using the model \eqref{model_multiv}, under identifiability, by noting $\bT$ the set of the break points and $\btheta = ((\mu_{k\ell r},\sigma_{k\ell r})_{1\leq k \leq K,  0\leq \ell \leq L_K, 1\leq r \leq p}, (\pi_k)_{1\leq k \leq K})$ the set of parameters, we obtain the following likelihood: 
\[\text{lik}\left(\bY;K,\bT,\btheta\right)=\prod_{i=1}^n\sum_{k=1}^K\pi_k\prod_{\ell=0}^{\Lk}\prod_{j=T_{k \ell}+1}^{T_{k,\ell+1}}\prod_{r=1}^p\left[\frac{1}{\sqrt{2\pi\sigma_{k \ell r}}}e^{-\frac{1}{2\sigma_{k \ell r}}\left(Y_{ij r}-\mu_{k \ell r}\right)^2}\right].\]
The mixture model leads to  the product over individuals $i \in \{1,\ldots,n\}$ and the sum over the clusters $k \in \{1,\ldots,K\}$ while  the segmentation is related to the product over each segment $\ell \in \{0,\ldots, L_k\}$ and timepoints indexed by $j \in \{T_{k\ell}+1,\ldots, T_{k,\ell+1}\}$, for the cluster $k \in \{1,\ldots, K\}$. In addition to the parameters $K$, $\bT$ and $\btheta$, we search to estimate the partition $\bz$.  

We denote $\hat{\btheta}$ the maximum likelihood estimator. 
\subsection{EM algorithm}
Considering a mixture model, we use the \textit{Expectation Maximisation} (EM) algorithm~(\cite{dempster1977maximum}) to estimate the parameters. The principle is, for the step $(c)$, to fix a parameter $\bthetac$ and break points $\bT$, and to maximize the following function:
\[(\btheta,\bT)\mapsto Q\left(\btheta,\bT\left|\bthetac,\bTc\right.\right)=\Espc{\bz\left|\bY;\bthetac,\bTc \right. }{\log p\left(\bY,\bz;\btheta,\bT\right)}\]
where $p(\bY,\bz;\btheta,\bT)$ is the joint distribution of $\bY$ and $\bz$ for fixed parameters $\btheta$ and break points $\bT$. To do so, we alternate between  two steps:
\begin{itemize}
    \item[E-step:] For $i\in\{1,\ldots,n\}, k\in\{1,\ldots,K\}$ compute  the values $\sikc$
    defined by:
\[\sikc=\Prob{\zik=1\left|\bY;\bthetac,\bTc\right.};\]
    \item[M-step:] Maximization with respect to $\btheta$ and $\bT$ the function \linebreak$(\btheta,\bT)\mapsto Q\left(\btheta,\bT\left|\bthetac,\bTc\right.\right)$.
\end{itemize}

For $i\in\{1,\ldots,n\}, k\in\{1,\ldots,K\}$, the computation of $\sikc$ is explicit using the following proposition.

\begin{prop}[E-Step]\label{prop:stepE}
For every $i\in\{1,\ldots,n\}$ and $k\in\{1,\ldots,K\}$ we have:
\begin{align*}
    \sikc=\frac{\pic_{k}\prod_{\ell=0}^{\Lc_{k}}\prod_{j=\Tc_{k \ell}+1}^{\Tc_{k,\ell+1}}\prod_{r=1}^p\left[\frac{1}{\sqrt{2\pi\sigmac_{k \ell r}}}e^{-\frac{1}{2\sigmac_{k \ell r}}\left(Y_{i\ell r}-\muc_{k \ell r}\right)^2}\right]}{\sum_{k'=1}^K\pic_{k'}\prod_{\ell=0}^{\Lc_{k'}}\prod_{j=\Tc_{k'\ell}+1}^{\Tc_{k',\ell+1}}\prod_{r=1}^p\left[\frac{1}{\sqrt{2\pi\sigmac_{k'\ell r}}}e^{-\frac{1}{2\sigmac_{k'\ell r}}\left(Y_{i\ell r}-\muc_{k'\ell r}\right)^2}\right]}.
\end{align*}
\end{prop}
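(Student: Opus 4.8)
The plan is to recognise $\sikc$ as a posterior cluster-membership probability and to evaluate it through Bayes' rule, exploiting the conditional independence baked into model~\eqref{model_multiv}. Since the curves are independent across individuals, conditioning the event $\{\zik=1\}$ on the whole dataset $\bY$ reduces to conditioning on the single observation $\bY_i$, so the contributions of the other curves factor out of both numerator and denominator and cancel. Writing the definition of conditional probability and marginalising the denominator over the $K$ mutually exclusive cluster events, I would obtain
\[
\sikc = \frac{\Prob{\zik=1}\,p\!\left(\bY_i\mid \zik=1\right)}{\sum_{k'=1}^K\Prob{Z_{ik'}=1}\,p\!\left(\bY_i\mid Z_{ik'}=1\right)},
\]
with all quantities evaluated at the current iterate $(\bthetac,\bTc)$.

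Next I would identify the two ingredients of this ratio. The prior is read directly from the mixture specification as $\Prob{\zik=1}=\pikc$. For the conditional density, the only step that genuinely uses the model structure, I would argue as follows: conditionally on $\{\zik=1\}$, each row $[\bY_i]_{j.}$ with $j$ in the $\ell$th segment (that is, $\Tc_{k\ell}+1\le j\le \Tc_{k,\ell+1}$) is by~\eqref{model_multiv} an independent $\mathcal{N}_p(\bmu_{k\ell},\Sigma_{k\ell})$ vector, and $\Sigma_{k\ell}$ is diagonal with entries $(\sigmac_{k\ell r})_{1\le r\le p}$. Independence of the noise across timepoints together with the diagonal covariance lets me factor $p(\bY_i\mid \zik=1)$ as the triple product over segments $\ell\in\{0,\dots,\Lc_k\}$, over timepoints $j$ in each segment, and over coordinates $r\in\{1,\dots,p\}$ of the univariate Gaussian densities, which is precisely the product displayed in the statement.

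Substituting these two expressions into the Bayes ratio and writing the analogous product for each competing cluster $k'$ in the denominator yields the announced formula. The only place where care is required is the conditional-independence bookkeeping in the second paragraph, namely checking that the factorisation over the triple index $(\ell,j,r)$ is legitimately justified by the i.i.d.\ noise assumption and the diagonality of $\Sigma_{k\ell}$; but no real obstacle arises, since once the product form of the within-cluster likelihood is in hand the result is a direct application of Bayes' rule to a Gaussian mixture.
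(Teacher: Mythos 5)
Your proposal is correct: the identification of $\sikc$ as a posterior via Bayes' rule, the reduction to the single observation $\bY_i$ by independence across individuals, and the factorization of $p(\bY_i \mid Z_{ik}=1)$ over segments, timepoints and coordinates (using the independence of the noise across $j$ and the diagonality of $\Sigma_{k\ell}$) is exactly the standard E-step computation that the paper relies on, and indeed the paper states this proposition without any explicit proof, treating it as immediate. Your argument fills in precisely the bookkeeping the authors left implicit, so there is nothing to flag beyond noting that the index $Y_{i\ell r}$ in the paper's displayed formula is a typo for $Y_{ijr}$, which your factorization over the triple index $(\ell,j,r)$ silently (and correctly) fixes.
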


For the maximization, the problem is more difficult due to the unknown segmentation over time. The next proposition explicit the formulae for the proportions $(\pi_k)_{1\leq k \leq K}$.
\begin{prop}[Proportion in the M-Step]\label{prop:stepM}
For every $k\in\{1,\ldots,K\}$ we have:
\[\pikc=\frac{\spluskc}{n}=\frac{\sum_{i=1}^n\sikc}{n}.\]
\end{prop}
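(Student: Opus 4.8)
The plan is to perform the standard constrained maximization of the $Q$ function over the proportion vector $\bpi$, exactly as in the M-step of a finite mixture. First I would expand the complete-data log-likelihood using the binary encoding of the partition. Since each $z_i$ selects exactly one cluster,
\[\log p\left(\bY,\bz;\btheta,\bT\right)=\sum_{i=1}^n\sum_{k=1}^K z_{ik}\left[\log\pik+\log f_k\left(\bY_i;\btheta,\bT\right)\right],\]
where $f_k$ is the within-cluster matrix-variate Gaussian density of observation $i$ under cluster $k$ (the product over segments and components appearing in the likelihood). Taking the conditional expectation $\Espc{\bz|\bY;\bthetac,\bTc}{\cdot}$ and using $\sikc=\Espc{\bz|\bY;\bthetac,\bTc}{\zik}=\Prob{\zik=1\left|\bY;\bthetac,\bTc\right.}$ replaces each indicator $z_{ik}$ by its responsibility $\sikc$, giving
\[Q\left(\btheta,\bT\left|\bthetac,\bTc\right.\right)=\sum_{i=1}^n\sum_{k=1}^K\sikc\log\pik+\sum_{i=1}^n\sum_{k=1}^K\sikc\log f_k\left(\bY_i;\btheta,\bT\right).\]

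The second step is to observe that $\bpi$ enters only through the first double sum, so the optimization over the proportions decouples from that over $(\bmu,\bsigma,\bT)$. Collecting the individual contributions, this term equals $\sum_{k=1}^K\spluskc\log\pik$ with $\spluskc=\sum_{i=1}^n\sikc$, a concave function of $\bpi$ to be maximized over the simplex $\{\pik\geq 0,\ \sum_{k=1}^K\pik=1\}$. I would introduce a Lagrange multiplier $\lambda$ for the equality constraint; stationarity gives $\spluskc/\pik=\lambda$, hence $\pikc=\spluskc/\lambda$.

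The final step fixes the multiplier. Summing the stationarity relation over $k$ and imposing $\sum_{k=1}^K\pikc=1$ yields $\lambda=\sum_{k=1}^K\spluskc$. The key identity is that the responsibilities form a probability distribution over clusters for each fixed $i$, namely $\sum_{k=1}^K\sikc=1$, which follows from $\sum_{k=1}^K z_{ik}=1$ and linearity of the conditional expectation. Therefore $\sum_{k=1}^K\spluskc=\sum_{i=1}^n\sum_{k=1}^K\sikc=n$, so $\lambda=n$ and $\pikc=\spluskc/n$, as claimed. There is no genuine obstacle in this argument: the only two points needing care are the decoupling of the $\bpi$-dependent part in the first step and the normalization identity $\sum_{k=1}^K\sikc=1$ that pins down $\lambda=n$.
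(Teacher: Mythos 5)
Your proof is correct and follows the approach the paper itself relies on: the paper gives no separate proof of this proposition (its appendix only treats identifiability and consistency), but its displayed decomposition of $\max Q$ in \eqref{Eq:maxQ} rests on exactly the decoupling you establish, namely that $\bpi$ enters $Q$ only through the term $\sum_{k=1}^K\spluskc\log\pik$. Your Lagrange-multiplier maximization over the simplex, with the normalization $\sum_{k=1}^K\sikc=1$ forcing $\lambda=n$, is the standard and complete way to conclude.
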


The other parameters $(\mu_{k\ell r},\sigma_{k\ell r})_{1\leq k \leq K, 1\leq \ell \leq L_K, 1\leq r \leq p}$ are given using the dynamic programming  \cite[see for example][]{bellman1957dynamic,kay1993fundamentals}.
We start by observing that
\begin{align}
\underset{\left(\btheta,\bT\right)}{\max}\;Q\left(\btheta,\bT\left|\bthetac,\bTc\right.\right)=&-\frac{1}{2}\underset{k=1}{\overset{K}{\sum}}\left[\underset{ T_{k.} \in \mathcal{T}}{\min}\underset{\ell=0}{\overset{\Lk}{\sum}}\Deltakc\left(T_{k \ell};T_{k,\ell+1}\right)\right]\nonumber\\
&-\frac{ndp}{2}\log\left(2\pi\right)+\underset{k=1}{\overset{K}{\sum}}\spluskc\log\pik,
\label{Eq:maxQ}
\end{align}
where $\mathcal T = \{0=T_{k0}<T_{k1}<\cdots<T_{k,\Lk+1}=d+1\}$
with for every $k\in\{1,\ldots,K\}$, $0\leq t_1<t_2\leq d$,
\begin{eqnarray}
&&\!\!\!\!\!\!\!\!\!\!\!\!\!\!\Deltakc\left(t_1;t_2\right)\nonumber\\
&=&\underset{\overset{\bmu \in \mathbb{R}^p,}{\underset{\bsigma\in (\mathbb{R}_+^*)^p}{}}}{\min}\underset{r=1}{\overset{p}{\sum}}\left[\spluskc(t_2-t_1)\log\left(\sigma_r\right)+\frac{1}{\sigma_r}\underset{j=t_1+1}{\overset{t_2}{\sum}}\underset{i=1}{\overset{n}{\sum}}\sikc\left(Y_{ijr}-\mu_r\right)^2\right].\label{Eq:Deltakc}
\end{eqnarray}
This optimization problem is explicitly solved in the following proposition, for each cluster independently. 
%As the decomposition~\eqref{Eq:maxQ} exists, we can use dynamic programming for each cluster independently. For this, we observe that the equation~\eqref{Eq:Deltakc} can be rewritten as:
\begin{prop}[Form of $\Deltakc$]For all $k\in\{1,\ldots,K\}$ and $0\leq t_1<t_2\leq d$, we have:
\begin{eqnarray*}
&&\!\!\!\!\!\!\!\!\!\!\!\!\!\!\Deltakc\left(t_1;t_2\right)\\
&=&\underset{r=1}{\overset{p}{\sum}}\left[\spluskc(t_2-t_1)\log\left(\sigmah_{k t_1 r}\right)+\frac{1}{\sigmah_{k t_1 r}}\underset{j=t_1+1}{\overset{t_2}{\sum}}\underset{i=1}{\overset{n}{\sum}}\sikc\left(Y_{ijr}-\muh_{k t_1 r}\right)^2\right]
\end{eqnarray*}
with
\begin{eqnarray*}
\overline{\mathbf{Y}}_{i,t_1:t_2,r} &=& \frac{1}{t_2-t_1}\sum_{j=t_1+1}^{t_2}Y_{ijr}\\
\muh_{k t_1 r}&=&\sum_{i=1}^n\frac{\sikc}{\spluskc} \overline{\mathbf{Y}}_{i,t_1:t_2,r}\\
\sigmah_{k t_1 r}&=&\sum_{i=1}^n\frac{\sikc}{\spluskc}. \frac{1}{t_2-t_1}\sum_{j=t_1+1}^{t_2}\left(Y_{ijr}-\overline{\mathbf{Y}}_{i,t_1:t_2,r}\right)^2.
\end{eqnarray*}
\end{prop}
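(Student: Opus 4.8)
The plan is to exploit the additive structure of the objective in \eqref{Eq:Deltakc}: for fixed $k$, the quantity to be minimised is a finite sum over $r\in\{1,\ldots,p\}$ whose $r$-th term depends only on the pair $(\mu_r,\sigma_r)$. Hence the joint minimisation over $\bmu\in\mathbb{R}^p$ and $\bsigma\in(\mathbb{R}_+^*)^p$ decouples into $p$ independent scalar problems, and it suffices to minimise, for each fixed $r$, the function $g(\mu_r,\sigma_r)=\spluskc(t_2-t_1)\log(\sigma_r)+\sigma_r^{-1}\sum_{j=t_1+1}^{t_2}\sum_{i=1}^n\sikc(Y_{ijr}-\mu_r)^2$ over $(\mu_r,\sigma_r)\in\mathbb{R}\times\mathbb{R}_+^*$. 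Throughout I would assume $\spluskc>0$ (a non-empty cluster) and note $t_2-t_1\geq 1$, so that the total weight $A:=\spluskc(t_2-t_1)$ is strictly positive.

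I would then profile out $\mu_r$ first. For any fixed $\sigma_r>0$, $g$ is a strictly convex quadratic in $\mu_r$ with positive leading coefficient $A/\sigma_r$, so its unique minimiser is the weighted mean $\mu_r^\star=\big(\sum_{i,j}\sikc Y_{ijr}\big)/\big(\sum_{i,j}\sikc\big)$, which notably does not depend on $\sigma_r$. The only real bookkeeping here is to collapse the double sums: using $\sum_{j=t_1+1}^{t_2}Y_{ijr}=(t_2-t_1)\,\Yibarr{t_1}{t_2}$ and $\sum_{i=1}^n\sikc=\spluskc$, the numerator becomes $(t_2-t_1)\sum_i\sikc\,\Yibarr{t_1}{t_2}$ and the denominator $A$, whence $\mu_r^\star=\sum_{i=1}^n(\sikc/\spluskc)\,\Yibarr{t_1}{t_2}=\muh_{k t_1 r}$, matching the stated expression.

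Substituting $\mu_r=\muh_{k t_1 r}$ reduces the problem to minimising the profiled one-dimensional map $h(\sigma_r)=A\log(\sigma_r)+S_r/\sigma_r$ with $S_r:=\sum_{j=t_1+1}^{t_2}\sum_{i=1}^n\sikc(Y_{ijr}-\muh_{k t_1 r})^2$. Since $h(\sigma_r)\to+\infty$ both as $\sigma_r\to0^+$ and as $\sigma_r\to+\infty$, and $h$ has the unique stationary point $\sigma_r^\star=S_r/A$ on $(0,\infty)$ with $h''(\sigma_r^\star)>0$, this stationary point is the global minimiser. Plugging $(\mu_r^\star,\sigma_r^\star)$ back into $g$ and summing over $r$ then yields the announced value formula, with $\sigmah_{k t_1 r}$ identified as $S_r/A$.

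The one step I would treat with care is matching this $S_r/A$ to the stated form of $\sigmah_{k t_1 r}$, which is written in terms of the per-individual segment means $\Yibarr{t_1}{t_2}$ rather than $\muh_{k t_1 r}$. Here I would invoke the Huygens (bias--variance) decomposition $\sum_{j=t_1+1}^{t_2}(Y_{ijr}-\muh_{k t_1 r})^2=\sum_{j=t_1+1}^{t_2}(Y_{ijr}-\Yibarr{t_1}{t_2})^2+(t_2-t_1)(\Yibarr{t_1}{t_2}-\muh_{k t_1 r})^2$ for each $i$, then weight by $\sikc$, sum over $i$, and divide by $A$. This produces the displayed within-individual term plus a between-individual term $\sum_i(\sikc/\spluskc)(\Yibarr{t_1}{t_2}-\muh_{k t_1 r})^2$. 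I expect the bulk of the verification to reside in correctly accounting for this decomposition and confirming which mean enters $\sigmah_{k t_1 r}$, since only the cluster mean $\muh_{k t_1 r}$ reproduces the true minimiser $S_r/A$ exactly.
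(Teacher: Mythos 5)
Your strategy is the natural one and it is carried out correctly: the objective in \eqref{Eq:Deltakc} decouples across $r$; for fixed $\sigma_r$ the quadratic in $\mu_r$ is minimised at the weighted mean, which after collapsing the double sum is exactly $\muh_{k t_1 r}$; and the profiled map $\sigma_r \mapsto A\log\sigma_r + S_r/\sigma_r$, with $A=\spluskc(t_2-t_1)$ and $S_r=\sum_{j=t_1+1}^{t_2}\sum_{i=1}^n \sikc (Y_{ijr}-\muh_{k t_1 r})^2$, has unique minimiser $\sigma_r^\star=S_r/A$ under the harmless non-degeneracy assumptions ($\spluskc>0$, $S_r>0$) that you state. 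The paper gives no proof of this proposition, so there is no argument of theirs to compare against; yours is the derivation one would expect, and the value at the optimum, $\sum_r\left[A\log(S_r/A)+A\right]$, matches the displayed formula once the correct $\sigma_r^\star$ is inserted.

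The point you raise in your last paragraph deserves to be stated more forcefully: it is not a verification detail but an actual inconsistency in the statement. By the Huygens decomposition,
\[
\sigma_r^\star=\frac{S_r}{A}=\sum_{i=1}^n\frac{\sikc}{\spluskc}\,\frac{1}{t_2-t_1}\sum_{j=t_1+1}^{t_2}\left(Y_{ijr}-\overline{\mathbf{Y}}_{i,t_1:t_2,r}\right)^2+\sum_{i=1}^n\frac{\sikc}{\spluskc}\left(\overline{\mathbf{Y}}_{i,t_1:t_2,r}-\muh_{k t_1 r}\right)^2,
\]
so the true minimiser equals the paper's $\sigmah_{k t_1 r}$ \emph{plus} a between-individual term that is strictly positive whenever the individual segment means $\overline{\mathbf{Y}}_{i,t_1:t_2,r}$ (with $\sikc>0$) are not all equal, which is the generic case. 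Since the profiled objective is strictly unimodal in $\sigma_r$, evaluating the displayed expression at the paper's within-individual $\sigmah_{k t_1 r}$ yields a value strictly larger than $\Deltakc(t_1;t_2)$. The proposition as printed is therefore correct only if $\sigmah_{k t_1 r}$ is understood as the full weighted variance about the cluster mean $\muh_{k t_1 r}$ (equivalently, if the between-individual term above is added to the stated definition). Your proof establishes exactly that corrected statement, and your identification of which mean must enter $\sigmah_{k t_1 r}$ is the right resolution of the discrepancy.
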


%\begin{rem}
%In the case where $\bsigma$ does not depend on groups $k$, segments $\ell$ and/or dimensions $r$, we propose the formula in Appendix~\ref{sec:app:deltakx}.
To optimize the computation time, we suggest to first compute all the means $\Yibarrc{t_1}{t_2}$ for $t_2>t_1$. In particular, in the case of $\bsigma$ depends only on the cluster $k$, we improve the complexity.
%\end{rem}

\begin{prop}[Complexity]\label{prop:complexity}
The complexity of the \textit{EM} algorithm with $N_{\text{algo}}$ iterations is $\mathcal{O}\left[pnd^2\max\left(d,KN_{\text{algo}}\right)\right]$.
\end{prop}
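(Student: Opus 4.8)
The plan is to separate the total running time into a single data-dependent precomputation phase and the cost of one EM iteration, the latter multiplied by $N_{\text{algo}}$. The target bound $\mathcal{O}\!\left[pnd^2\max(d,KN_{\text{algo}})\right]$ should then appear as the sum of two contributions, one of order $pnd^3$ coming from the precomputation and one of order $pnd^2KN_{\text{algo}}$ coming from the iterations, since a sum of two nonnegative terms is of the order of their maximum.

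First I would account for the preprocessing suggested just before the statement, namely the evaluation, over every candidate segment $[t_1+1,t_2]$, of the partial means $\Yibarr{t_1}{t_2}$ together with the partial sums of squares needed to form $\sigmah_{kt_1r}$. There are $\mathcal{O}(d^2)$ such intervals; for a fixed individual $i$ and coordinate $r$ the statistics of one interval are obtained by a pass of length $\mathcal{O}(d)$, and ranging over the $np$ pairs $(i,r)$ gives $\mathcal{O}(pnd^3)$. The key observation is that these quantities depend only on $\bY$ and not on the current parameters $\bthetac,\bTc$, so this phase is executed once, outside the iteration loop.

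Next I would bound one EM iteration. For the E-step, Proposition~\ref{prop:stepE} expresses each $\sikc$ as a product over the $d$ timepoints and the $p$ coordinates, so evaluating it for all $n$ individuals and $K$ clusters costs $\mathcal{O}(nKdp)$. For the M-step, the proportions from Proposition~\ref{prop:stepM} cost only $\mathcal{O}(nK)$, and the segmentation is found by dynamic programming on the cost $\Deltakc$ of \eqref{Eq:Deltakc}. Using the closed form of $\Deltakc$ together with the precomputed partial statistics, each of the $\mathcal{O}(d^2)$ entries $\Deltakc(t_1;t_2)$ is assembled from weighted sums over the $n$ individuals and $p$ coordinates in $\mathcal{O}(np)$ operations, so filling the whole table for all $K$ clusters costs $\mathcal{O}(Knpd^2)$. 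The Bellman recursion selecting the break points from the filled table adds only $\mathcal{O}\!\big(\sum_{k=1}^{K}\Lk\, d^2\big)$, which under assumption~(ID.2) (giving $\max_k \Lk<p$) is dominated by the table-filling cost. Hence one iteration is $\mathcal{O}(Knpd^2)$, and $N_{\text{algo}}$ iterations cost $\mathcal{O}(Knpd^2 N_{\text{algo}})$.

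Finally I would add the two contributions, $\mathcal{O}(pnd^3)+\mathcal{O}(pnd^2KN_{\text{algo}})=\mathcal{O}\!\left[pnd^2\max(d,KN_{\text{algo}})\right]$, which is the announced complexity. The step requiring the most care, and the main obstacle, is the M-step bookkeeping: one must verify that the weighted variance estimates $\sigmah_{kt_1r}$ can still be evaluated in $\mathcal{O}(np)$ per interval from the precomputed quantities. Unlike the means, they involve the squared partial means $\Yibarr{t_1}{t_2}^2$, which do not collapse into a single cumulative sum over $j$; I would argue that, via the decomposition of $\Deltakc$ established in the preceding proposition, each entry nonetheless reduces to weighted sums of already-stored per-individual statistics, so the factor $d^2$ over candidate segments is not inflated by an extra $d$ or $n$. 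I would also confirm that the gain claimed when $\bsigma$ depends only on $k$ does not affect the worst-case order, so that \eqref{Eq:maxQ} is optimized within the stated budget.
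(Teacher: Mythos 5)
Your proposal is correct and follows essentially the same route as the paper's proof: a one-time $\mathcal{O}\left(pnd^3\right)$ precomputation of per-interval statistics, a per-iteration E-step at $\mathcal{O}\left(Knpd\right)$, and a per-iteration assembly of the matrices $\Delta_k^{(c)}$ at $\mathcal{O}\left(Knpd^2\right)$, which sum to $\mathcal{O}\left[pnd^2\max\left(d,KN_{\text{algo}}\right)\right]$. You additionally make explicit two points the paper glosses over---that per-interval sums of squares (not just the means $\overline{\mathbf{Y}}_{i,t_1:t_2,r}$) must be stored so that $\widehat{\sigma}_{kt_1r}$ is computable in $\mathcal{O}(np)$ per entry, and that the Bellman recursion costs only $\mathcal{O}\bigl(\sum_{k}L_k d^2\bigr)$, dominated under (ID.2)---which strengthens rather than alters the argument.
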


\begin{proof}
 The values $\overline{\mathbf{Y}}_{i,t_1:t_2,r}$ are computed for each $i$, each $r$ and each pair $t_1<t_2$: as the computation is a mean, the final complexity is $\mathcal{O}\left(pnd^3\right)$. For each iteration in the algorithm, the complexity of E-step is $\mathcal{O}\left(Kpnd\right)$ and, thanks the storage of the $\overline{\mathbf{Y}}_{i,t_1:t_2,r}$, the complexity of the computation of each matrix $\Deltakc$ is only $\mathcal{O}\left(npd^2\right)$. Finally, the update of the estimation of $\bmu$ and $\bsigma$ is $\mathcal{O}\left(Knpd\right)$. The combination of all these complexities gives the result.
\end{proof}

Remark that the two most time-consuming steps are the computation of the means $\overline{\mathbf{Y}}_{i,t_1:t_2,r}$ and the computations of $\Deltakc$ at each iteration; these two steps can be easily parallelized. Moreover, the table of averages can be stored for later use.

\subsection{Consistency}
In this section, we prove that the model introduced in Equation \eqref{model_multiv} is consistent. To simplify notations, we consider univariate functional data or  projection of the observed functions onto a 1-dimensional basis, such that $p=1$ in this section, but the conclusion would be the same. To simplify the notations, we set $\sigma_{k\ell}=1$, but the results can be extended as well to any variance.

First, we assume that the parameter space is bounded, as introduced in \cite{bickel2013asymptotic,brault2020consistency,mariadassou2020consistency}.

\begin{assumpCons}\label{ass:C.1}
There exists $M>0$ such that for all $k\in\{1,\ldots,K\}$ and $\ell\in\{0,\ldots,\Lk\}$,
\[\mu_{k\ell}\in[-M;M].\]%\text{ and }\sigma_{k\ell}\in\left[\frac{1}{M};M\right].\]
\end{assumpCons}

We  assume that there are enough observations in each segment:
\begin{assumpCons}\label{ass:C.2} There exists $\tau_{\min}>0$ such that for all $k\in\{1,\ldots,K\}$ and $\ell\in\{0,\ldots,\Lk\}$, 
\[T_{k,\ell+1}-T_{k\ell}>\tau_{\min}d.\]
\end{assumpCons}
Let $\mathcal{T}_{\tau_{\min}}$  the set of breakpoints satisfying Assumption \eqref{ass:C.2}. 

We need an assumption about the rate of convergence. 

\begin{assumpCons} \label{ass:C.3} We assume that $\log(n) / d \underset{n,d\to+\infty}{\longrightarrow}0$. 
\end{assumpCons}

We also want to distinguish between clusters. To do so, we introduce the notion of equivalent clusters and the related symmetry. 

\begin{defi}[Equivalent clusters]
Two partitions $\bzs$ and  $\bz$ with $K$ clusters are  \emph{equivalent}\index{partition@Partition!équivalente}, denoted $\bzs \sim \bz$, if there exists a permutation 
 $\sigma\in\mathfrak{S}(\left\{1,\ldots,K\right\})$ such that for all $i\in\{1,\ldots,n\}$ and $k\in\{1,\ldots,K\}$, $z_{i\sigma(k)}=\zs_{ik}$.

 By similarity, we denote $\btheta', \bT' \sim \btheta, \bT$ if there exists a permutation that leads to the same parameters. 
%
%
%\ED{associer la notions d'éqivalence : si ils sont dans le meme Sym}
\end{defi}

We can thus define a distance between two partitions. 
\begin{defi}[Distance for partitions]
We define the distance $d_{0,\sim}$ between two partitions $\bz, \bzs$ with $K$ clusters by 
\begin{align}\label{d0sim}
d_{0,\sim}(\bzs,\bz)&=\underset{\sigma\in\mathfrak{S}(\left\{1,\ldots,K\right\})}{\min}\sum_{k=1}^K\sum_{i=1}^n \zs_{ik} z_{i\sigma(k)}.% =n -\underset{\sigma\in\mathfrak{S}(\left\{1,\ldots,K\right\})}{\max}\sum_{k=1}^K\RK_{k,\sigma(k)},
\end{align}

For a partition $\bzs$ and a radius $r>0$, and $\mathcal{Z}$ is the set of all potential partitions,
let $\mathcal{B}\left(\bzs;r\right)$ the ball
\[\mathcal{B}\left(\bzs;r\right)=\left\{\bz\in\mathcal{Z}\left|d_{0,\sim}(\bz,\bzs)\leq rn\right.\right\}.\]
\end{defi}

For the consistency of the estimator, we need to distinguish between close clusters, assuming something stronger than Assumption \ref{ID:3}.
\begin{assumpId} \label{ID:3:s} If there exists $k\neq k'$ such that $\Lk=L_{k'}$ then we assume that there exists at least $\tau_{\min}d$ coordinates $j$ such that the distribution of $Y_{ij}|\zs_{ik}=1$ is different from the distribution of $ Y_{ij}|\zs_{ik'}=1$.
\end{assumpId}
This is needed when the models within each segment are the same, and only the segments are different. 

We also need an assumption stronger than \ref{ID.4} about the number of curves in each cluster:
\begin{assumpId} \label{ID:4:s}   There exists a constant $c>0$ such that for every $k\in\{1,\ldots,K\}$, $\pi_k>c$.
\end{assumpId}

%\begin{rem}
%Eventuellement, (C.3) c'est plutôt (ID.2.s) puisque c'est une contrainte plus forte que (ID.2).
%\end{rem}

Variances are supposed to be equal whatever the cluster, the segment and the dimension, and to have the value $\sigma_{kjr}=1$. Extension to any variance is straightforward but derivations of formula are more technical. 

\begin{theo}\label{th:consistence}
Let $\mathbf{Y}$ be a matrix of a $n\times T$ observations of the model \eqref{model_multiv} with true parameter $\bthetas, \bTs$ where the number of clusters $K$ and the number of segments $(L_k)_{1\leq k \leq,K}$ are known. 
We assume \ref{ID:1}, \eqref{ID:3:s}, \eqref{ID:4:s}, \eqref{ass:C.1}, \eqref{ass:C.2} and \eqref{ass:C.3}. Then, 
for every parameter $\btheta\in\bTheta$ and $\bT\in\mathcal{T}_{\tau_{\min}}$,
\[\frac{p(\bY;\btheta,\bT)}{p(\bY;\bthetas,\bTs)}=\max_{(\btheta',\bT')\sim(\btheta,\bT)}\frac{p\left(\bY,\bzs;\btheta',\bT'\right)}{p\left(\bY,\bzs;\bthetas,\bTs\right)}\left[1+o_P(1)\right]+o_P(1)\]
where $o_P$ are uniform over $\bTheta$ and $\mathcal{T}_{\tau_{\min}}$, and $(\btheta',\bT')\sim(\btheta,\bT)$ means for any parameter up to the label switching. 
\end{theo}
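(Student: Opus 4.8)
The plan is to expand both marginal likelihoods over the latent partition, writing $p(\bY;\btheta,\bT)=\sum_{\bz\in\mathcal{Z}}p(\bY,\bz;\btheta,\bT)$, and to show that each sum is carried by the terms indexed by the true partition $\bzs$ up to label switching. I would first dispose of the denominator. Since the data follow $\bzs$ with $(\bthetas,\bTs)$ and the true parameters maximise the expected complete log-likelihood, a one-sided large-deviation estimate gives, with high probability and uniformly, $\log p(\bY,\bz;\bthetas,\bTs)-\log p(\bY,\bzs;\bthetas,\bTs)\le -c\,d\,d_{0,\sim}(\bz,\bzs)$ for some $c>0$ and every $\bz\not\sim\bzs$, the factor $d$ arising because a misplaced curve is penalised on all of its coordinates. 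Together with the counting bound $\#\{\bz:d_{0,\sim}(\bz,\bzs)=m\}\le\binom{n}{m}K^m$ and Assumption~\eqref{ass:C.3}, which ensures $\log n\ll d$, the geometric series over $m\ge1$ is $o_P(1)$, whence $p(\bY;\bthetas,\bTs)=p(\bY,\bzs;\bthetas,\bTs)[1+o_P(1)]$.

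For the numerator at a generic $(\btheta,\bT)$ I would split $\sum_{\bz}p(\bY,\bz;\btheta,\bT)=A_{\mathrm{near}}+A_{\mathrm{far}}$ according to whether $\bz\sim\bzs$. Comparing each term to $D:=p(\bY,\bzs;\bthetas,\bTs)$, the expected complete log-ratio reduces (with $p=1$ and unit variance) to a sum of $-\tfrac12(\mu^{\star}_{\zs_i,\ell^{\star}(j)}-\mu_{z_i,\ell_{z_i}(j)})^2\le0$ over coordinates. Assumption~\ref{ID:1} identifies the breakpoints inside a cluster, and Assumption~\eqref{ID:3:s} forces any two clusters with the same number of segments to differ on at least $\tau_{\min}d$ coordinates, so every curve placed in a wrong cluster contributes a mismatch of order $\tau_{\min}d$; the large-deviation and counting argument of the previous paragraph, again using~\eqref{ass:C.3}, then bounds $A_{\mathrm{far}}$ by $o_P(1)$ times the best relabelled copy of $\bzs$, which in turn is $\le D[1+o_P(1)]$ because the true parameters maximise the single-partition likelihood at $\bzs$; hence $A_{\mathrm{far}}=o_P(D)$. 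For $A_{\mathrm{near}}$ there are at most $K!$ relabelled copies $\bzs^{\sigma}$, and the same per-cluster separation makes every permutation but the best-matching one exponentially smaller, giving $A_{\mathrm{near}}=C[1+o_P(1)]$ with $C:=\max_{\bz\sim\bzs}p(\bY,\bz;\btheta,\bT)=\max_{(\btheta',\bT')\sim(\btheta,\bT)}p(\bY,\bzs;\btheta',\bT')$, the last equality because relabelling the partition with fixed parameters is the same as relabelling the parameters with fixed partition. Combining $A=C[1+o_P(1)]+o_P(D)$ with $p(\bY;\bthetas,\bTs)=D[1+o_P(1)]$ yields the claim.

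The hard part will be making all of these estimates uniform over $\bTheta$ and $\mathcal{T}_{\tau_{\min}}$ simultaneously. For a fixed, well-separated $(\btheta,\bT)$ the per-misassignment gap is bounded below and the geometric series converges, but a generic $(\btheta,\bT)$ may carry two almost identical clusters, in which case swapping the corresponding curves costs almost nothing and the crude geometric bound degenerates. I would resolve this through a one-sided uniform law of large numbers for the complete log-likelihood ratio: Assumption~\eqref{ass:C.1} bounds the means and supplies a finite covering of $\bTheta$, Assumption~\eqref{ass:C.2} makes every segment average concentrate at rate $o_P(d)$, and Assumption~\eqref{ID:4:s} keeps $\Theta(n)$ curves in each cluster so that empirical separations match their population values. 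The near-degenerate directions are then handled by a dichotomy: when two clusters of $(\btheta,\bT)$ genuinely coincide the associated swaps are true label switchings and are already absorbed into $A_{\mathrm{near}}$, whereas when the clusters are far apart both $C/D$ and $A/D$ are $o_P(1)$ and the identity holds trivially through the additive $o_P(1)$ term. Turning this dichotomy into a quantitative, uniform statement is the technical core of the proof.
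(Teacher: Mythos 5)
Your overall architecture coincides with the paper's: expand both marginals over the latent partition, keep the terms $\bz\sim\bzs$ as the main contribution via the identity $p(\bY,\bz^{(\sigma)};\btheta,\bT)=p\left(\bY,\bzs;\btheta^{(\sigma)},\bT^{(\sigma)}\right)$, and kill the remaining terms by concentration plus the counting bound $\binom{n}{m}K^m$ and Assumption (C.3). The genuine gap is exactly where you locate the ``technical core'', and your proposed fix does not close it. The claim that ``every curve placed in a wrong cluster contributes a mismatch of order $\tau_{\min}d$'' is not valid pointwise even for well-separated $\btheta$: a misassigned curve pays nothing if the cluster of $\btheta$ it lands in happens to fit its true profile, the cost being transferred to the correctly assigned curves, so only a global accounting (after optimizing over $\btheta$) gives a penalty proportional to $d\,d_{0,\sim}(\bz,\bzs)$. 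Worse, your dichotomy for near-degenerate $\btheta$ fails in its first branch: if two clusters of $\btheta$ carry identical parameters, the partitions obtained by redistributing only \emph{part} of the curves between them are \emph{not} equivalent to $\bzs$ --- equivalence requires a global permutation $\sigma$ with $z_{i\sigma(k)}=\zs_{ik}$ for all $i,k$ --- so they are not absorbed into $A_{\mathrm{near}}$; they sit in $A_{\mathrm{far}}$, they are exponentially numerous, and each carries the same conditional likelihood as the best of them. The intermediate regime (clusters of $\btheta$ close but distinct) is then left entirely unproved, which is an admission that the decisive step is missing.

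The paper's resolution, absent from your proposal, is to never measure the misassignment penalty with the separation of $\btheta$ at all: it writes $\Fntz\leq\left[\Fntz-\Gntz\right]+\Lambdat(\bz,\bT)$, where $\Lambdat(\bz,\bT)=\max_{\btheta\in\bTheta}\Gntz$ depends only on $(\bz,\bT)$ and the true parameters. The separability result (Proposition \ref{prop:separability}) bounds $\Lambdat(\bz,\bT)$ by $-\frac{d\tau_{\min}\delta(\bthetas)}{2}d_{0,\sim}(\bz,\bzs)$ for close partitions and by $-B(R)nd$ for far ones, where $\delta(\bthetas)>0$ is the minimal Kullback--Leibler gap of the \emph{true} parameters guaranteed by (ID.1)/(ID.3.s); hence the penalty never degenerates, whatever $\btheta$ is, and no dichotomy is needed. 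The only $\btheta$-dependence left is the centered Gaussian fluctuation $\Fntz-\Gntz$, whose variance is bounded by $\mathrm{Diam}(\bTheta)^2nd$ through (C.1), and it is controlled by Chernoff bounds in two regimes: per-distance counting inside a small ball (Proposition \ref{prop:contrib_local}) and a union bound over all $K^n$ partitions outside a shrinking ball (Proposition \ref{prop:part_eloign}). This two-regime split is also forced by the mixing proportions, which your proposal ignores: the ratio $p\left(\bz;\bpi\right)/p\left(\bzs;\bpis\right)$ admits the entropy-type bound $\mathcal{O}_P(1)e^{M_{c/4}\left\|\bz-\bzs\right\|_0}$ only near $\bzs$ on the Hoeffding event $\Omega_1(c)$, whereas far from $\bzs$ only the crude bound $c^{-n}$ is available, which must be beaten by a penalty of order $nd\,R_{nd}$. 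Without this sup-out-$\btheta$ device (or an equivalent one), the uniformity over $\bTheta\times\mathcal{T}_{\tau_{\min}}$ claimed in the theorem is not established.
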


\textbf{Sketch of proof.}
% Pour cette preuve, je suppose que la variance est égale à quelque soit le groupe. La différence se fait sur les calculs qui sont plus moches sinon. De même, on se place dans le cas $p=1$ pour gagner du temps.
We will prove that the complete likelihood with a bad clustering becomes small asymptotically with respect to the complete likelihood associated to the true partition.
To do so, we decompose the probability with respect to potential partitions.
\begin{align*}
p(\bY;\btheta,\bT)&=\sum_{\bz \in \mathcal{Z}}p(\bY,\bz;\btheta,\bT)\\
%&= \sum_{\bz \in \mathcal{Z}}p(\bz;\btheta,\bT) p(\bY|\bzs;\bthetas,\bTs) \exp(\Fntz)
&= \sum_{\bz \sim \bzs}p(\bY,\bz;\btheta,\bT) + \!\!\!\!\!\!
\sum_{\underset{\bz\nsim\bzs}{\bz\in\mathcal{B}\left(\bzs;{c}\right)}}p(\bY,\bz;\btheta,\bT) +\!\!\!\!\!\!
\sum_{\bz\notin\mathcal{B}\left(\bzs;{c}\right)}p(\bY,\bz;\btheta,\bT).
\end{align*}
%where
%\[\Fntz=\log\frac{p(\bY|\bz;\btheta,\bT)}{p(\bY|\bzs;\bthetas,\bTs)}.\]
Each term is controlled by the following propositions, that are proved in Appendix \ref{app:Cons}.

\begin{prop}[Equivalent partitions]\label{prop:part_equi}
Under Assumptions (ID.1) and (ID.3), we have for all $\btheta\in\bTheta$ and $\bT\in\mathcal{T}_{\tau_{\min}}$:
\[\sum_{\bz\sim\bzs}\frac{p(\bY,\bz;\btheta,\bT)}{p\left(\bY,\bzs;\bthetas,\bTs\right)}=\max_{(\btheta',\bT')\sim(\btheta,\bT)}\frac{p\left(\bY,\bzs;\btheta',\bT'\right)}{p\left(\bY,\bzs;\bthetas,\bTs\right)}\left[1+o_P(1)\right]\]
where $o_P$ is uniform on $\bTheta$.
\end{prop}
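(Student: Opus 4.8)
The plan is to show that the sum over partitions equivalent to $\bzs$ is dominated by a single best term (the one maximizing over the label-switching class), with the remaining terms contributing only a multiplicative $o_P(1)$ correction. The starting point is to fix $\btheta, \bT$ and partition the equivalence class $\{\bz \sim \bzs\}$ according to which permutation $\sigma$ realizes the equivalence. Since $\bz \sim \bzs$ means $z_{i\sigma(k)} = \zs_{ik}$ for some permutation $\sigma$, each such $\bz$ corresponds (up to the stabilizer of $\bzs$) to relabeling the clusters, so that $p(\bY, \bz; \btheta, \bT) = p(\bY, \bzs; \btheta', \bT')$ where $(\btheta', \bT')$ is the permuted parameter. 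Thus the sum over $\bz \sim \bzs$ is exactly a finite sum over permutations $\sigma \in \mathfrak{S}(\{1,\ldots,K\})$ of terms $p(\bY, \bzs; \btheta^\sigma, \bT^\sigma)$, and the number of such terms is at most $K!$, a constant independent of $n$ and $d$.

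Next I would factor out the maximal term. Writing $M_n = \max_{(\btheta',\bT')\sim(\btheta,\bT)} p(\bY,\bzs;\btheta',\bT')$, the sum equals $M_n \sum_{\sigma} R_\sigma$ where each ratio $R_\sigma = p(\bY,\bzs;\btheta^\sigma,\bT^\sigma)/M_n \in [0,1]$ and at least one $R_\sigma$ equals $1$. The claim reduces to showing that every non-maximal ratio is $o_P(1)$ uniformly over $\bTheta$ and $\mathcal{T}_{\tau_{\min}}$; then the finite sum is $1 + o_P(1)$ and dividing by $p(\bY,\bzs;\bthetas,\bTs)$ gives the stated identity. To control a non-maximal ratio, I would pass to the log-likelihood difference $\log R_\sigma = \sum_{i,j} [\log p_{\sigma}(Y_{ij}) - \log p_{\max}(Y_{ij})]$, which under the fixed true partition $\bzs$ is a sum of independent (across $i$) terms; each summand is, up to sign, a Kullback--Leibler-type discrepancy plus a centered fluctuation. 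Using Assumption~\ref{ID:1} (together with (ID.3)) the two distinct relabelings differ on at least $\tau_{\min} d$ coordinates, so the expected log-ratio is bounded above by $-c' \tau_{\min} d$ for some positive constant, giving a deterministic drift of order $nd$ toward $-\infty$. A concentration argument (Gaussian tails, sub-exponential bounds) shows the fluctuation is lower order, and Assumption~\ref{ass:C.3} ($\log n / d \to 0$) guarantees the bound survives a union over the $K!$ permutations and over a suitable net covering the bounded parameter space $\bTheta$ and the finite combinatorial set $\mathcal{T}_{\tau_{\min}}$, delivering uniformity.

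The main obstacle I anticipate is the uniformity over $\bTheta$ and $\mathcal{T}_{\tau_{\min}}$ in the $o_P(1)$ statement. Pointwise the KL drift argument is routine, but to make the negligibility hold simultaneously for all $\btheta \in \bTheta$ one must chain the concentration bound over a covering net of the parameter space, control the modulus of continuity of $\log R_\sigma$ in $\btheta$ (this is where Assumption~\ref{ass:C.1}, boundedness of the means, and the fixed unit variance are essential to get Lipschitz control with explicit constants), and ensure the entropy of the net grows slower than the $d$-rate of the drift. The breakpoint set $\mathcal{T}_{\tau_{\min}}$ is finite but of polynomial-in-$d$ cardinality, so its union bound cost is logarithmic in $d$ and is absorbed by the linear drift; reconciling these rates cleanly via Assumption~\ref{ass:C.3} is the delicate bookkeeping step.
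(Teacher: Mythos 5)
Your opening reduction is exactly the paper's: each $\bz\sim\bzs$ is a relabeling of $\bzs$, so $p(\bY,\bz;\btheta,\bT)=p(\bY,\bzs;\btheta^{(\sigma)},\bT^{(\sigma)})$ and the sum collapses to at most $K!$ terms indexed by permuted parameters (the paper additionally notes that (ID.3) makes the symmetry group of $(\btheta,\bT)$ trivial, so this correspondence is injective). The gap is in your second step. You claim that any non-maximal relabeling suffers a \emph{deterministic} drift, the expected log-ratio being at most $-c'\tau_{\min}d$ because distinct relabelings differ on at least $\tau_{\min}d$ coordinates. This is false: the expected log-ratio between two relabelings $\sigma_1,\sigma_2$ under the true distribution is a \emph{difference} of Kullback--Leibler divergences from $(\bthetas,\bTs)$, and two distinct relabelings can be exactly equidistant from the truth. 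Concretely, for $K=2$ with equal proportions and unit variances, writing $u^\star_1,u^\star_2\in\mathbb{R}^d$ for the true cluster mean profiles and $u_1,u_2$ for the candidate ones, the expected log-ratio between the identity and the swap equals $\frac{n}{2}\langle u^\star_1-u^\star_2,\,u_1-u_2\rangle$; whenever these two difference profiles are orthogonal (easily arranged with piecewise-constant profiles satisfying (ID.1), (ID.3) and (C.2), e.g.\ $u^\star_1=-u^\star_2$ switching at $d/2$ and $u_1,u_2$ differing by a constant shift with a breakpoint at $d/4$), the drift is exactly zero and no constant $c'$ exists. In such configurations the two terms are separated only by the fluctuation of their log-ratio (a centered Gaussian with variance of order $nd$), which requires an anti-concentration argument, not the drift-plus-concentration scheme you outline; and even that separation degenerates as $\btheta$ approaches a symmetric configuration, so the uniformity you flag as ``delicate bookkeeping'' is not bookkeeping — it is precisely where the pairwise-comparison strategy breaks.

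The paper avoids comparing relabelings to each other at all: it anchors the analysis at $\bthetas$, using the unimodality and consistency of the complete-data likelihood $\btheta\mapsto p(\bY,\bzs;\btheta,\bT)$ given the true partition. When $\btheta$ lies in a neighborhood of $\bthetas$, triviality of the symmetry group forces every non-identity relabeling $(\btheta',\bT')$ to sit at a fixed positive distance from $\bthetas$, so those terms are $o_P\left[p(\bY,\bzs;\bthetas,\bTs)\right]$ while the identity term is the maximum; away from that neighborhood all terms are $o_P\left[p(\bY,\bzs;\bthetas,\bTs)\right]$, which is the form actually consumed in Theorem 2 (whose statement carries an additive $o_P(1)$ exactly to absorb this regime). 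To repair your proof you would need to introduce the same dichotomy on the distance of $\btheta$ to $\bthetas$, or supplement your scheme with an anti-concentration step for the equal-divergence case; the net/chaining program alone cannot deliver the claimed uniform $o_P(1)$.
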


    \begin{prop}[Partitions close to the true one]\label{prop:contrib_local}
Under Assumptions (ID.3.s), (ID.4.s), (C.1), (C.2) and (C.3), we have that for all  $\tilde{c}<c/4$:
\[\sup_{(\btheta,\bT)\in\bTheta\times\mathcal{T}_{\tau_{\min}} }\sum_{\underset{\bz\nsim\bzs}{\bz\in\mathcal{B}\left(\bzs;\tilde{c}\right)}}p\left(\bY,\bz;\btheta,\bT\right)=o_P\left[p\left(\bY,\bzs;\bthetas,\bTs\right)\right].\]
\end{prop}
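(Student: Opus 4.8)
The plan is to dominate the supremum by a sum of per-partition \emph{profile} likelihoods, and then to show that each non-equivalent partition in the ball is penalised at an exponential rate of order $d$ per misclassified curve, which will beat the entropy of the ball thanks to Assumption (C.3). First I would use $\sup_{(\btheta,\bT)}\sum_{\bz}\leq\sum_{\bz}\sup_{(\btheta,\bT)}$, so that it suffices to bound $\sum_{\bz}e^{\widehat L(\bz)}$, where $\widehat L(\bz)=\sup_{(\btheta,\bT)\in\bTheta\times\mathcal{T}_{\tau_{\min}}}\log p(\bY,\bz;\btheta,\bT)$ and the sum runs over $\bz\in\mathcal{B}\left(\bzs;\tilde{c}\right)$ with $\bz\nsim\bzs$. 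For a fixed partition the inner maximisation over $\btheta$ is explicit: with the variance frozen to $1$, the optimal mean of each (cluster, segment) block is the empirical block mean and the optimal proportions are the empirical frequencies, so $\widehat L(\bz)$ reduces to a residual-sum-of-squares criterion, still maximised over the finitely many admissible segmentations $\bT\in\mathcal{T}_{\tau_{\min}}$.

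Next I would establish the deterministic drift. For each bad $\bz$ I fix the label permutation realising $d_{0,\sim}(\bz,\bzs)$ and let $m=m(\bz)\geq 1$ be the number of misclassified curves. By (ID.4.s) every true cluster has size of order $cn$ with high probability, while $m\leq\tilde{c}n<(c/4)n$; hence every block of $\bz$ contains a strict majority of curves from a single true cluster, and I would argue that its fitted means stay uniformly close to the true means $\mus$ --- this is exactly where $\tilde{c}<c/4$ enters. Consider a curve $i$ truly in cluster $k=\zs_i$ but assigned to some $k'\neq k$. By (ID.3.s) together with (ID.1), the laws of clusters $k$ and $k'$ differ on at least $\tau_{\min}d$ coordinates, i.e.\ in the unit-variance Gaussian case their segment means differ there by at least a fixed $\delta_0>0$. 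On those coordinates the fitted block mean tracks cluster $k'$'s profile, so the residual $(Y_{ij}-\widehat\mu)^2$ exceeds the true one by at least $\delta_0^2/2$; summing over the $\geq\tau_{\min}d$ discordant coordinates costs at least $c_1 d$ per misclassified curve, with $c_1=\tfrac14\tau_{\min}\delta_0^2$. The noise fluctuations and the gap between estimated and true means would be controlled uniformly, over clusters and over the polynomially many segmentations, by chi-square concentration under (C.1) and (C.2), yielding
\[\widehat L(\bz)-\log p(\bY,\bzs;\bthetas,\bTs)\leq -c_1 m d + o_P(md)\]
uniformly on the ball.

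I would then count: the number of partitions of $\mathcal{B}\left(\bzs;\tilde{c}\right)$ with exactly $m$ misclassified curves is at most $\binom{n}{m}(K-1)^m\leq e^{m(\log n+\log K)}$. Combining this with the drift bound and summing the resulting geometric-type series,
\[\sum_{\bz}e^{\widehat L(\bz)-\log p(\bY,\bzs;\bthetas,\bTs)}\leq \sum_{m=1}^{\lfloor\tilde{c}n\rfloor} e^{-m\left(c_1 d-\log n-\log K-o_P(d)\right)}.\]
Assumption (C.3) gives $\log n=o(d)$, so $c_1 d-\log n-\log K\to+\infty$, each summand is eventually at most $e^{-mc_1 d/2}$, and the series is dominated by its first term and tends to $0$ in probability. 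This is precisely $\sum_{\bz}e^{\widehat L(\bz)}=o_P\!\left(p(\bY,\bzs;\bthetas,\bTs)\right)$, the claim.

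The hard part will be the drift bound of the second step: I must show that the $c_1 d$ penalty per misclassified curve survives the adversary's freedom to re-estimate every mean and every breakpoint. The delicate point is that the fitted mean of a wrong block is itself contaminated by the misclassified curves, so I would have to combine the majority argument ($\tilde{c}<c/4$ with (ID.4.s)) with a uniform concentration bound over all segmentations to guarantee the fitted block means stay within $o(\delta_0)$ of the true means; only then does the separation of (ID.3.s) convert into a genuine quadratic loss. The remaining ingredients --- the explicit profile maximisation, the entropy count, and the use of (C.3) --- are routine once this uniform lower bound is secured.
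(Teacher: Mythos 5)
Your overall architecture --- bound each partition's contribution separately, extract an exponential-in-$d$ penalty per misclassified curve, count partitions by $\binom{n}{m}(K-1)^m$, and let Assumption (C.3) absorb the entropy --- is the same as the paper's. Your variation (profiling the random likelihood over $(\btheta,\bT)$, rather than, as the paper does, writing $\Fntz\leq\left[\Fntz-\Gntz\right]+\Lambdat(\bz,\bT)$, profiling only the expectation and controlling the centered Gaussian fluctuation) is legitimate, and your treatment of the proportions and of the polynomially many segmentations is consistent with the paper's lemmas.

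The genuine gap is the drift bound, precisely the step you flag as the hard part. Your mechanism --- the fitted mean of each block of $\bz$ stays within $o(\delta_0)$ (or even $\delta_0/2$) of the true mean of its majority cluster, so that each misclassified curve pays at least $c_1 d$ through \emph{its own} residuals --- is false under the stated hypotheses, and no concentration argument can rescue it, because the obstruction is a deterministic bias, not a fluctuation. Concretely, a block with $N_{\mathrm{maj}}\geq cn/4$ majority curves and up to $\tilde{c}n$ contaminants has profiled mean equal to a convex combination of true means (plus noise), whose distance to the majority mean is of order the contamination fraction times the mean separation; this can be a constant multiple of $\mathrm{Diam}(\bTheta)$ when $\tilde{c}$ is close to $c/4$, and the proposition must hold for \emph{every} $\tilde{c}<c/4$, not only for $\tilde{c}$ small compared with $c\,\delta_0/\mathrm{Diam}(\bTheta)$. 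When the fitted mean drifts toward the contaminants, the misclassified curves pay little --- it is the majority curves that pay. The correct accounting, which is what the paper does via Propositions~\ref{prop:lambdat} and~\ref{prop:separability}, is the within-block variance (pairwise) decomposition: the profiled expected drift is a weighted sum, over pairs of curves lying in the same $\bz$-block but coming from different true clusters, of pairwise Kullback--Leibler terms; each misclassified curve forms at least of order $cn$ such pairs, each pair contributes at least $\tau_{\min}d\,\delta(\bthetas)$ on the discordant coordinates, and after normalizing by the block size (at most $n$) the total drift is bounded by $-c_1 d\, d_{0,\sim}(\bz,\bzs)$ with $c_1\propto c\,\tau_{\min}\delta(\bthetas)$ --- regardless of where the fitted mean lands. (A second, smaller, imprecision: (ID.3.s) provides the $\tau_{\min}d$ discordant coordinates only when $L_k=L_{k'}$; when $L_k\neq L_{k'}$ you need (ID.1) combined with (C.2).) Once the drift step is replaced by this pairwise argument, the rest of your plan --- explicit profiling, the entropy count, the geometric series, and (C.3) --- goes through essentially as in the paper.
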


    \begin{prop}[Partitions far from the true one]
Under Assumptions  (C.1) and (ID.4.s), asymptotically in  $n$ and $d$, if there exists a sequence of radius  $R_{nd}$ converging to 0 such that $R_{nd}>\max\left(\sqrt{\log K/d},\frac{4\text{Diam}(\bTheta)K^2}{\sqrt{nd}\delta(\bthetas)},\frac{4\log \left(\frac{1}{c}\right)}{d\delta(\bthetas)}\right)$, then for all  $\btheta\in\bTheta$ and all $\bT\in\mathcal{T}_{\tau_{\min}}$:
\begin{equation}
\sup_{(\btheta,\bT)\in\bTheta\times\mathcal{T}_{\tau_{\min}}} \sum_{\bz\notin\mathcal{B}\left(\bzs;R_{nd}\right)}p(\bY,\bz;\btheta,\bT)=p(\bY,\bzs;\bthetas,\bTs)o_P(1)
\end{equation}
with probability $1-\exp\left(-\varepsilon_{nd}^2nd\right)$ where $\varepsilon_{nd}=\min\left(\frac{\delta(\bthetas)R_{nd}}{16},1/\sqrt{2}\right)$.
\label{prop:part_eloign}
\end{prop}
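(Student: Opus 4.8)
The plan is to show that every partition $\bz$ lying outside the ball $\mathcal{B}(\bzs;R_{nd})$ contributes a complete likelihood that is exponentially smaller than $p(\bY,\bzs;\bthetas,\bTs)$, uniformly in $(\btheta,\bT)$, and then to absorb the number of such partitions by a union bound. First I would write the complete log-likelihood ratio in closed form. Writing $\mu^{\star}_{ij}$ (resp. $\mu_{ij}$) for the mean that $(\bzs,\bthetas,\bTs)$ (resp. $(\bz,\btheta,\bT)$) assigns to coordinate $(i,j)$, and $\varepsilon_{ij}=Y_{ij}-\mu^{\star}_{ij}\sim\mathcal N(0,1)$ the true noise, the model \eqref{model_multiv} with $\sigma\equiv1$ gives
\[
\log\frac{p(\bY,\bz;\btheta,\bT)}{p(\bY,\bzs;\bthetas,\bTs)}
= \sum_{i=1}^{n}\log\frac{\pi_{z_i}}{\pis_{\zs_i}}
- \sum_{i,j}\varepsilon_{ij}\bigl(\mu^{\star}_{ij}-\mu_{ij}\bigr)
- \tfrac12\sum_{i,j}\bigl(\mu^{\star}_{ij}-\mu_{ij}\bigr)^{2}.
\]
The proportion term is bounded in absolute value by $n\log(1/c)$ using Assumption \eqref{ID:4:s}, and this is dominated by the drift below precisely when $R_{nd}>4\log(1/c)/(d\,\delta(\bthetas))$, i.e. the third term of the maximum.

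Second, I would extract a negative deterministic drift from $-\tfrac12 D$, where $D=\sum_{i,j}(\mu^{\star}_{ij}-\mu_{ij})^{2}$. Being outside $\mathcal{B}(\bzs;R_{nd})$ forces at least $R_{nd}n$ observations to be misassigned under every relabelling; since the fitted mean in a cluster--segment cell is common to all observations pooled there, no choice of $(\btheta,\bT)$ can match the true profiles of those misassigned curves, and the separation of the true model quantified by $\delta(\bthetas)$ yields a lower bound of the form $D\geq c'\,n\,d\,R_{nd}\,\delta(\bthetas)$ for some absolute constant $c'>0$. This is the irreducible cost of a bad clustering.

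Third, I would control the stochastic term $-\sum_{i,j}\varepsilon_{ij}(\mu^{\star}_{ij}-\mu_{ij})$, which for fixed $(\bz,\btheta,\bT)$ is centred Gaussian with variance exactly $D$. A Gaussian tail bound shows it exceeds $\tfrac14 D$ only with probability $\exp(-\varepsilon_{nd}^{2}nd)$, with $\varepsilon_{nd}=\min(\delta(\bthetas)R_{nd}/16,1/\sqrt2)$; on the complementary event the whole log-ratio is at most $-\tfrac14 D\leq-\tfrac{c'}{4}ndR_{nd}\delta(\bthetas)$. Combined with Assumption \eqref{ass:C.1} (so that $\bTheta$ is bounded), a net over the mean configurations --- of cardinality controlled by $\text{Diam}(\bTheta)$ and the $\mathcal O(K^{2})$ cluster--segment means --- makes this bound uniform over $\btheta$, at the cost of the discretisation term $4\,\text{Diam}(\bTheta)K^{2}/(\sqrt{nd}\,\delta(\bthetas))$ in the threshold; the set $\mathcal{T}_{\tau_{\min}}$ is finite, so the segmentation is handled by enumeration.

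Finally, I would sum over the at most $K^{n}$ partitions outside the ball. The per-term decay $\exp(-\tfrac{c'}{4}ndR_{nd}\delta(\bthetas))$ beats the partition count $e^{n\log K}$ exactly when $R_{nd}$ exceeds the first term $\sqrt{\log K/d}$ of the maximum, so the summed ratio is $o_P(1)$ on an event of probability $1-\exp(-\varepsilon_{nd}^{2}nd)$, giving the claim. \emph{The main obstacle} is the second step: proving the uniform lower bound $D\geq c'ndR_{nd}\delta(\bthetas)$, i.e. showing that no adversarial segmentation $\bT$ and no adversarial choice of pooled means can absorb the misclassification signal when a linear fraction of curves is misassigned; this is exactly where the separation $\delta(\bthetas)$ of the true model (guaranteed positive under \eqref{ID:3:s}) is indispensable, and it must be established uniformly over $\bz\notin\mathcal{B}(\bzs;R_{nd})$ and over $(\btheta,\bT)$ before the concentration and union-bound steps can be applied.
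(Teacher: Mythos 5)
Your overall architecture does mirror the paper's: decompose the complete log-likelihood ratio into a proportion term, a deterministic drift and a Gaussian fluctuation; show the drift is negative and linear in $ndR_{nd}\delta(\bthetas)$ uniformly in $(\btheta,\bT)$; concentrate the fluctuation; union-bound over the at most $K^n$ partitions outside the ball; and let the three terms in the threshold on $R_{nd}$ absorb, respectively, the partition entropy, the concentration cost, and the proportion term (the paper bounds $1/p(\bzs;\bthetas,\bTs)\leq e^{n\log (1/c)}$ exactly as you do). However, there is a genuine gap, and it is precisely the step you flag as ``the main obstacle'': the uniform lower bound $D\geq c'\,n\,d\,R_{nd}\,\delta(\bthetas)$ is asserted, not proved, and it is the mathematical heart of the proposition. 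In the paper this is Proposition~\ref{prop:separability} (Separability), and it does not follow from a generic ``misassigned curves cannot be matched'' argument: since $\btheta$ is free, the adversarial fit places each pooled mean at a weighted average of the true means, so the cost of a misassignment is a variance-like quantity, not a sum of fixed per-curve penalties. The paper therefore first derives the closed form of $\Lambdat(\bz,\bT)=\max_{\btheta}\Gntz$ (Proposition~\ref{prop:lambdat}): a weighted sum of pairwise divergences $\KL{\mus_{k_1,\ell_1}}{\mus_{k_2,\ell_2}}$ with weights $\RK_{k_1,k}\RK_{k_2,k}\NL{k_1}{k}_{\ell_1,\ell}\NL{k_2}{k}_{\ell_2,\ell}$, and only then, using Assumptions (ID.3.s) and (C.2) to guarantee at least $\tau_{\min}d$ separated coordinates between distinct clusters, together with contingency-table algebra relating $\sum_{k,k_1}\RK_{k_1,k}^2/\zs_{+,k}$ to $d_{0,\sim}(\bz,\bzs)$, does the bound linear in $d_{0,\sim}(\bz,\bzs)$ come out. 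Without this step your proof does not close.

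A secondary, fixable weakness is your uniformity argument over $(\btheta,\bT)$. A net over $\bTheta$ requires controlling the (random) Lipschitz constant in $\btheta$ of the fluctuation term, which you do not address; and while $\mathcal{T}_{\tau_{\min}}$ is indeed finite, its cardinality must enter the union bound. The paper sidesteps both issues with a relaxation trick (Lemma~\ref{propo:largeDeviation}): it bounds $\Fntz-\Lambdat(\bz,\bT)$ by $\sup_{\|\bGamma\|_\infty\leq \text{Diam}(\bTheta)}\sum_{k,k'}\Gamma_{k,k'}W_{k,k'}$, where the Gaussian variables $W_{k,k'}$ depend only on $(\bz,\bzs)$ and the data, so the supremum over $(\btheta,\bT)$ disappears \emph{before} any concentration is applied and the union bound runs over $\bz$ alone. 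Your pointwise concentration with variance exactly $D$ is in fact sharper than the paper's cruder $\text{Diam}(\bTheta)^2nd$ variance bound (which is why the paper's $\varepsilon_{nd}$ scales as $\delta(\bthetas)R_{nd}$ rather than $\sqrt{R_{nd}\delta(\bthetas)}$), but that sharpness is of no use until the uniformity issues and, above all, the separability bound are actually established.
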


Then, we get that 
\begin{align*}
p(\bY;\btheta,\bT)
&= \sum_{\bz \sim \bzs}p(\bY,\bz;\btheta,\bT) + 
p(\bY|\bzs;\bthetas,\bTs) o_P(1)
\end{align*}
which gives the result.

%En particulier, nous observons que pour tout partition $\bz$ et pour tout ensemble de paramètres et ruptures $(\btheta,\bT)$, nous avons~
%\begin{eqnarray*}
%p(\bY,\bz;\btheta,\bT)&=&p(\bY|\bz;\btheta,\bT)p(\bz;\btheta,\bT)\\
%&=&p(\bz;\btheta,\bT)p(\bY|\bz;\btheta,\bT)\frac{p(\bY|\bzs;\bthetas,\bTs)}{p(\bY|\bzs;\bthetas,\bTs)}\\
%&=&p(\bz;\btheta,\bT)p(\bY|\bzs;\bthetas,\bTs)\exp\underbrace{\left[\log\frac{p(\bY|\bz;\btheta,\bT)}{p(\bY|\bzs;\bthetas,\bTs)}\right]}_{=:\Fntz}\\
%\end{eqnarray*}

\begin{cor}\label{cor:consistency}

Let $\mathbf{Y}$ be a matrix of a $n\times T$ observations of the model \eqref{model_multiv} with true parameter $\bthetas, \bTs$ where the number of clusters $K$ and the number of segments $(L_k)_{1\leq k \leq K}$ are known. 
We assume \ref{ID:1}, \eqref{ID:3:s}, \eqref{ID:4:s}, \eqref{ass:C.1}, \eqref{ass:C.2} and \eqref{ass:C.3}. Then, 

\begin{align*}
\left(\widehat{\btheta},\widehat{\bT}\right)\underset{n, d \rightarrow +\infty}{\overset{\mathbb{P}}{\rightarrow}} \left(\btheta^\star,\bT^\star\right).
\end{align*}
\end{cor}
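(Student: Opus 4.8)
The plan is to derive Corollary~\ref{cor:consistency} directly from Theorem~\ref{th:consistence} by a standard $M$-estimation argument, using the fact that $(\bthetah,\bTh)$ maximizes the observed likelihood $p(\bY;\btheta,\bT)$ over $\bTheta\times\mathcal{T}_{\tau_{\min}}$. The key object is the log-likelihood ratio
\[
\Lambda_n(\btheta,\bT)=\log\frac{p(\bY;\btheta,\bT)}{p(\bY;\bthetas,\bTs)},
\]
and Theorem~\ref{th:consistence} identifies its leading behaviour: up to label switching, it is asymptotically equivalent to the \emph{complete} log-likelihood ratio $\log\bigl[p(\bY,\bzs;\btheta',\bT')/p(\bY,\bzs;\bthetas,\bTs)\bigr]$ evaluated at the true partition $\bzs$, plus negligible $o_P(1)$ terms that are uniform over the parameter space.

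First I would argue that, since $(\bthetah,\bTh)$ is the maximizer, we have $\Lambda_n(\bthetah,\bTh)\geq \Lambda_n(\bthetas,\bTs)=0$. Substituting the expansion of Theorem~\ref{th:consistence} and using the uniformity of the $o_P(1)$ terms, this forces the complete-data log-likelihood ratio at $(\bthetah,\bTh)$ (maximized over the label-switching equivalence class) to be bounded below by $-o_P(1)$. The second step is to analyse this complete-data ratio. Because the partition is fixed at $\bzs$ and the variance is fixed at $1$, the complete log-likelihood is, for each cluster $k$ and segment, a sum of Gaussian log-densities; its expectation is maximized uniquely at $(\bthetas,\bTs)$ (up to permutation), with the maximum value $0$. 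Concretely, after centering, the complete log-likelihood ratio equals a sum over coordinates of terms of the form
\[
\tfrac12\sum_{k}\sum_{i:\zs_{ik}=1}\sum_{j}\bigl[(Y_{ij}-\mus_{z_i(j)})^2-(Y_{ij}-\mu_{z_i(j)})^2\bigr],
\]
whose normalized limit is a deterministic concave criterion, strictly negative away from $(\bthetas,\bTs)$ modulo permutation.

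The third step is the identifiability-to-separation passage. By a uniform law of large numbers over the compact parameter set $\bTheta$ (Assumption~C.1) and the finite combinatorial set $\mathcal{T}_{\tau_{\min}}$ of admissible breakpoints (Assumption~C.2 bounds the number of segments and keeps each segment of length at least $\tau_{\min}d$), the normalized complete log-likelihood ratio converges uniformly to a population contrast function $\mathbb{K}(\btheta,\bT)$. Assumptions~\ref{ID:1}, (ID.3.s) and (ID.4.s) guarantee that $\mathbb{K}$ attains its unique maximum at $(\bthetas,\bTs)$ up to the label permutation, and that this maximum is \emph{well-separated}: for every neighbourhood $\mathcal{N}$ of the equivalence class of $(\bthetas,\bTs)$ there is $\epsilon>0$ with $\sup_{(\btheta,\bT)\notin\mathcal{N}}\mathbb{K}(\btheta,\bT)\leq -\epsilon$. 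Combining this separation with the lower bound $-o_P(1)$ obtained in the first two steps yields a contradiction unless $(\bthetah,\bTh)$ eventually lies in $\mathcal{N}$, which is precisely convergence in probability (up to label switching, removed by the ordering convention $k\leq k'\Leftrightarrow L_k\leq L_{k'}$).

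The main obstacle I expect is establishing the well-separated identification for the breakpoints $\bT$ rather than for the continuous parameters $\btheta$. Because $\bT$ ranges over a discrete but growing set, a naive argument only gives separation on a per-configuration basis; the delicate point is to show that misplacing a breakpoint by even a fraction of a segment incurs a strictly negative contrast in the limit, which is where Assumption~(ID.3.s)---requiring at least $\tau_{\min}d$ coordinates on which close clusters differ---and Assumption~C.2 must be used together to rule out degenerate near-merges of segments. The continuous part ($\bmu$) is handled by standard compactness plus strict concavity of the Gaussian contrast, so the functional-data difficulty is concentrated entirely in the segmentation geometry.
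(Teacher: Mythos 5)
Your proposal is correct and follows essentially the same route as the paper: both use Theorem~\ref{th:consistence} to transfer the argmax property of the observed-likelihood estimator to the complete-data likelihood at the true partition $\bzs$, and then invoke consistency of the maximum likelihood estimator for the resulting known-partition segmentation problem. The paper's own justification is only a two-sentence sketch (``the problem boils down to a standard segmentation problem''), so your M-estimation details --- the argmax inequality, the uniform convergence of the normalized complete-data contrast, and the well-separated maximum over $\bTheta\times\mathcal{T}_{\tau_{\min}}$ --- supply exactly what the paper leaves implicit, with the only minor quibble being that breakpoint separation within a cluster rests on \ref{ID:1} and \eqref{ass:C.2} (via $\delta(\bthetas)$) rather than on \eqref{ID:3:s}, which serves only to distinguish clusters sharing the same number of segments.
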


From Theorem~\ref{th:consistence}, we have that the likelihood focuses on the true partition of the data, hence the maximum likelihood estimator is asymptotically close to the complete maximum likelihood, given the true partitions. In this particular case, since the partitions are known, our problem boils down to a standard segmentation problem.
%\VB{Dire que, comme la vraisemblance se concentre sur la bonne partition, l'estimateur du maximum de vraisemblance est asymptotiquement proche du maximum de vraisemblance complète avec la vraie partition. Dans ce cas, on se retrouve dans les conditions \textit{classiques} d'un problème de segmentation et on peut utiliser tous les résultats de la bibliographie}

\section{Simulation study}
\label{sec:simu}
\Coragain{We first provide a toy example to motivate the combination of clustering and segmentation. Then,} we do an empirical evaluation of our model on univariate generated data. \Cor{We also compare our approach with clustering and segmentation performed independently.}

\subsection{On the importance of temporal coherence}

\Coragain{Hereafter, we provide a toy example to emphasize why our method is needed when considering clustering of functional data. For $n=60$ observations, $T=16$ and $d=30$, we consider two regimes, the neutral one, seen as a Gaussian noise $\mathcal{N}(0,0.1)$, and the active one, generated from the Gaussian $\mathcal{N}(2, 1)$. We consider $K=3$ clusters, which are not ordered, but also $L=3$ segments, which coincide for all the clusters. Every point is neutral, except the first cluster for the first segment, the second cluster for the second segment, and the third cluster for the third segment. Such scenario can easily occur in reality if we consider for instance electrical consumption by different entities that are active at different time of the day (morning, afternoon and night).} \Coragain{Even though the clustering problem at hand is simple, existing clustering method for functional data (e.g. funLBM \cite{bouveyron:2017}) will struggle in such cases. Indeed, up to the permutation of the column, there is only one row cluster (and vice-versa). However, we want to avoid allowing the permutation of the columns, as  we will loose the semantic behind the timepoints, that are ordered.}

\Coragain{To showcase this issue, we ran \texttt{funLBM} with $K=2$ clusters and $L=3$ segments (clusters in columns in their case). For $K=3$, the model was systematically obtaining an empty cluster\footnote{the implementation provided by the authors does neither handle model selection, nor the empty cluster problem.}.}

\Coragain{We report the results in Table \ref{tab:toyexample}.
As there are no constraints for the segmentation in \texttt{funLBM}, we compute the ARI. Unsurprisingly, our method performs very well and strongly outperforms funLBM. We observe this behavior for two reasons (1) funLBM encounters an identifiability problem for the case of $K=3$; (2) by allowing the model to shuffle the time dimension, the model focuses more on separating the neutral vs. active regimes independently from the time at which these regimes occur.}   %the generated data corresponds exactly to the model that is encoded by our method. 
%However, the funLBM performs very bad, because 1/ there is an identifiability problem with $K=3$ clusters, and 2/ the method does not see the difference between the clusters. 
\Coragain{For this reason, we argue that our modelisation is best suited when dealing with functional data (or any ordered dependent data) with heteorgeneity in the population through clusters and segments in time.}

\begin{table}[t]
    \centering
    \caption{Mean (std) ARI for the toy example for our method and its latent block model counterpart (implemented in \texttt{funLBM}).}
    \label{tab:toyexample}
    \begin{tabular}{ccc}\toprule
     model / evaluation & clustering & segmentation \\ \hline
    our model & 1 (0.00) &  0.82 (0.05)\\
    \texttt{funLBM} & 0.48 (0.20)&0.46 (0.16)\\
    \bottomrule
    \end{tabular}
\end{table}

\begin{figure}[!ht]
\includegraphics[width=\textwidth]{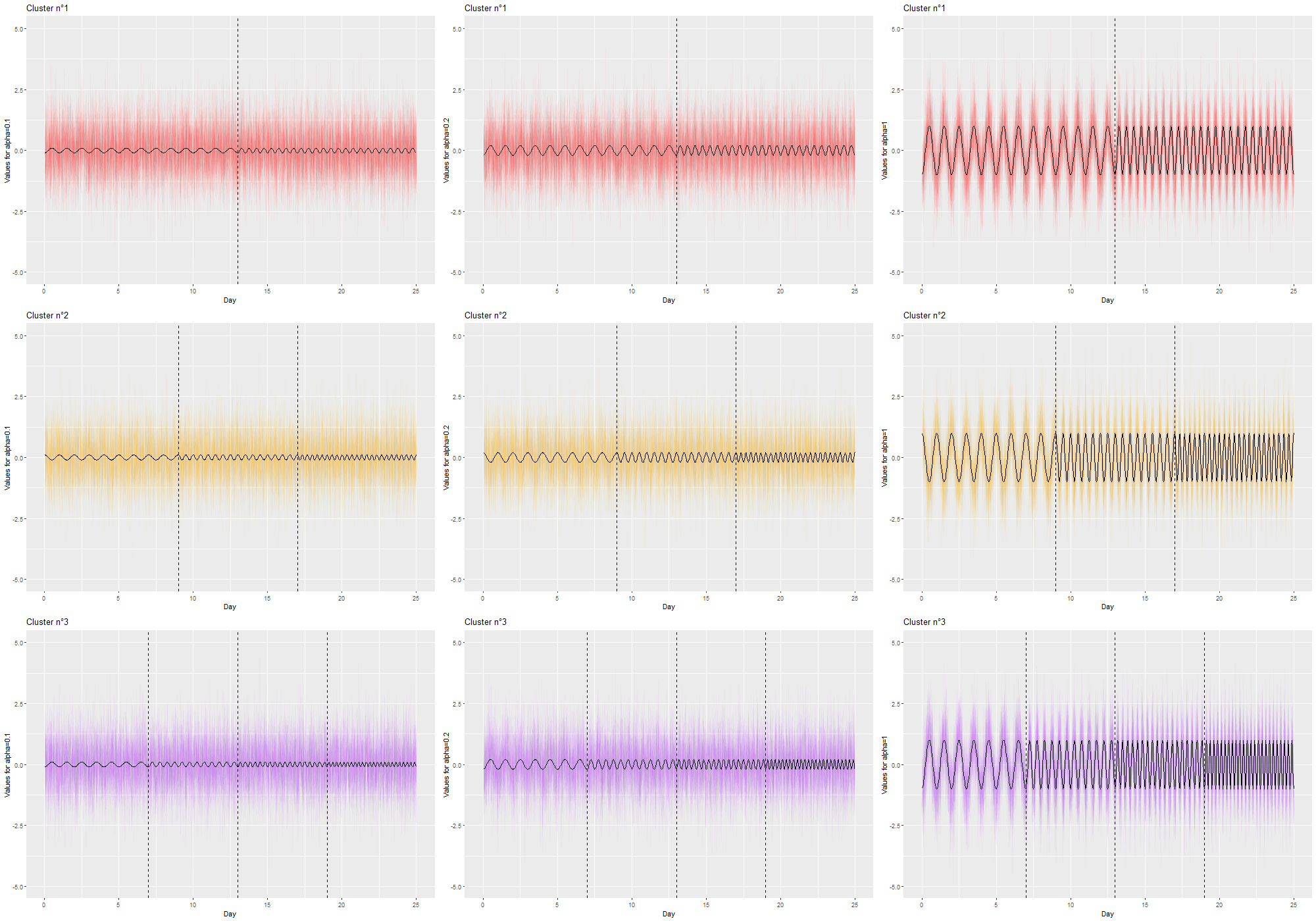}
\caption{Illustration of the different settings for $K=3$ with $(\alpha =0.1$ (left column), $0.2$ (middle column) and $1$ (right column). The number of breakpoints is $L_1=1, L_2=2, L_3=3$.}
\label{fig:dgp}
\end{figure} 
\subsection{Experimental Protocol}
\paragraph{Data generation process} We simulate data based on the following generative process. For a given number of multivariate observations $n\in\{100, 1000\}$ with $H=32$ and a number of days $d\in\{50, 100\}$, we start by generating a mixture model with $K=3$ clusters with equal proportions ($\pi_k=\frac{1}{3}$ for $k \in \{1,2,3\}$). For each $k\in \{1,\ldots, K\}$, we take:
$$f_{k\ell}(t_j) = (-1)^{k}. \alpha. \cos\left(\frac{2\pi t_j}{1+\ell}\right), $$
where $\alpha$ guides the amplitude of the generated curves, $\ell \in \{0, \ldots, L_k\}$ is the index of the current segment, and $j \in \{1, \ldots, T\}$ is the index of the timepoint. As a result $\alpha$ plays a dual role: the smaller $\alpha$ is, the harder it is to both differentiate between the clusters and to detect the break-points. In the sequel, we consider different settings by varying the values of $\alpha \in \{0.1, 0.2, 1\}$. Finally, we fix the variance $\sigma_{k\ell}=1$ for all cluster $k\in\{1,\ldots,3\}$ and segment $\ell \in\{1,\ldots,L_k\}$. 
Figure \ref{fig:dgp} illustrates all settings for $K=3$ and fixed $n$ and $d$.

\paragraph{Projection onto a wavelet basis}
%Any functional basis might be used to project the observed functions, but we propose here to deal with wavelets in practice so we describe the model in that case. 
The discretization of each component of the $p$-dimensionial curve is projected onto a wavelet basis\footnote{Note that any functional basis might be used to project the observed functions.}, that represents localized features of functions in a sparse way (\cite{Mallat}).  In our paper, the Discrete Wavelet Transform (DWT) is performed using a computationally fast pyramid algorithm (\cite{Mallat2, Poggi}).
We use both scaling functions to construct approximations of the function of interest, and the wavelet functions serve to provide the details not captured by successive approximations.

\paragraph{Evaluation Metrics}

In order to evaluate the quality of our output, we consider different metrics of evaluation. For the clustering part, we compute the Adjusted Rand Index (ARI) (\cite{hubert1985comparing}) and the Normalized Classification Error (NCE) (\cite{robert2021}). 

The ARI is a measure of agreement between two partitions defined by
\[
\text{ARI}(\mathbf{z}^{*}, \hat{\mathbf{z}}) = \frac{\sum_{k,k'}\binom{n_{kk'}}{2}-\left[\sum_k\binom{n_k}{2}\sum_{k'}\binom{\hat{n}_{k'}}{2}/\binom{n}{2} \right]}{\frac{1}{2}\left[\sum_k\binom{n_k}{2}+\sum_{k'}\binom{\hat{n}_{k'}}{2}\right] - \left[\sum_k\binom{n_k}{2}\sum_{k'}\binom{\hat{n}_{k'}}{2}\right]/\binom{n}{2} }, 
\]
where $n_k$ denotes the number of observations contained in the $k$-th cluster described by $\bzs$, $\hat{n}_k'$ is the number of observations in the estimated $k$-th cluster described by $\hat{\bz}$ and $n_{kk'}$ denotes the number of observations that are in the intersection between ground-truth cluster $k$ and the estimated cluster $k'$. The ARI lies between $0$ and $1$, with $1$ indicating a perfect agreement between the two partitions and $0$ that the two partitions are random. 

The NCE is defined by 
$$
\text{NCE}(\bzs,\hat{\bz}) = 1 - \frac{d_{0,\sim}(\bzs,\hat{\bz})}{n/K},
$$
where the distance $d_{0,\sim}$ between two partitions $\bz, \bzs$ is given by Eq. \eqref{d0sim}. 
% \begin{align*}
% d_{0,\sim}(\bzs,\bz)&=\underset{\sigma\in\mathfrak{S}(\left\{1,\ldots,K\right\})}{\min}\sum_{k=1}^K\sum_{i=1}^n \zs_{ik} z_{i\sigma(k)}.
%\end{align*}
The NCE lies in $[0, 1]$, with $0$ indicating a perfect estimation of $\bzs$.

Regarding the quality of the segmentation part, we compute the Hausdorff distance (\cite{BraultOSL18}), defined by
\begin{align*}
d_{\text{Haus}}(\mathbf{T}^*;\hat{\mathbf{T}}) = \max_{k\in \{1,2,3\}} \max_{1 \leq \ell \leq L_k} \left\{ \frac{|T_{k\ell}^* - \hat{T}_{k\ell}|}{d}\right\}.
\end{align*}
Note that $d_{\text{Haus}} \in [0,1]$, where $0$ indicates a perfect matching between the ground-truth and the estimated break-points. 

Finally, we propose to evaluate the quality of the parameter estimation with respect to the real values, by taking the difference between the ground-truth parameters (obtained by using the projection of $f_{k\ell}$, without additional noise) and the estimated ones. 
%We study the univariate case $p=1$ to get easier analysis. 
%We vary the sample size $n\in\{100, 1000\}$ to see the asymptotic effect. We vary the dimension $d\in\{50, 100\}$. 
%We generate a mixture with $3$ components with equal size ($\pi_k=\frac13$) pour tout $k\in\{1,2,3\}$, and each cluster has a specific structure into segment: the first cluster has one break-point (two segments of size $d/2$), the second cluster has two break-points (three segments of size $d/3$) and the third cluster has three break-points (four segments of size $d/4$). 
%To see the effect of the signal-to-noise ratio, we fix the variance $\sigma_{k\ell}=1$ for all cluster $k\in\{1,\ldots,3\}$ and segment $\ell \in\{1,\ldots,L_k\}$, and we define several settings by varying the mean in the set $\{1,2,3,4,5\}$ for every cluster $k\in\{1,\ldots,3\}$ and segment $\ell \in\{1,\ldots,L_k\}$. 
%At each break-point, we change the sign of the mean (not its value).
%For example, for the setting with mean $1$, $\mu_{11}= \mu_{22} = \mu_{31} = \mu_{33} = 1$ and $\mu_{12}= \mu_{21} = \mu_{23} = \mu_{32} = \mu_{34} = -1$.

%For each setting, we first check the clustering.
%We suppose $K$ and $(L_k)_{1\leq k \leq K}$ are known. 

\subsection{Results}
%To do so, we represent for each setting the ARI  which measure how two partitions are close, closer to 1 is the ARI, better is the clustering. As we generate the data, we know the true partition. 
ARI results are summarized in Table \ref{tab:ARI} (NCE results are consistent and presented in the Appendix). We observe that our method behaves as expected. As $\alpha$ increases (i.e. the task becomes easier), ARI and NCE increase and decrease, respectively. We obtain a perfect clustering when $\alpha=1$. In addition, we recover the results of Theorem~\ref{th:consistence}: one can see that as the number of observations ($d$) and the number of individuals ($n$) increases, the classification gets better (contrary to the one-dimensional mixture model where the proportion of errors tends toward a limit value even if the number of individuals keeps increasing). 

%\VB{On voit que plus le nombre d'observations et d'individus augmentent, meilleure est la classification. On retrouve ainsi les résultats du théorème~\ref{th:consistence} (contrairement à un modèle de mélange à une dimension où les erreurs vont avoir tendances à se stabiliser quand on augmente le nombre d'individus).}
\begin{table}[t]
    \centering
    \caption{Mean (std) ARI for different settings: $(n,d)$ in row and $\alpha$ in columns.}
    \label{tab:ARI}
    \begin{tabular}{c}
    \begin{minipage}{0.95\textwidth}
    \begin{center}
    \begin{tabular}{cccc}
    \toprule
    \multicolumn{4}{c}{ARI $\uparrow$}\\
    \midrule
     $(n,d)$|$\alpha$&0.1&0.2&1\\\midrule
    (100,50)&0.13 (0.15)&0.61 (0.15)&1 (0)\\
    (100,100)&0.32 (0.19)&0.76 (0.16)&1 (0)\\
    (1000,50)&0.33 (0.13)&0.76 (0.1)&1 (0)\\
    (1000,100)&0.55 (0.099)&0.93 (0.046)&1 (0)\\
    \bottomrule
    \end{tabular}
    \end{center}
    \end{minipage}
     \end{tabular}
\end{table}
\begin{figure}[!t]
\begin{subfigure}[b]{0.24\textwidth}
\includegraphics[width=\linewidth]{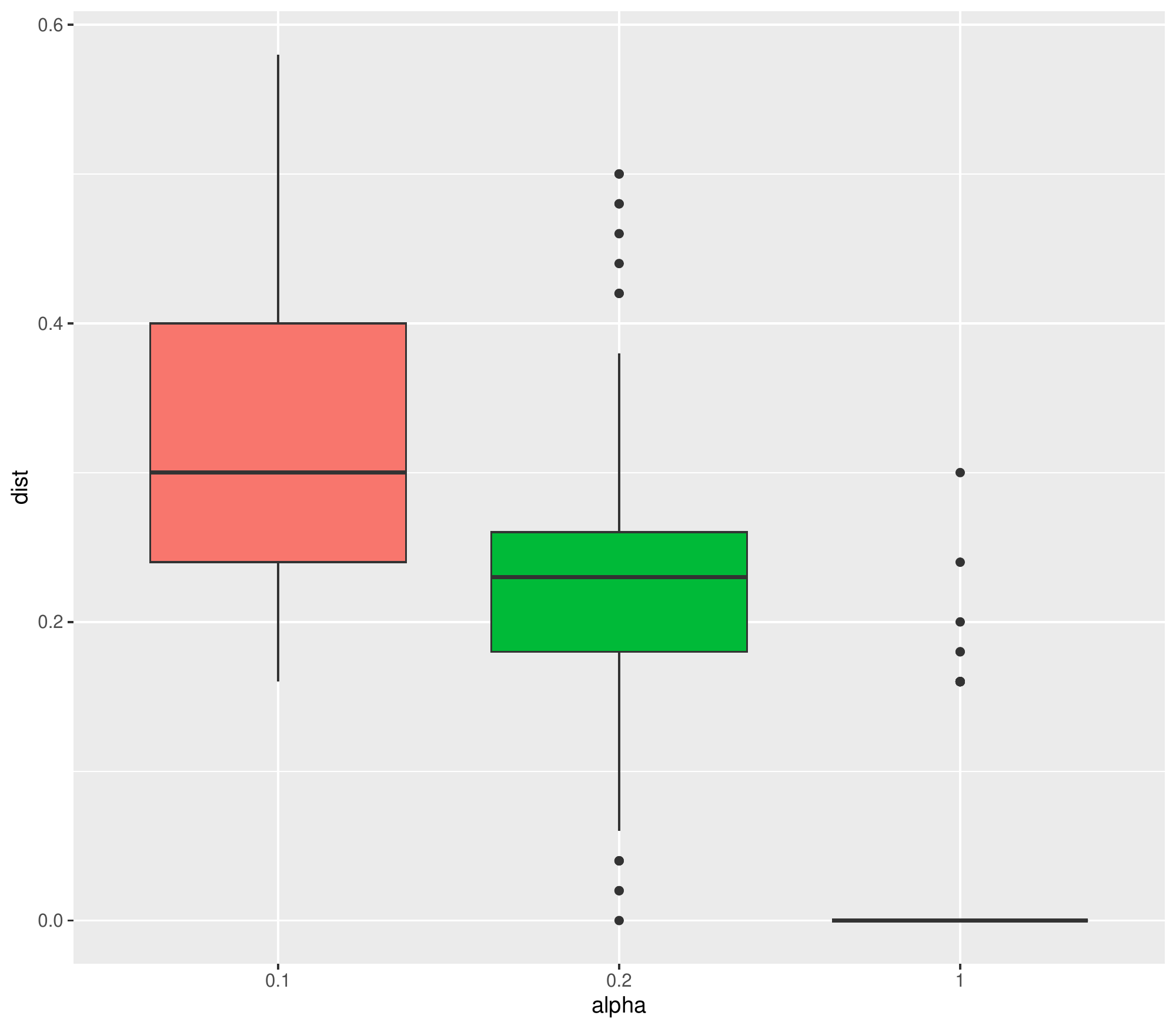}
\caption{n=100, d = 50}
\end{subfigure}
\begin{subfigure}[b]{0.24\textwidth}
\includegraphics[width=\linewidth]{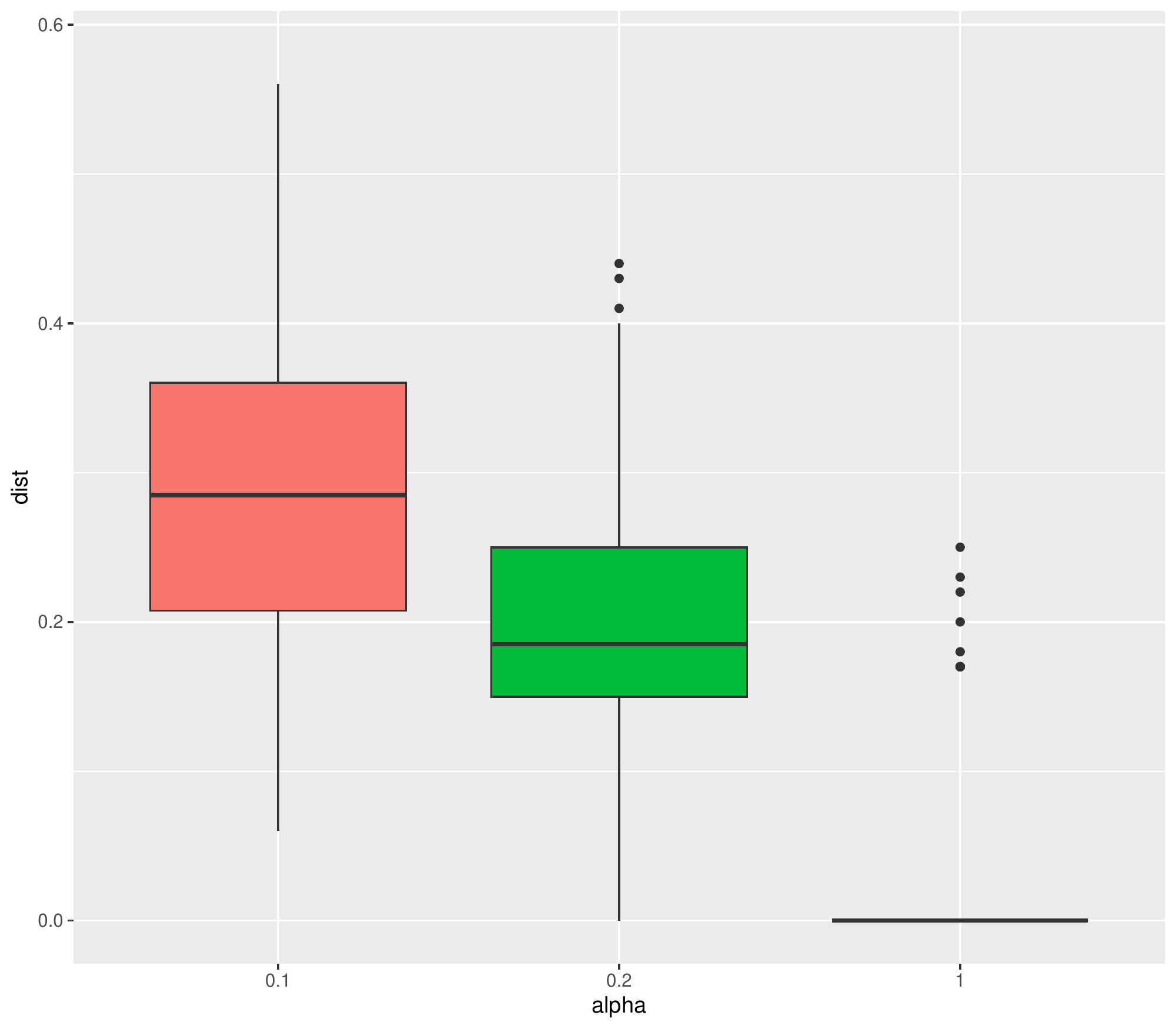}
\caption{n=100, d = 100}
\end{subfigure}
\begin{subfigure}[b]{0.24\textwidth}
\includegraphics[width=\linewidth]{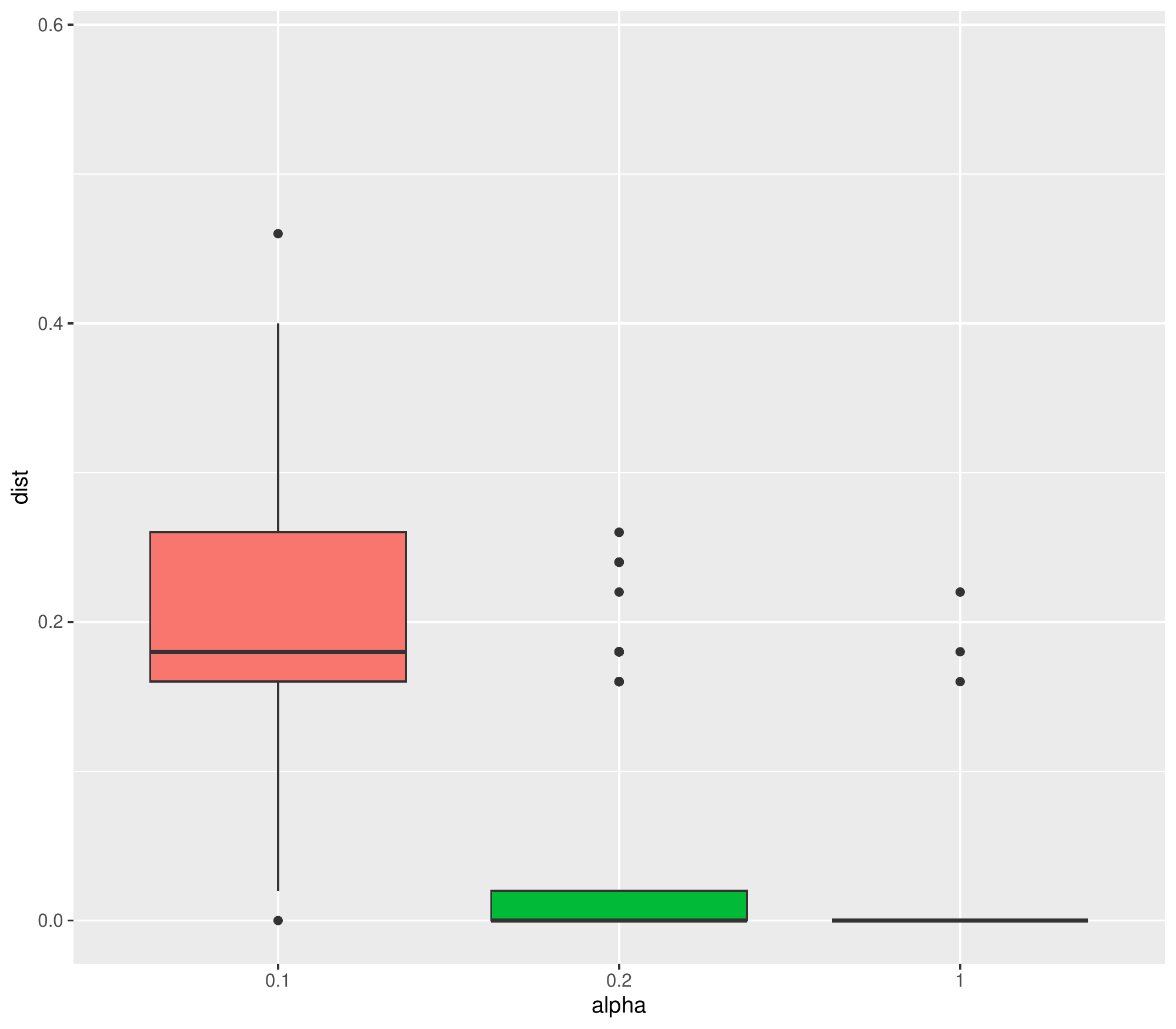}
\caption{n=1000, d = 50}
\end{subfigure}
\begin{subfigure}[b]{0.24\textwidth}
\includegraphics[width=\linewidth]{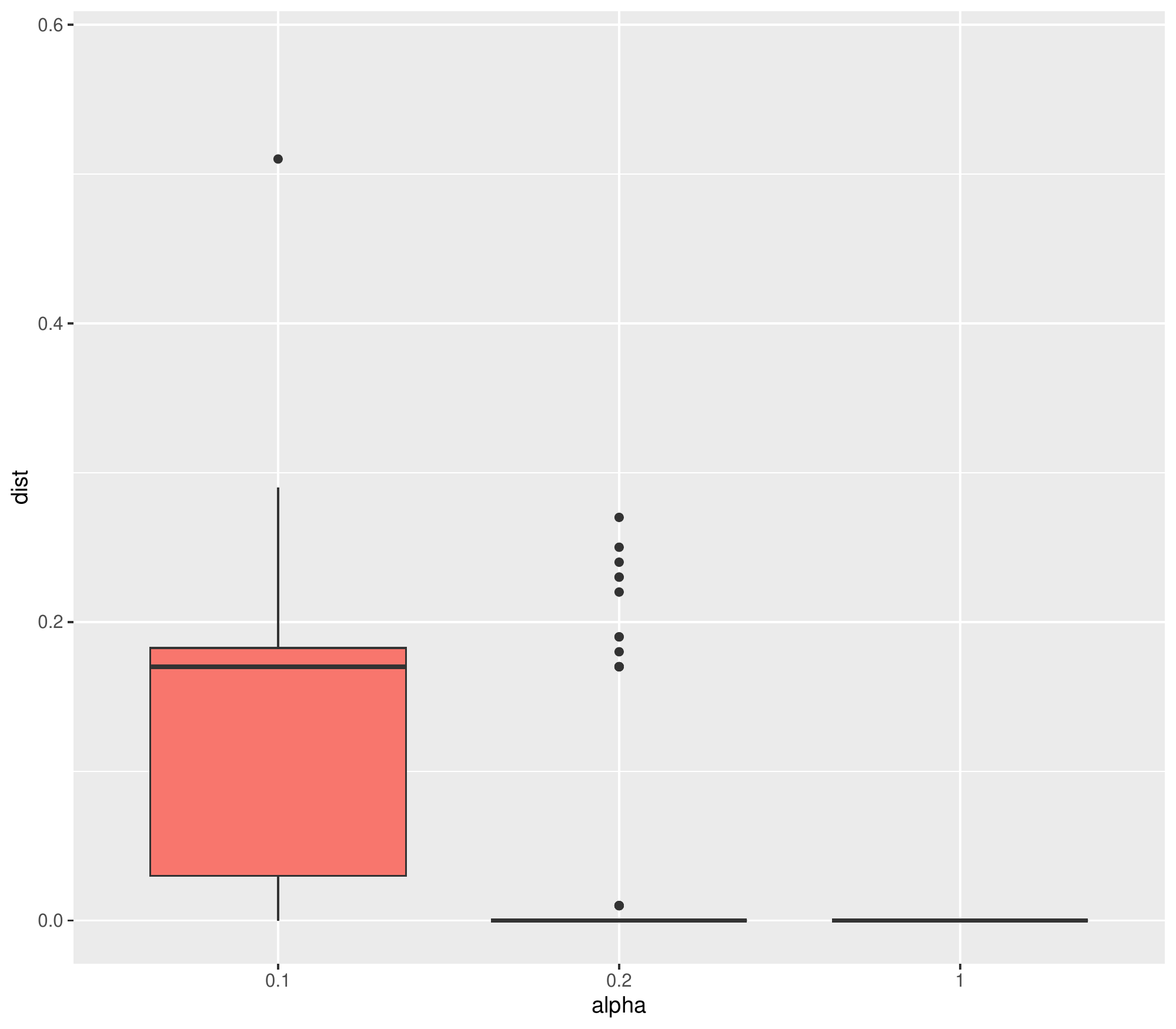}  
\caption{n=1000, d = 100}
\end{subfigure}
\caption{Boxplots of the Hausdorff distance for various combinations $(n,d)$. Color represents the complexity. }
\label{fig:Haus}
\end{figure} 
The same trend can be observed for the segmentation results presented in Figure \ref{fig:Haus}. We observe that the quality of the segmentation part (and hence the breakpoints localization) is correlated with the clustering performance aforementioned. 

\iffalse
\begin{figure}[!ht]
\begin{center}
\begin{tabular}{c|c|c}
&$d=50$&$d=100$\\
\hline
&&\\
\rotatebox{90}{\!\!\!\!\!$n=100$}&\begin{minipage}{0.4\textwidth}
\includegraphics[width=\linewidth]{Fig/TL_n_100_d_50.pdf}
\end{minipage}&
\begin{minipage}{0.4\textwidth}
\includegraphics[width=\linewidth]{Fig/TL_n_100_d_100.pdf}
\end{minipage}\\
%&\\
\hline
&&\\
%&\\
\rotatebox{90}{\!\!\!\!\!$n=1000$}&\begin{minipage}{0.4\textwidth}
\includegraphics[width=\linewidth]{Fig/TL_n_1000_d_50.pdf}
\end{minipage}&
\begin{minipage}{0.4\textwidth}
\includegraphics[width=\linewidth]{Fig/TL_n_1000_d_100.pdf}
\end{minipage}\\
\end{tabular}
\end{center}
\caption{Boxplots of the Hausdorff distance following the number of curves (rows), observations (columns) and difficulties (colors). }
\label{fig:Haus}
\end{figure} 
\fi

\begin{figure}[!ht]
\begin{center}
\begin{tabular}{c|c|c}
&$d=50$&$d=100$\\
\hline
&&\\
\rotatebox{90}{\!\!\!\!\!$n=100$}&\begin{minipage}{0.44\textwidth}
\includegraphics[width=\linewidth]{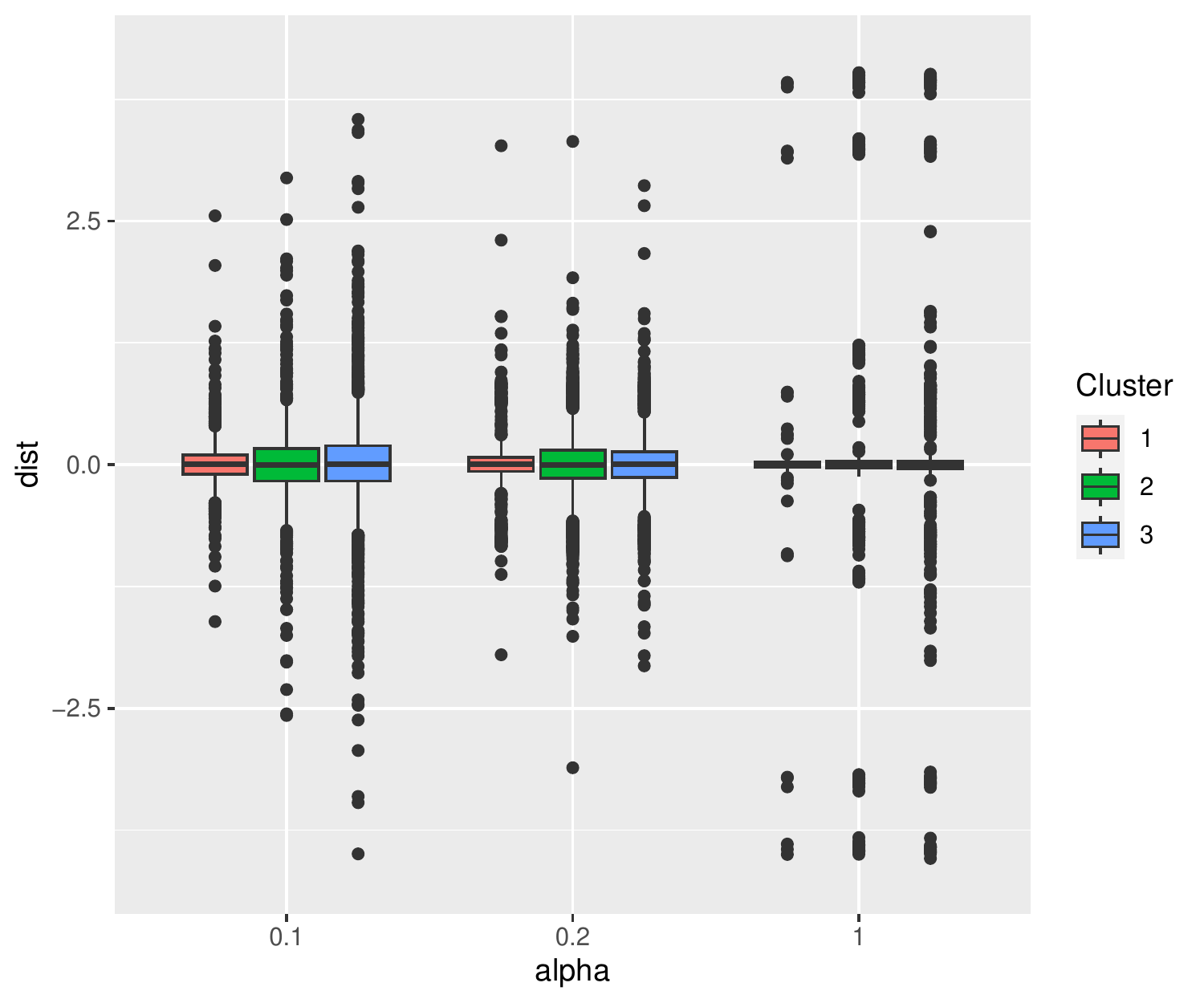}
\end{minipage}&
\begin{minipage}{0.44\textwidth}
\includegraphics[width=\linewidth]{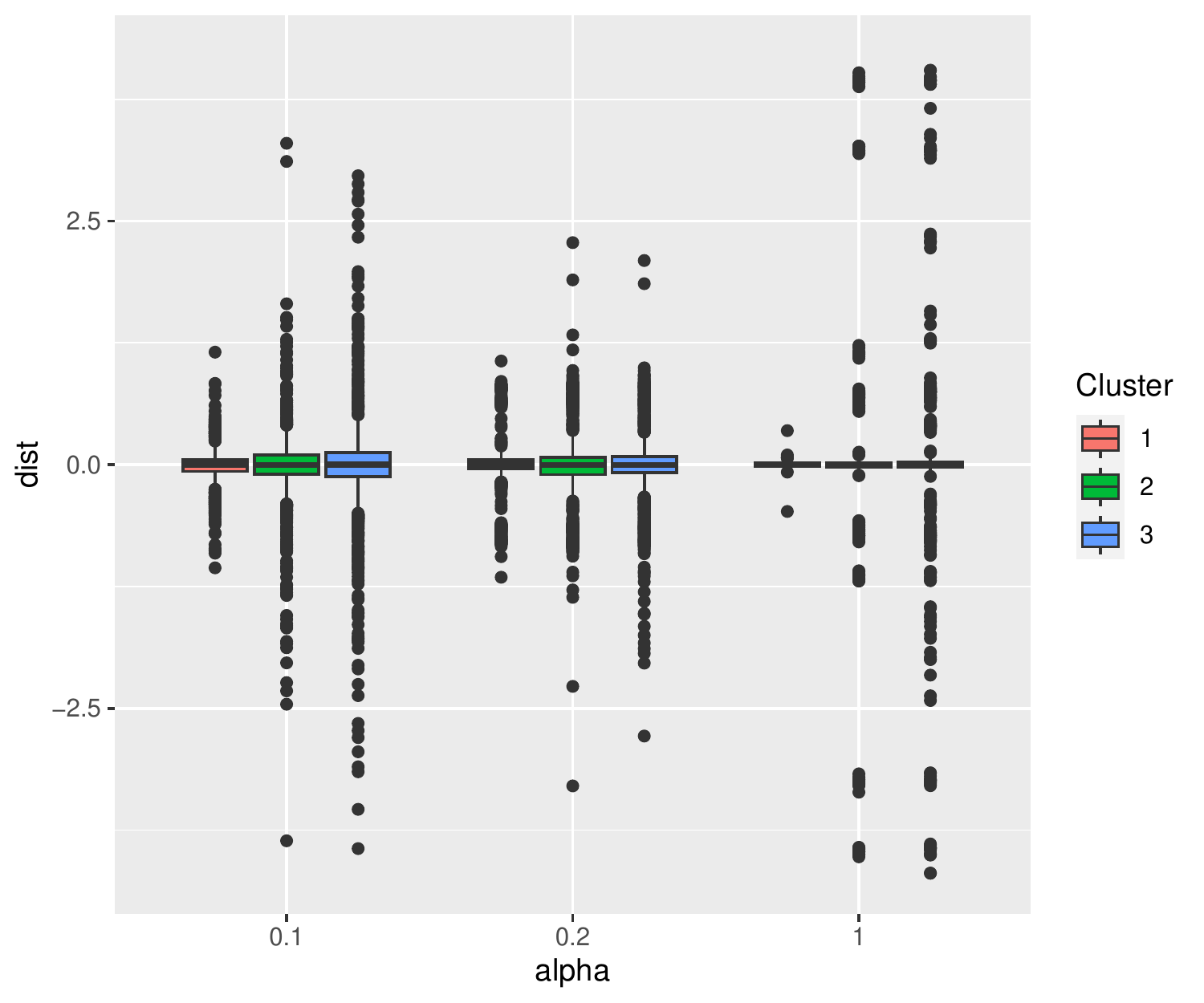}
\end{minipage}\\
%&\\
\hline
&&\\
%&\\
\rotatebox{90}{\!\!\!\!\!$n=1000$}&\begin{minipage}{0.44\textwidth}
\includegraphics[width=\linewidth]{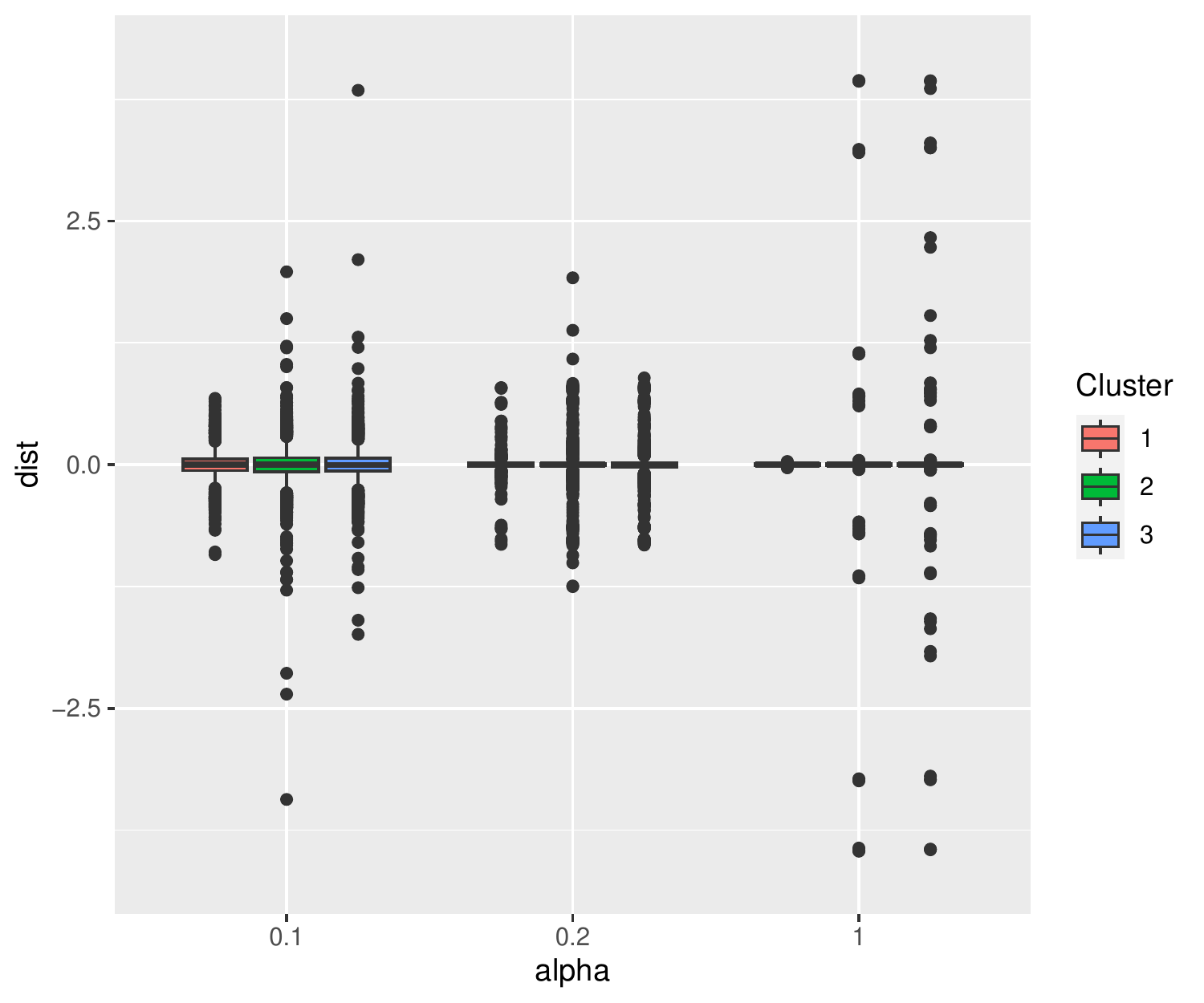}
\end{minipage}&
\begin{minipage}{0.44\textwidth}
\includegraphics[width=\linewidth]{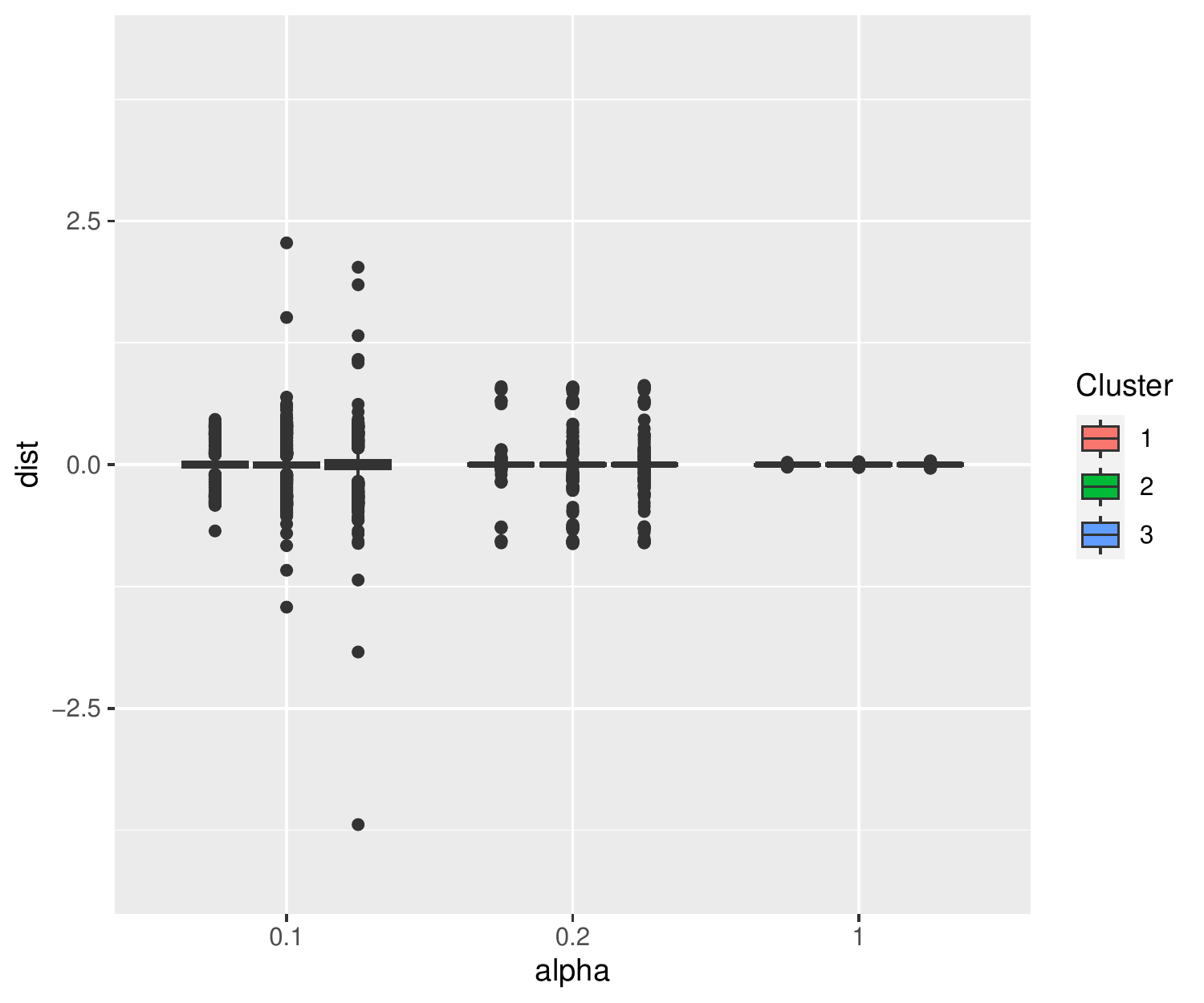}
\end{minipage}\\
\end{tabular}
\end{center}
\caption{Boxplots of the difference $\widehat{\mu}_{k\ell}$ and $\mu^\star_{k\ell}$ following the number of curves (rows), observations (columns), difficulties (x-axis of each graphic) and cluster (colors).}
\label{fig:TCL:Mu}
\end{figure} 

Finally, Figure \ref{fig:TCL:Mu} shows the performance of our approach on the estimation of the model parameters. First, we recover the consistency results stated in corollary~\ref{cor:consistency}. We also observe that the parameters from cluster one are better estimated than the ones from cluster three. This comes as no surprise as the number of breakpoints is less for cluster one, resulting in more observations to perform the estimation of the parameters. 

Next, we present comparative results between MixSeg and two simple baselines: a model that only performs clustering and a model that focuses on the segmentation.

\subsection{\Cor{Comparison with independent counterparts}}
\Cor{\textbf{Simple mixture model}}\\
\Cor{\noindent First, we run a simple mixture model following $\mathcal{N}(\mu_{k,r},\sigma_r^2)$ with at least 10 random initializations (and maximum 100 if some clusters are empty). ARI is reported in Table~\ref{tab:ARI:comp}. Our approach is referred to as MixSeg while the mixture model alone is referred to as SimpleMix. We note that on the most simple, i.e., configuration $\alpha=1$, both approaches perform similarly. However, for the two other cases, $\alpha=0.2$ and $0.1$, MixSeg significantly outperforms the simple mixture model, with ARI and NCE (see Appendix) that are 2 to 3 times better.}
\begin{table}[t]
    \centering
    \caption{ARI for different settings: $(n,d)$ in row and $\alpha$ in columns with mean and standard deviation.}
    \label{tab:ARI:comp}
    \begin{tabular}{ccccc}
    \toprule
    \multicolumn{5}{c}{ARI $\uparrow$}\\
         \midrule
         Setting&Model&0.1&0.2&1\\\midrule
 \multirow{2}{*}{(100,50)}&MixSeg&\textbf{0.13 (0.15)}&\textbf{0.61 (0.15)}&1 (0)\\
 &SimpleMix&0.05 (0.07)&\textbf{0.22 (0.17)}&0.98 (0.09)\\\midrule
 \multirow{2}{*}{(100,100)}&MixSeg&\textbf{0.32 (0.19)}&\textbf{0.76 (0.16)}&1 (0)\\
 &SimpleMix&0.13 (0.11)&0.39 (0.17)&0.99 (0.08)\\\midrule
 \multirow{2}{*}{(1000,50)}&MixSeg&\textbf{0.33 (0.13)}&\textbf{0.76 (0.10)}&1 (0)\\
 &SimpleMix&0.08 (0.07)&0.29 (0.14)&1 (9e-04)\\\midrule
 \multirow{2}{*}{(1000,100)}&MixSeg&\textbf{0.55 (0.10)}&\textbf{0.93 (0.05)}&1 (0)\\
 &SimpleMix&0.17 (0.12)&0.41 (0.14)&1 (0)\\
 \bottomrule
    \end{tabular}
\end{table}
\medskip 
\noindent\Cor{\textbf{ Simple change point detection}}\\
\Cor{\noindent For this part, let us assume that the number of change points to be detected is 5. This number corresponds to the total change points across all clusters. Looking at Figure 3, the first cluster has a single change point, while the second has two, and the last has three, one of which is common with the first cluster. For $\alpha$ and $n,d$ we consider the same scenario as previously. Hausdorff distances are presented in Figure~\ref{fig:Haus:Comp_5}, where the simple change point detection approach is referred to as SimpleSeg. We can see that the performance on the change point detection part for our approach is strongly correlated with the clustering performance (see Table \ref{tab:ARI:comp}). In particular, for $n,d$ and $\alpha$ relatively small, we observe that SimpleSeg achieves better results than our approach. For this scenario, the clustering results add noise to the change point detection part. However, as $n,d$ increases or when $\alpha=1$, the clustering results are better and enable our approach to achieve better or comparable results on change point detection.}

\begin{figure}[!ht]
\centering
\begin{subfigure}[b]{0.23\textwidth}
\includegraphics[width=\linewidth]{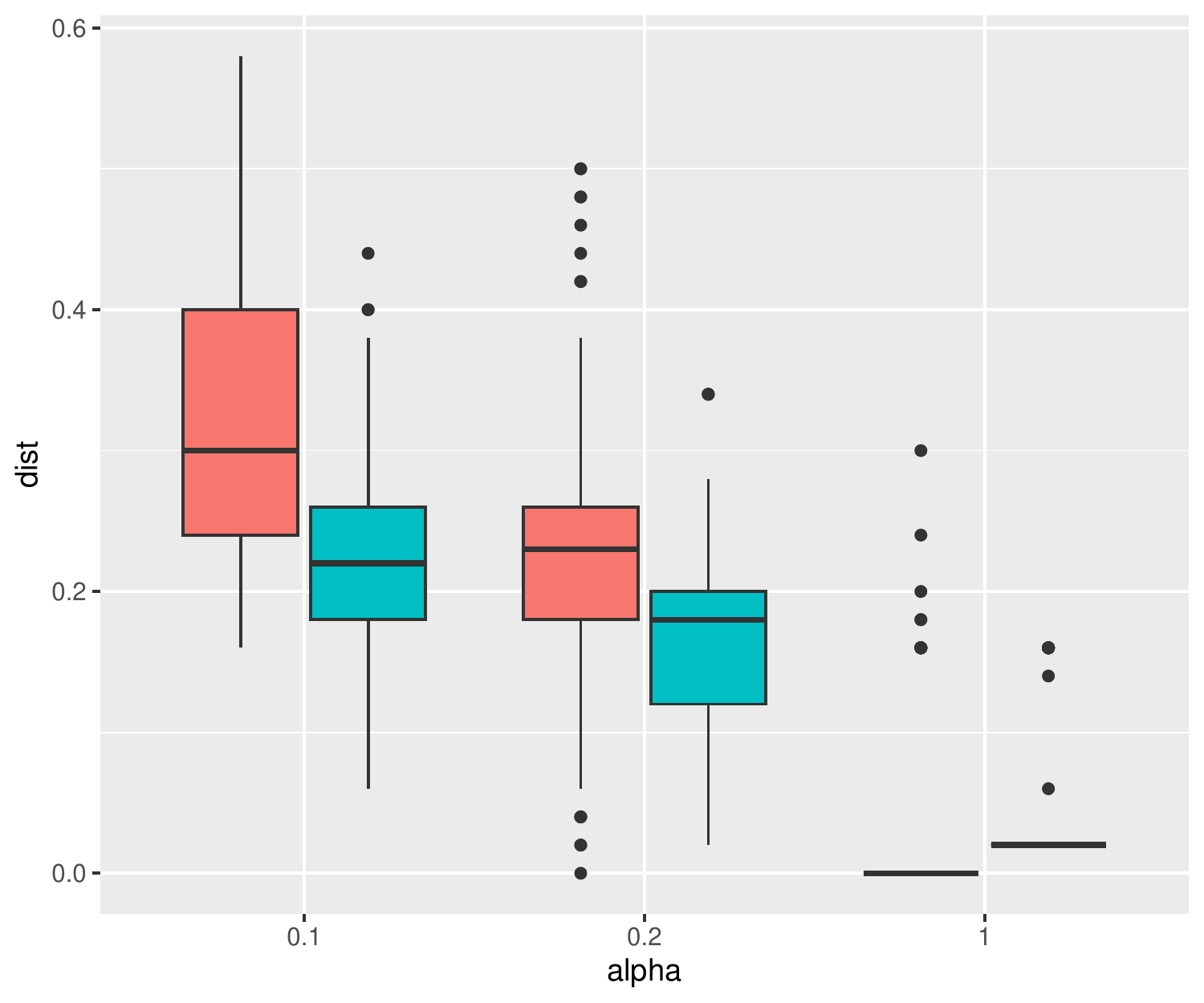}
\caption{\small$n$=100, $d$=50}
\end{subfigure}
\begin{subfigure}[b]{0.23\textwidth}
\includegraphics[width=\linewidth]{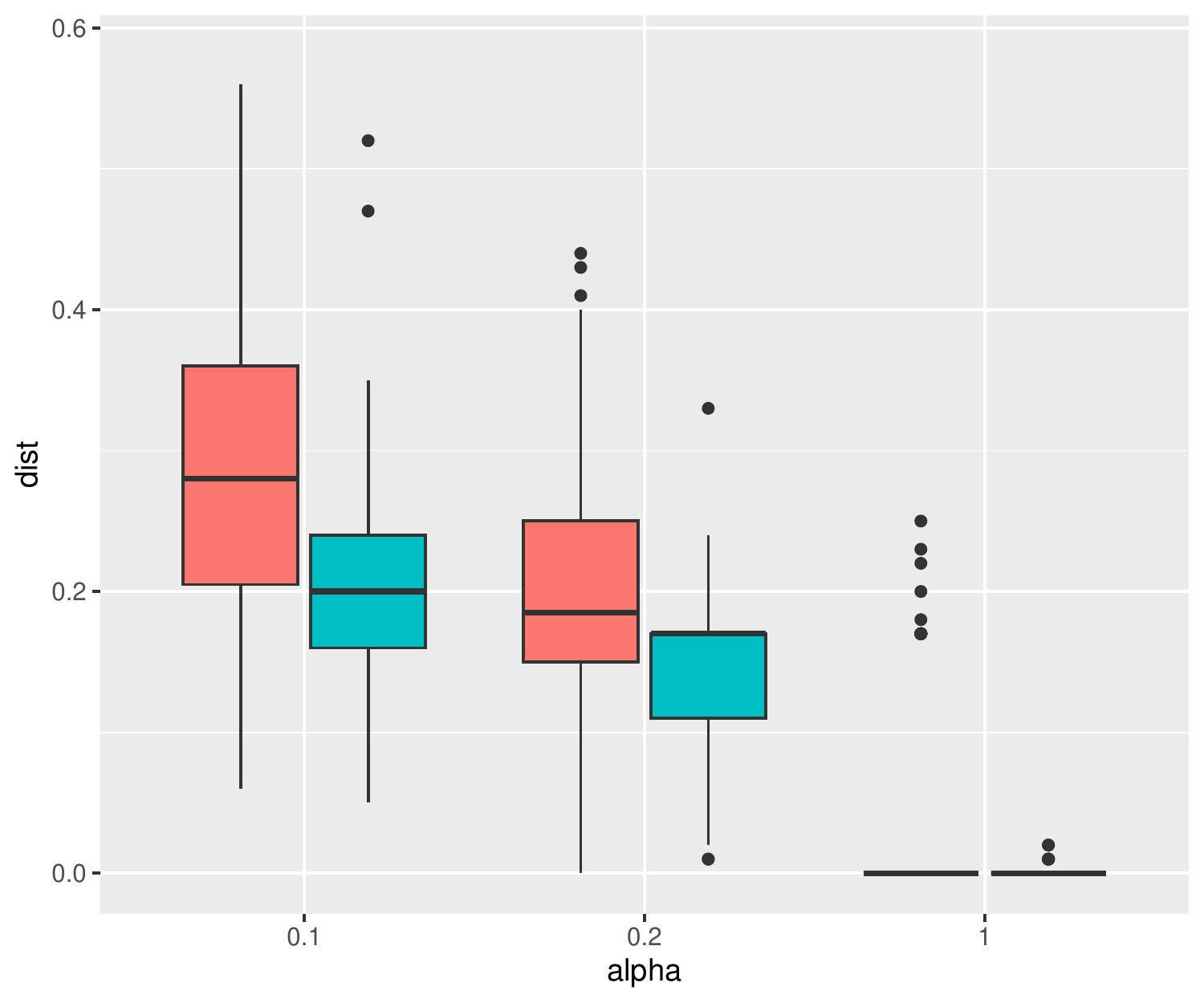}
\caption{\small$n$=100, $d$=100}
\end{subfigure}
\begin{subfigure}[b]{0.23\textwidth}
\includegraphics[width=\linewidth]{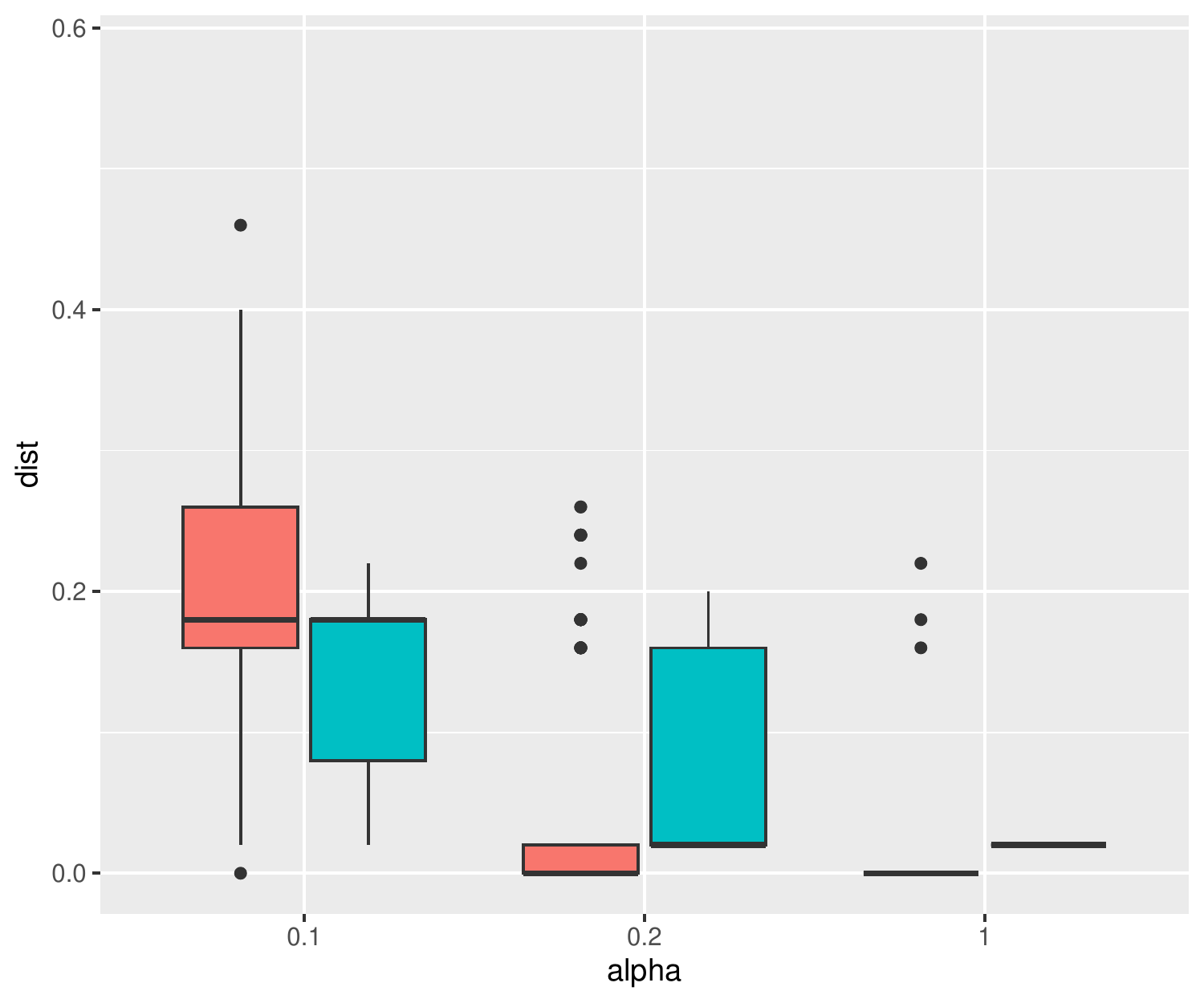}
\caption{\small$n$=1000, $d$=50}
\end{subfigure}
\begin{subfigure}[b]{0.267\textwidth}
\includegraphics[width=\linewidth]{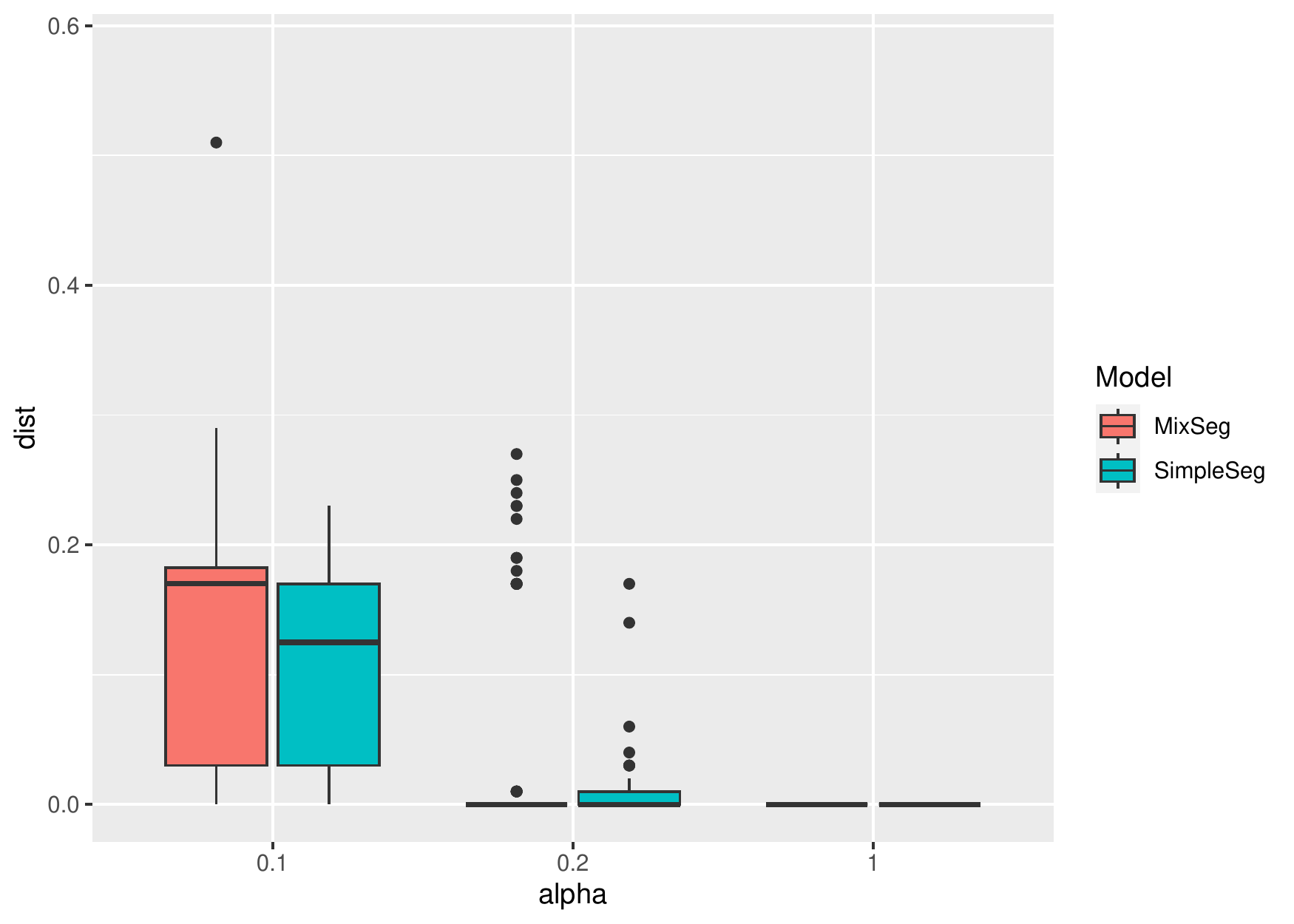}
\caption{\small$n$=1000, $d$=100}
\end{subfigure}
\caption{Boxplots of the Hausdorff distance varying the number of curves, the number of observations and difficulties (colors). }
\label{fig:Haus:Comp_5}
\end{figure}

\subsection{\Cor{Computational time}}
\Cor{In order to evaluate the time complexity of MixSeg, we follow the experimental protocol described a the beginning of this section and increase $n$ and $d$ by $50$ at each step. For each configuration $(n,d,\alpha)$, we simulate 20 matrices and for each matrix, we run our method 5 times.  Running times are recorded with the~\texttt{microbenchmark} package (version 1.4.10; see~\cite{olaf2021microbenchmark}) and simulation plans are launched on a cluster\footnote{Cluster \texttt{Luke44} -- 28 cores, 128Go RAM, GPU 2xK40m, 2xIntel Xeon E5-2680 2.40 GHz more information \url{https://scalde.gricad-pages.univ-grenoble-alpes.fr/web/pages/presentation.html}}. Results are presented in Figure \ref{fig:time:n} and consistent with Proposition \ref{prop:complexity}}. 

\begin{figure}[!ht]
    \centering
    \includegraphics[width=\linewidth]{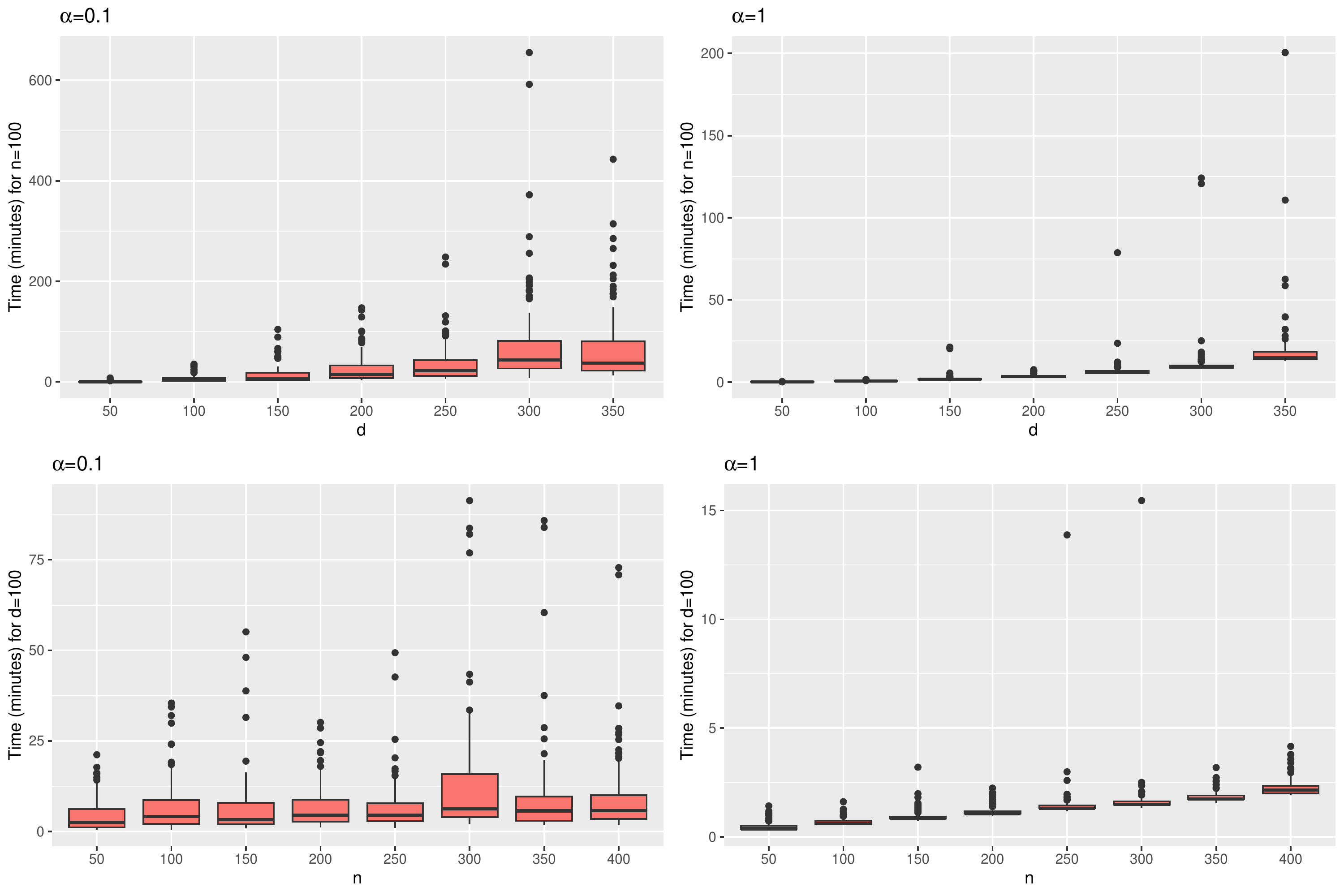}
    \caption{Computational time (in minutes) of $n$ fixed and $d$ varying (top line) and for $d$ fixed and varying $n$ (bottom line). We present results for $\alpha=0.1$ and $\alpha=1$.}
    \label{fig:time:n}
\end{figure}
%\VB{Pour évaluer le temps, j'ai utilisé le même protocole en prenant $n$ et $d$ allant de 50 en 50 (je ne sais pas encore où je vais m'arrêter, cela dépendra un peu du moment où on rendre la nouvelle version mais maximum 500). Pour chaque configuration $(n,d,\alpha)$, j'ai simulé 20 matrices et pour chaque matrice, j'ai fait 5 lancements et j'ai calculé le temps avant d'avoir au moins un résultat sans classes vides (c'est pour ça que le cas compliqué met plus de temps que le cas simple). Les résultats sont mis sur les figures~\ref{fig:time:d} et~\ref{fig:time:n}. For this, the package \texttt{microbenchmark} (version 1.4.10; see~\cite{olaf2021microbenchmark}) with the function \texttt{microbenchmark} is used and plans are launched on a cluster\footnote{Cluster \texttt{Luke44}, 28 cores, 128Go RAM, GPU 2xK40m, 2xIntel Xeon E5-2680 2.40 GHz; more informations on the url \url{https://scalde.gricad-pages.univ-grenoble-alpes.fr/web/pages/presentation.html}}.}

\section{Real Data Analysis}
\label{sec:real_data}
We propose to apply our model to analyze electricity consumption using the Enedis Open Data Set\footnote{available at \href{https://data.enedis.fr/pages/accueil/?id=init}{https://data.enedis.fr/pages/accueil/?id=init}}. We focus on the year 2020 (52 weeks), corresponding to the outburst of the COVID-19 pandemic.
We built 984 observations by cross-referencing information on the type of contract subscribed to, the customer profile and the region of France. Out of these 984 observations, we removed those with missing values to obtain a final population of $889$ individuals. The curves are observed in kW every 30 minutes. We have chosen to analyze the curves by considering the week as a time unit of interest, hence we project those curves onto the Haar basis with $p=42$ ($d=52$ weeks).

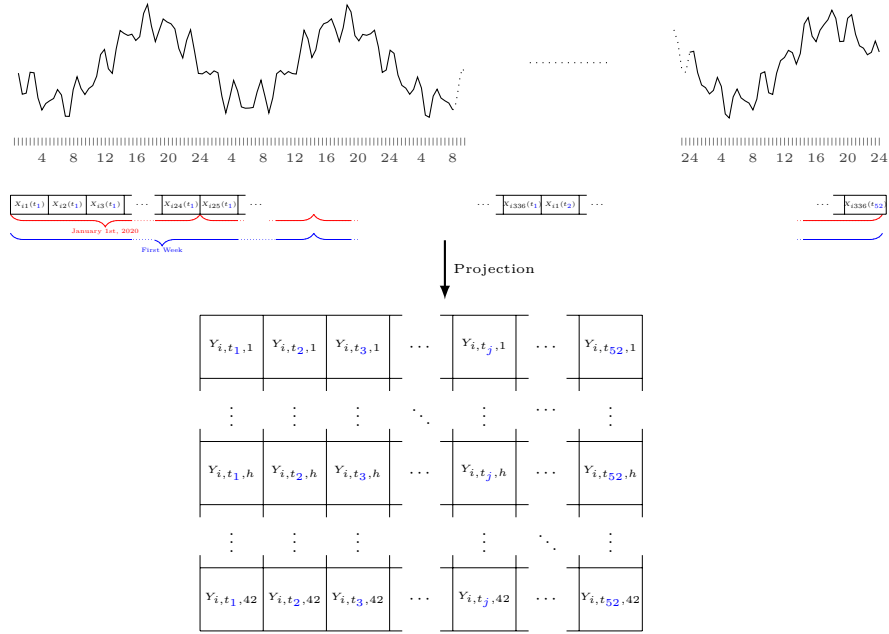
\begin{figure}[t]
    \centering
    \begin{tikzpicture}[scale=1.05]
%%%% Curve
\draw[domain=0.1:5.6, samples=112] plot (\x, {-sin(\x/2.4*2*pi r)/2+cos(\x*7*pi r)/6-cos(\x*11*pi r)/10}) ;
\draw[domain=5.6:5.75, samples=6,dotted] plot (\x, {-sin(\x/2.4*2*pi r)/2+cos(\x*7*pi r)/6-cos(\x*11*pi r)/10}) ;
\draw[dotted] (6.575,0) -- (7.575,0);
\draw[domain=5.4:5.6, samples=5,dotted] plot (\x+3, {-sin((\x-12.8)/2.4*2*pi r)/2+cos((\x-12.8)*7*pi r)/6-cos((\x-12.8)*11*pi r)/10}) ;
\draw[domain=5.6:8, samples=49] plot (\x+3, {-sin((\x-12.8)/2.4*2*pi r)/2+cos((\x-12.8)*7*pi r)/6-cos((\x-12.8)*11*pi r)/10}) ;
%%% Scale
%% Begin
\foreach \t in {0.05,0.1,...,5.8}
    {\draw[opacity=0.5] (\t,-1.05) -- (\t,-0.95);}
\foreach \d in {0,2.4}{
    \foreach \t in {4,8,...,24}
        {\draw[opacity=0.75] (\t/10+\d,-1.05) node[below]{\tiny{\t}};}
        }
\foreach \t in {4,8}
    {\draw[opacity=0.75] (\t/10+4.8,-1.05) node[below]{\tiny{\t}};}
%% End
\foreach \t in {5.5,5.55,...,8.05}
    {\draw[opacity=0.5] (\t+3,-1.05) -- (\t+3,-0.95);}
\foreach \t in {4,8,12,16,20,24}
    {\draw[opacity=0.75] (\t/10+8.6,-1.05) node[below]{\tiny{\t}};}
\foreach \t in {24}
    {\draw[opacity=0.75] (\t/10+6.2,-1.05) node[below]{\tiny{\t}};}
%% Days
% \foreach \x in {0,2.4,4.8,8.6,11}{
%     \draw[red,dashed,opacity=0.5] (\x,-1) -- (\x,1);}
% \draw[red,decorate,decoration={brace,amplitude=10pt,mirror}] (0,-1.5) -- (2.4,-1.5) node [midway,below=8pt] {\tiny January 1st, 2020};
% \draw[red,decorate,decoration={brace,amplitude=10pt,mirror}] (2.4,-1.5) -- (4.8,-1.5) node [midway,below=8pt] {\tiny January 2nd, 2020};
% \draw[red,decorate,decoration={brace,amplitude=10pt,mirror}] (4.8,-1.5) -- (7.2,-1.5);
% \fill[white] (5.6,-1.5) rectangle (8.6,-2);
% \draw[red,decorate,decoration={brace,amplitude=10pt,mirror},dotted] (4.8,-1.5) -- (7.2,-1.5);
% \draw[red,decorate,decoration={brace,amplitude=10pt,mirror},dotted] (6.2,-1.5) -- (8.6,-1.5);
% \fill[white] (5.75,-1.5) rectangle (8.4,-2);
% \draw[red,decorate,decoration={brace,amplitude=10pt,mirror}] (8.6,-1.5) -- (11,-1.5) node [midway,below=8pt] {\tiny December 31st, 2020};
% %% Arrow
% \draw[opacity=0.5] (0,-1) -- (5.9,-1);
% \draw[opacity=0.5,dotted] (8.35,-1) -- (6,-1);
% %\draw[opacity=0.5,dotted] (8.5,-1) -- (6.25,-1);
% \draw[->,>=latex,opacity=0.5] (8.35,-1) -- (11.5,-1);
%%%%%%%%%%%%%
% Case
%%%%%%%%%%%%%
%%% Day, week
% Day
\scalebox{0.48}{
\draw[red,decorate,decoration={brace,amplitude=10pt,mirror}] (0,-4) -- (5,-4) node [midway,below=8pt] {\tiny January 1st, 2020};
\fill[white] (3.2,-4) rectangle (3.8,-4.25);
\draw[red,decorate,decoration={brace,amplitude=10pt,mirror},dotted] (0,-4) -- (5,-4);
\draw[red,decorate,decoration={brace,amplitude=10pt,mirror}] (5,-4) -- (8,-4);
\fill[white] (6,-4) rectangle (7,-4.5);
\draw[red,decorate,decoration={brace,amplitude=10pt,mirror},dotted] (5,-4) -- (8,-4);
\fill[white] (6.2,-4) rectangle (6.8,-4.5);
\draw[red,decorate,decoration={brace,amplitude=10pt,mirror}] (8,-4) -- (23,-4);
\fill[white] (9,-4) rectangle (20.9,-4.5);
\draw[red,decorate,decoration={brace,amplitude=10pt,mirror},dotted] (8,-4) -- (23,-4);
\fill[white] (9.2,-4) rectangle (20.7,-4.5);
% Week
\draw[blue,decorate,decoration={brace,amplitude=10pt,mirror}] (0,-4.5) -- (8,-4.5) node [midway,below=8pt] {\tiny First Week};
\fill[white] (3.2,-4.5) rectangle (3.8,-4.75);
\fill[white] (6,-4.5) rectangle (7,-4.75);
\draw[blue,decorate,decoration={brace,amplitude=10pt,mirror},dotted] (0,-4.5) -- (8,-4.5);
% end
\draw[blue,decorate,decoration={brace,amplitude=10pt,mirror}] (8,-4.5) -- (23,-4.5);
\fill[white] (9,-4.5) rectangle (20.9,-4.75);
\draw[blue,decorate,decoration={brace,amplitude=10pt,mirror},dotted] (8,-4.5) -- (23,-4.5);
\fill[white] (9.2,-4.5) rectangle (20.7,-5);
%%% Begin
\draw (0,-3.5) grid[ystep=0.5,xstep=1] (3.2,-4);
\foreach \h in {1,2,3}{
    \draw (\h-0.5,-3.75) node{\tiny$X_{i\h}(t_{\textcolor{blue}{1}})$};
    }
%%% End first day
\draw (3.5,-3.75) node{$\cdots$};
\draw (3.8,-3.5) grid[ystep=0.5,xstep=1] (6.2,-4);
\draw (4.5,-3.75) node{\tiny$X_{i24}(t_{\textcolor{blue}{1}})$};
\draw (5.5,-3.75) node{\tiny$X_{i25}(t_{\textcolor{blue}{1}})$};
\draw (6.5,-3.75) node{$\cdots$};
%%% End first week
\draw (12.5,-3.75) node{$\cdots$};
\draw (12.8,-3.5) grid[ystep=0.5,xstep=1] (15.2,-4);
\draw (13.5,-3.75) node{\tiny$X_{i336}(t_{\textcolor{blue}{1}})$};
\draw (14.5,-3.75) node{\tiny$X_{i1}(t_{\textcolor{blue}{2}})$};
\draw (15.5,-3.75) node{$\cdots$};
%%% End 
\draw (21.45,-3.75) node{$\cdots$};
\draw (21.7,-3.5) grid[ystep=0.5,xstep=1.1] (23.1,-4);
\draw (22.55,-3.75) node{\tiny$X_{i336}(t_{\textcolor{blue}{52}})$};}
%%%%%%%%%%%%%%%%%%%%%
% Transformation
%%%%%%%%%%%%%%%%%%%%%
\begin{scope}[shift={(0,4)}]
\scalebox{0.8}{
%% Grid
% Haut
\draw (3,-8) grid (6.2,-9.2);
\draw (6.8,-8) grid (8.2,-9.2);
\draw (8.8,-8) grid (10,-9.2);
% Milieu
\draw (3,-9.8) grid (6.2,-11.2);
\draw (6.8,-9.8) grid (8.2,-11.2);
\draw (8.8,-9.8) grid (10,-11.2);
% Bas
\draw (3,-11.8) grid (6.2,-13);
\draw (6.8,-11.8) grid (8.2,-13);
\draw (8.8,-11.8) grid (10,-13);
%% Haut gauche
\foreach \j in {1,2,3}{
    \foreach \h in {1}{
        \draw (\j+2.5,-7.5-\h) node{\tiny$Y_{i,t_{\textcolor{blue}{\j}},\h}$};
    }
}
%% Points
\foreach \j in {1,2,3}{
    \draw (\j+2.5,-9.5) node{$\vdots$};
}
\foreach \h in {1}{
    \draw (6.5,-7.5-\h) node{$\cdots$};
}
%% Haut milieu
\foreach \h in {1}{
        \draw (7.5,-7.5-\h) node{\tiny$Y_{i,t_{\textcolor{blue}{j}},\h}$};
}
%% Points
\draw (7.5,-9.5) node{$\vdots$};
\foreach \h in {1,2}{
    \draw (8.5,-7.5-\h) node{$\cdots$};
}
%% Haut droite
\foreach \h in {1}{
        \draw (9.5,-7.5-\h) node{\tiny$Y_{i,t_{\textcolor{blue}{52}},\h}$};
}
%% Points
\draw (9.5,-9.5) node{$\vdots$};
%% Milieu gauche
\foreach \j in {1,2,3}{
    \draw (\j+2.5,-10.5) node{\tiny$Y_{i,t_{\textcolor{blue}{\j}},h}$};
}
%% Points
\foreach \j in {1,2,3}{
    \draw (\j+2.5,-11.5) node{$\vdots$};
}
\draw (6.5,-10.5) node{$\cdots$};
%% Milieu milieu
\draw (7.5,-10.5) node{\tiny$Y_{i,t_{\textcolor{blue}{j}},h}$};
%% Points
\draw (7.5,-11.5) node{$\vdots$};
\draw (8.5,-10.5) node{$\cdots$};
%% Milieu droite
\draw (9.5,-10.5) node{\tiny$Y_{i,t_{\textcolor{blue}{52}},h}$};
%% Points
\draw (9.5,-11.5) node{$\vdots$};
%% Bas gauche
\foreach \j in {1,2,3}{
    \draw (\j+2.5,-12.5) node{\tiny$Y_{i,t_{\textcolor{blue}{\j}},42}$};
}
%% Points
\draw (6.5,-12.5) node{$\cdots$};
%% Milieu milieu
\draw (7.5,-12.5) node{\tiny$Y_{i,t_{\textcolor{blue}{j}},42}$};
%% Points
\draw (8.5,-12.5) node{$\cdots$};
%% Milieu droite
\draw (9.5,-12.5) node{\tiny$Y_{i,t_{\textcolor{blue}{52}},42}$};
%%% Point diagonal
\draw (6.5,-9.5) node{$\ddots$};
\draw (8.5,-11.5) node{$\ddots$};}
%%%%%%%%%%%%%%%
% Flèches transisitions
%%%%%%%%%%%%%%%
%\draw[thick,->,>=latex] (5.5,-2) -- (5.5,-3) node[midway,right]{Notations};
\draw[thick,->,>=latex] (5.5,-6.25) -- (5.5,-7) node[midway,right]{\tiny Projection};
\end{scope}
    \end{tikzpicture}
    \vspace*{-2cm}
    \caption{Illustration of the real data, for an individual: curves (top), vectorial notation (middle) and matrix projection (bottom).}
    \label{fig:enter-label}
\end{figure}

\paragraph{Remark} We first ran our approach on such data. The result of the model selection (see next paragraph) resulted in 2 clusters. After having analyzed these results, we found that the model isolated all profiles associated with public lighting contracts, which are not subject to a notion of energy consumption behavior in the sense that interests us.
For this reason, we chose to discard these observations in the following, which give us a final population of $791$ observations.

\subsection{Model Selection}
When dealing with real data, we have no access to the \textit{true} number of clusters or the number of breakpoints. We therefore propose to adapt a model selection strategy proposed in \cite{zhang2007modified} relying on the Bayesian information criterion (BIC) (\cite{Schwarz_1978}). We obtain the following criterion for our model:
\begin{eqnarray*}
\text{BIC}(K, \mathbf{L})&=&\max_{\btheta,\bT}\text{lik}\left(\bY;K,\bT,\btheta\right)-\underbrace{\frac{K-1}{2}\log (n)}_{\text{for }\bpi}\\
&&\underbrace{-\frac{1}{2}\sum_{k=1}^K\left[3p(\Lk+1)\log(ndp)+\sum_{\ell=0}^{\Lk}\log\left(\frac{\widehat{T}_{k,\ell+1}-\widehat{T}_{k\ell}}{d}np\right)\right]}_{\text{for }\bmu\text{ and }\bT}\\
&&\underbrace{-\frac{K}{2}\log(ndp)}_{\text{for }\bsigma}.\\
\end{eqnarray*}
The general form of the penalty in this criterion allow us to account for the specificities of all parameters.

Since an exhaustive exploration of the number of clusters and breakpoints is not possible, we adapt the \texttt{bikm1} strategy proposed by \cite{robert2021bikm1}. Given a reference configuration (the current state of the model), we proceed as follows: 
\begin{itemize}
    \item \textbf{Backward Search}: we remove a cluster (K possible options).  For each of these options, we make 10 random initializations as well as a random distribution of the observations of the deleted cluster in the remaining ones. 
    \item \textbf{Forward Search}: we add a cluster with 1 breakpoint and make 10 random initializations. 
    \item \textbf{Number of breakpoints}: we proceed with the same principles (backward and forward searches) for the number of breakpoints (with $K$ fixed). 
\end{itemize}

In the end, we obtain $K=3$ clusters and the number of breakpoints within each cluster is given by $L_1=1$, $L_2=2$ and $L_3=6$. Figure \ref{fig:rep-cluster} shows the distribution of our observations within clusters and the locations of the different breakpoints.
\begin{figure}
    \centering
    \includegraphics[width=\linewidth]{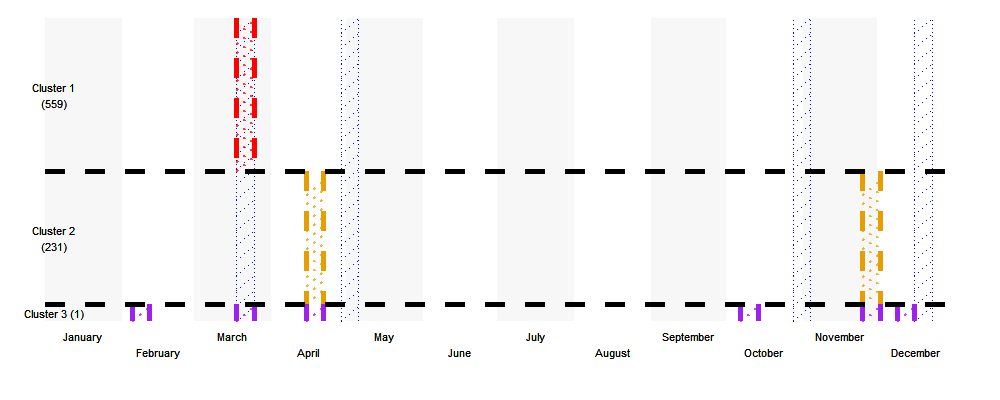}
    \caption{Distribution of the observations across clusters (horizontal dashed lines) and localization of the breakpoints within each cluster (vertical dashed lines). Vertical dot points areas correspond to the start and end weeks of the lockdown periods. }
    \label{fig:rep-cluster} %%%% proportion des clusters : log(1+table(res_ref@z))
\end{figure}
We observe that cluster 3 only contains one observation and hence overfits on the number of breakpoints. Based on the information at our disposal (type of contract, profile, and geographic area), we have not been able to draw interesting conclusions for this particular observation\footnote{We are currently investigating this point with the help of Enedis}. Aside from that particular point, the obtained clusters are essentially shaped to distinguish the profiles built by Enedis (see Table \ref{tab:cluster-profil}). We believe that choosing a less penalizing BIC criterion could reveal less discriminating effects than the profile, such as a regional effect for instance.
%Remarques: très déséquilibté. 1 cluster pour 1 seule observation. (décrire l'observation)/ Overfitting nombre de segment pour ce cluster en particulier. 
%RES2 (+RES5) Auvergne Rhone Alphes (108).

\begin{table}
    \centering
    \caption{Contingency table between clusters and Enedis profiles.}
    \label{tab:cluster-profil}
    \begin{tabular}{c|c||c|c|c}
            \multicolumn{2}{c||}{}&ENT/PRO&RES1, 3 or 4 & Other RES\\\hline\hline
  \multirow{3}{*}{\rotatebox{90}{$\!$\small{Cluster}}}&1&     341&       36&       182\\\cline{2-5}
  &2&       0&       0&        231\\\cline{2-5}
  &3&       0&         0&        1\\
    \end{tabular}
\end{table}

\subsection{Results Discussion}
Let us begin by recalling the two periods of lockdown observed in France in 2020: the first one lasted from March $17^{th}$ until May $11^{th}$, while the second one started on October $30^{th}$ and ended on December $15^{th}$. For cluster 1, which is mainly composed of enterprises and professionals, we observe a unique breakpoint that matches with the beginning of the first lockdown. We note that for this cluster no other breakpoints are observed. We can make two assumptions regarding this point: (1) it indicates that the regime change operated by these consumers did not return to \textit{normal} \footnote{here normal refers to prior to the crisis} after the beginning of the crisis; (2) the return to this so-called \textit{normal} regime did not occur at the same moment for all individuals. For the second cluster, corresponding mainly to the Residential profiles, we observe two breakpoints, each appearing during the lockdown period (mid April and end of November). Once again, two hypothesis are in order: (1) a delayed effect of each lockdown or (2) the impact of outdoor temperatures, particularly high in France between mid-April and the end of November. Finally, for the last cluster, the fact that it only contains one observation does not allow us to draw significant conclusions. We can only note that most of the breakpoints matches breakpoints from both cluster 1 and 2. Additional results are provided in Figures \ref{SI:fig:cluster1}, \ref{SI:fig:cluster2} and \ref{SI:fig:cluster3} in the Supplementary.

\section{Conclusion}
\label{sec:conc}
In this paper, we define a novel model to analyze multivariate functional data by performing clustering and segmentation simultaneously. We derive an EM algorithm where the maximization step is carried out by dynamic programming.   From a theoretical point of view, we establish the identifiability and the consistency of the proposed model. We apply this model on synthetic data to control the behavior and validate our theoretical statements. We also demonstrate the usefulness of our model on real electricity consumption data by focusing on the year 2020 corresponding to the outbreak of the COVID pandemic.

This work can be further extended in different directions. On the estimation part, three developments can be considered. To speed up the parameter estimation, we could adapt the pruned dynamic programming algorithm proposed by \cite{rigaill2015pruned} and recently improved by~\cite{maidstone2017optimal}, who proposed to prune the set of candidate change-points rather than looking at all possible cases. In order to circumvent this speed problem, a second alternative could be to replace the current maximization step with a group-lasso procedure for the parameter estimation. For instance \cite{brault2017efficient} proposed to transform the problem into an equivalent estimate of a linear regression whose parameter of interest would be sparse and thus to use the LASSO (Least Absolute Shrinkage and Selection Operator) procedures. The main challenge of these methods being the regularization parameter of the LASSO method and the multiplication of the number of estimated breakpoints. Moreover, in relation to the real observed data, a non parametric extension based on rank statistics (see \cite{brault2018nonparametric}) can be studied.%fthe values coming from the summer would therefore probably be put in separate segments.
Finally, regarding the model selection criterion proposed in the experimental part, a theoretical study of the latter would allow us to define the most appropriate form for the penalty. Indeed, the first results appear to indicate that the proposed version is too penalizing, not allowing to highlight fine-grained information. 
%- pour la partie estimation vitesse - cite rigail / alternative prendre le point de vue du lasso (ref, ref). défi sur l'aspect optimisation (ref). \\
%- amélioration du critère de sélection modèle. Etudier théoriquement afin de trouve la pénalité la plus adéquate possible. 

\appendix

\section{Identifiability}
\label{App:identifiability}
To prove Theorem~\ref{Th:Identifiability}, we need the following lemma:
\begin{lem}[Identifiability for the breakpoints model]\label{lem:Identifiability} We define a $L$-breakpoints model of $p$-dimensional spherical Gaussian of length $d$ with the parameters $0=T_0<T_1<\cdots<T_L<T_{L+1}=d$, $\bmu\in\mathcal{M}_{(L+1)\times p}\left(\setR\right)$ and $\bsigma\in\left(\setR_{\star}^+\right)^{L+1}$ with the following likelihood for every $\bx\in\setR^{d\times p}$:
\[p(\bx;\bT,\bmu,\bsigma)=\prod_{\ell=0}^L\prod_{j=T_{\ell}+1}^{T_{\ell+1}}\prod_{r=1}^pf\left(x_{jr};\mu_{lr},\sigma_{\ell r}\right)\]
where $f$ is the density of an univariate Gaussian distribution. We assume that:
\begin{enumerate}[label=(ID.\alph*)]
    \item For every $\ell\in\{0,\ldots,L\}$,  there exists $r\in\{1,\ldots,p\}$ such that:
    \[\sigma_{\ell r}\neq\sigma_{\ell+1,r} \text{ or }\mu_{\ell r}\neq\mu_{\ell+1,r}.\]
    \item We have $d\geq L+1$.
\end{enumerate}
Under the assumptions (ID.a) and (ID.b), the model is identifiable.
\end{lem}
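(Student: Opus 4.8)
The plan is to exploit the product structure of the likelihood. Under this breakpoints model the $d$ rows $x_{1\cdot},\dots,x_{d\cdot}$ of $\bx$ are mutually independent and, within each row, the $p$ coordinates are independent, so the density factorizes completely. Writing $\ell(j)$ for the segment containing the index $j$ (the unique $\ell$ with $T_{\ell}<j\le T_{\ell+1}$), I would record the density as $p(\bx;\bT,\bmu,\bsigma)=\prod_{j=1}^{d}g_j(x_{j\cdot})$, where $g_j(\cdot)=\prod_{r=1}^{p}f(\cdot_r;\mu_{\ell(j)r},\sigma_{\ell(j)r})$ is the $p$-dimensional Gaussian density carrying the parameters of segment $\ell(j)$. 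Suppose a second triple $(\bT',\bmu',\bsigma')$, also with $L$ breakpoints and satisfying (ID.a), yields the same density; the goal is to show the two triples coincide.

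First I would recover the per-timepoint marginals. Since each $g_j$ and $g_j'$ is a probability density, integrating the identity $\prod_j g_j=\prod_j g_j'$ over all coordinates except those of a fixed row $j_0$ leaves $g_{j_0}=g_{j_0}'$ for every $j_0$. By the classical identifiability of the (product of univariate) Gaussian family, namely the injectivity of the map from the mean and variance to the density, this forces, for every $j$ and every $r$, $\mu_{\ell(j)r}=\mu'_{\ell'(j)r}$ and $\sigma_{\ell(j)r}=\sigma'_{\ell'(j)r}$, where $\ell'(j)$ denotes the segment of $j$ under the primed parametrization. In other words, the ``parameter profile'' $j\mapsto(\mu_{\ell(j)\cdot},\sigma_{\ell(j)\cdot})$ is the same function of $j\in\{1,\dots,d\}$ under both parametrizations.

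Next I would recover the breakpoints from this common profile. Assumption (ID.b) guarantees $d\ge L+1$, so, the breakpoints being strictly increasing integers, each of the $L+1$ segments is non-empty and is therefore witnessed by at least one timepoint. Assumption (ID.a) states that consecutive segments differ in at least one of their mean or variance coordinates, which means the profile is piecewise constant and jumps exactly at each breakpoint: $j$ is a breakpoint if and only if the profile at $j$ differs from the profile at $j+1$. Since the profile is identical under both parametrizations, its set of jump locations is identical, whence $\bT=\bT'$. Once $\bT=\bT'$ the segment assignments agree, $\ell(j)=\ell'(j)$, and restricting the second step to one representative index per segment yields $\mu_{\ell\cdot}=\mu'_{\ell\cdot}$ and $\sigma_{\ell\cdot}=\sigma'_{\ell\cdot}$ for all $\ell$, which finishes the argument.

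I expect the genuinely delicate point to be precisely this recovery of the breakpoints, and it is here that (ID.a) is indispensable: without it two adjacent segments could carry identical Gaussians, the corresponding breakpoint would leave no trace in the profile, and identifiability would fail (any placement of that spurious breakpoint gives the same density). The reduction to the marginals via the product structure and the Gaussian identifiability used along the way are, by contrast, routine.
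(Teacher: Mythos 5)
Your proof is correct, and it reaches the same key intermediate claim as the paper --- that the per-timepoint parameter profile $j\mapsto\bigl(\mu_{\ell(j)\cdot},\sigma_{\ell(j)\cdot}\bigr)$ coincides under both parametrizations --- by a genuinely different route. The paper computes the characteristic function $\Phi_{\bX}(\bxi)$ explicitly and uses the fact that a polynomial in the entries of $\bxi$ vanishes identically only if all its coefficients vanish; you instead integrate out all rows but one and invoke identifiability of the Gaussian family on each marginal, which is more elementary and avoids the computation entirely. For the second stage, the paper runs an induction along the timepoints, matching segments one by one and deriving a contradiction with (ID.a) whenever a breakpoint (or the number of segments) disagrees; your characterization of the breakpoints as exactly the jump locations of the common profile accomplishes the same thing in one stroke and is cleaner. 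One caveat: you fix the competing parametrization to have the same number $L$ of breakpoints, whereas the paper allows $L\neq L'$ and proves $L=L'$ as part of identifiability --- a point that matters downstream, since Theorem~\ref{Th:Identifiability} must distinguish clusters whose numbers of segments $L_k$ differ. Your argument absorbs this case with no extra work (the two jump sets coincide, hence have the same cardinality, forcing $L=L'$), but you should state this rather than assume equal $L$ at the outset. Finally, your reading of the roles of the two assumptions is the right one: the contradictions the paper attributes to ``(ID.b)'' are in fact contradictions with (ID.a), and (ID.b) serves only to guarantee that every segment contains at least one timepoint.
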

\begin{proof}
Let two parameters $\left(L,\bT,\bmu,\bsigma\right)$ and $\left(L',\bT',\bmu',\bsigma'\right)$ satisfying the assumptions~(ID.a) and~(ID.b), and $\bX$ and $\bX'$ the random matrices depending of each parameter. We assume that $\bX$ and $\bX'$ have the same distribution. To prove that the parameters are equal, we use the characteristic function defined for every $\bxi\in\setR^{d\times p}$ by:
\[\Phi_{\bX}\left(\bxi\right)=\Esp{e^{\Scal{i\bxi}{\bX}}}\]
where $\Scal{\cdot}{\cdot}$ is the scalar product and $i$ is the imaginary number, satisfying $i^2=-1$. In our case, we have, for every $\bxi\in\setR^{d\times p}$:
\begin{align*}
\Phi_{\bX}\left(\bxi\right)&=\Esp{e^{\Scal{i\bxi}{\bX}}} =\prod_{j=1}^d\Esp{e^{\Scal{i\bxi_j}{\bX_j}}}\\
&=\prod_{\ell=0}^L\prod_{j=T_{\ell}+1}^{T_{\ell+1}}\prod_{r=1}^p\Esp{e^{\Scal{i\xi_{jr}}{X_{jr}}}} =\prod_{\ell=0}^L\prod_{j=T_{\ell}+1}^{T_{\ell+1}}\prod_{r=1}^p\left(e^{i\xi_{jr}\mu_{\ell r}-\frac{\sigma_{\ell r}^2\xi_{jr}^2}{2}}\right)\\
&=\exp\left[\sum_{\ell=0}^L\sum_{j=T_{\ell}+1}^{T_{\ell+1}}\sum_{r=1}^p\left(i\xi_{jr}\mu_{\ell r}-\frac{\sigma_{\ell r}^2\xi_{jr}^2}{2}\right)\right]\\
&=\exp\left[i\sum_{\ell=0}^L\sum_{r=1}^p\mu_{\ell r}\sum_{j=T_{\ell}+1}^{T_{\ell+1}}\xi_{jr}-\sum_{\ell=0}^L\sum_{r=1}^p\frac{\sigma_{\ell r}^2}{2}\sum_{j=T_{\ell}+1}^{T_{\ell+1}}\xi_{jr}^2\right].
\end{align*}
Let $\bw$ be the vector of $\{0,\ldots,L\}^{d}$ with $w_{j}=\ell$ if and only if the $T_{\ell}+1\leq j\leq T_{\ell+1}$. Then, 
\begin{align*}
\Phi_{\bX}\left(\bxi\right)&=\exp\left[i\sum_{j=1}^d\sum_{r=1}^p\xi_{jr}\mu_{w_{j}r}-\frac{1}{2}\sum_{j=1}^d\sum_{r=1}^p\xi_{jr}^2\sigma_{w_jr}^2\right].
\end{align*}
The distribution of $\bX$ and $\bX'$ are equal if and only if the characteristic functions are equal: for all $\bxi\in\setR^{d\times p}$, we have
\begin{eqnarray*}
&&\!\!\!\!\!\!\!\!\!\Phi_{\bX}\left(\bxi\right)=\Phi_{\bX'}\left(\bxi\right)\\
%&\Leftrightarrow&\exp\left[i\sum_{j=1}^d\sum_{r=1}^p\xi_{jr}\mu_{w_{j}r}-\frac{1}{2}\sum_{j=1}^d\sum_{r=1}^p\xi_{jr}^2\sigma_{w_jr}^2\right]=\exp\left[i\sum_{j=1}^d\sum_{r=1}^p\xi_{j r}\mu'_{w'_{j}r}-\frac{1}{2}\sum_{j=1}^d\sum_{r=1}^p\xi_{j r}^2{\sigma'_{w'_j,r}}^2\right]\\
&\Leftrightarrow&i\sum_{j=1}^d\sum_{r=1}^p\xi_{j r}\mu_{w_{j}r}-\frac{1}{2}\sum_{j=1}^d\sum_{r=1}^p\xi_{j r}^2\sigma_{w_j,r}^2=\\
&&\quad\quad\quad\quad\quad i\sum_{j=1}^d\sum_{r=1}^p\xi_{j r}\mu'_{w'_{j}r}-\frac{1}{2}\sum_{j=1}^d\sum_{r=1}^p\xi_{j r}^2{\sigma'_{w'_j,r}}^2\\
&\Leftrightarrow&i\sum_{j=1}^d\sum_{r=1}^p\xi_{j r}\left(\mu_{w_{j}r}-\mu'_{w'_{j}r}\right)-\frac{1}{2}\sum_{j=1}^d\sum_{r=1}^p\xi_{j r}^2\left(\sigma_{w_jr}^2-{\sigma'}_{w'_jr}^2\right)=0.
\end{eqnarray*}
A polynomial is null if and only all coefficients are null:
\[\text{for every $j\in\{1,\ldots,d\}$, }\left\{\begin{array}{l}
 \text{for every $r\in\{1,\ldots,p\}$, }\mu_{w_{j}r}=\mu'_{w'_{j}r}\\
 \text{ and }\sigma_{w_jr}^2={\sigma'}_{w'_j}^2. 
\end{array}\right.\]
By the assumption (ID.b), we know that there is at least one observation by segment (by definition of $\bT$ and $\bT'$) then, by definition also, $w_1=0=w'_1$ which implies that
\[\text{for every $r\in\{1,\ldots,p\}$, }\mu_{0,r}=\mu'_{0,r}
 \text{ and }\sigma_{0,r}^2={\sigma'}_{0,r}^2.\]
Then, if we assume that $T_1\neq T'_1$ (for example, $T'_1>T_1$), we observe that $w_{T_1+1}=0$ and $w'_{T_1+1}=1$ but we know that:
\begin{eqnarray*}
&&\!\!\!\!\!\!\!\!\!
 \text{for every $r\in\{1,\ldots,p\}$, }\mu_{w_{T_1+1},r}=\mu'_{w'_{T_1+1},r}
 \text{ and }\sigma_{w_{T_1+1}}^2={\sigma'}_{w'_{T_1+1}}^2\\
 &\Leftrightarrow& \text{for every $r\in\{1,\ldots,p\}$, }\mu_{0,r}=\mu'_{1r}
 \text{ and }\sigma_{0,r}^2={\sigma'}_{1r}^2\\
 &\Leftrightarrow& \text{for every $r\in\{1,\ldots,p\}$, }\mu'_{0,r}=\mu'_{1r}
 \text{ and }{\sigma'}_{0,r}^2={\sigma'}_{1r}^2\\
\end{eqnarray*}
by the previous results. This affirmation contradicts the assumption (ID.b), therefore $T_1= T'_1$.
We then continue with $w_{T_1+1}=1=w'_{T_1+1}$ and by an identical reasoning, we show that
\[\text{for every $r\in\{1,\ldots,p\}$, }\mu_{1r}=\mu'_{1r}
 \text{ and }\sigma_{1r}^2={\sigma'}_{1r}^2\]
and so on until the segment $\left[T_{\min{L,L'}}+1;T_{\min{L,L'}+1}\right]$. If $L=L'$, the proof is finished since we prove that all the parameters are identical.
If $L\neq L'$, for example $L>L'$, we observe that $w_{T_{L'+1}+1}=L+1$ (since by the previous reasoning $T_L=T'_{L}$ and $T_{L'}<T_{L'+1}<d$) and $w'_{T_{L'+1}+1}=L'=L$ then, by the same reasoning, this implies:
\begin{eqnarray*}
&&\!\!\!\!\!\!\!\!\!
 \text{for every $r\in\{1,\ldots,p\}$, }\mu_{w_{T_{L'+1}+1},r}=\mu'_{w'_{T_{L'+1}+1},r}\\
 && 
 \text{ and }\sigma_{w_{T_{L'+1}+1},r}^2={\sigma'}_{w'_{T_{L'+1}+1},r}^2\\
 &\Leftrightarrow& \text{for every $r\in\{1,\ldots,p\}$, }\mu_{L'+1,r}=\mu'_{L',r}
 \text{ and }\sigma_{L'+1,r}^2={\sigma'}_{L',r}^2\\
 &\Leftrightarrow& \text{for every $r\in\{1,\ldots,p\}$, }\mu_{L'+1,r}=\mu_{L',r}
 \text{ and }{\sigma}_{L'+1,r}^2={\sigma}_{L',r}^2
\end{eqnarray*}
which again contradicts the assumption~(ID.b).
Then, $L=L'$ and, the parameters being identical, the model is identifiable. 
\end{proof}

To prove Theorem~\ref{Th:Identifiability}, we start by observing that the assumptions~(ID.1) and~(ID.3) imply the assumptions~(ID.a) and~(ID.b) of Lemma~\ref{lem:Identifiability} and, with the assumptions~(ID.2), we have that the distribution of each cluster is unique.
As the image of the distribution functions by any isomorphism defined on the vector subspace generated by the set of distribution functions is a free family in the arrival space, then model is identifiable (see~\cite{droesbeke2013modeles}).

\section{Consistency}
\label{app:Cons}
%\subsection{Régularité de la vraie matrice}
%In the first step, we prove that the true partition given by $\bzs$ is regular enough such that clusters are not too small. 
%La première étape consiste à remarquer que la vraie partition représentée par $\bzs$ est suffisamment régulière pour ne pas avoir de groupes trop petits. Pour cela, nous avons la proposition suivante:

\subsection{Notations and first results}%Espérance et variables d'intérêts}
Let $\bzs$ be the true partition, and  $\bz$ any partition.

We denote $\RK$ the matrix of size $K$ about contingency of intersection of the two clusterings: for all $(k,k')\in\{1\ldots,K\}^2$,
\begin{align}
\RK_{k,k'}=\sum_{i=1}^n\zs_{ik}z_{i,k'}.
\label{RK}
\end{align}
Particularly, remark that marginals give the contingency table: 
\begin{align}
&\forall k\in\{1,\ldots,K\}, \RK_{k,+}=\zs_{+,k} \label{Rk+}\\
&\forall k'\in\{1,\ldots,K\}, \RK_{+,k'}=\zs_{k',+}\label{R+k'}.
\end{align}

Also, if the two partitions are equal up to label switching, then $\RK$ is diagonal up to a permutation of rows and columns. 

Similarly, we denote $\NLk$ the same matrix for the segments: for all 
$(k,k')\in\{1\ldots,K\}^2$, for all $(\ell,\ell')\in\{1\ldots,L_k\}\times\{1\ldots,L_{k'}\}$,
\begin{align}
\NLk_{\ell,\ell'} %&=\left|\left\{j\in\{1,\ldots,d\}\left|\Ts_{k\ell}+1\leq j\leq\Ts_{k,\ell+1}\text{ et }T_{k'\ell'}+1\leq j\leq T_{k',\ell'+1}\right.\right\}\right|\\
&=\left|{D^\star}_{\ell}^{k}\bigcap{D}_{\ell'}^{k'}\right|
\label{NLk}
\end{align}
where
\begin{align*}
{D}_{\ell'}^{k'}&=\left\{j\in\{1,\ldots,d\}\left|T_{k'\ell'}+1\leq j\leq T_{k',\ell'+1}\right.\right\},\\
{D^\star}_{\ell'}^{k'}&=\left\{j\in\{1,\ldots,d\}\left|T^\star_{k'\ell'}+1\leq j\leq T^\star_{k',\ell'+1}\right.\right\}. 
\end{align*}
Remark that 
\begin{align}
\sum_{\ell_1=1}^{L_{k_1}}\NL{k_1}{k}_{\ell_1,\ell}
=\left|{D^\star}_{\ell}^{k}\right|.
\label{sumNk=Dk}
\end{align}

For every $(\btheta,\bT,\bz)$, we denote 
\[\Fntz=\log\frac{p(\bY|\bz;\btheta,\bT)}{p(\bY|\bzs;\bthetas,\bTs)},\]
and its expectation 
\[\Gntz=\Espc{\bY|\bzs;\bthetas,\bTs}{\Fntz}.\]

In the following, we study the Kullback-Leibler divergence between two Gaussian distributions with variance 1. As the distributions only depend on the respective means, we denote $\KL{\mu}{\mu'}$ this divergence. Remark that 
\begin{align}
\label{KL}
KL(\mu,\mu') = \frac12 (\mu-\mu')^2.
\end{align}
\begin{prop}[Distribution of $\Fntz$]\label{prop:Fn}
For all $\btheta$, $\bT\in\mathcal{T}_{\tau_{\min}}$ and $\bz$, we get
\begin{eqnarray*}
\Fntz  %\sum_{k=1}^K\sum_{k'=1}^K\sum_{\ell=1}^{\Lk}\sum_{\ell'=1}^{L_{k'}}\left(\mus_{k\ell}-\mu_{k'\ell'}\right)\sum_{i=1}^n\zs_{ik}z_{i,k'}\sum_{j\in{D^\star}_{\ell}^{k}\bigcap{D}_{\ell'}^{k'}} \left(\mus_{k\ell}-Y_{ij}\right)\\
&\sim&\mathcal{N}\left(\Gntz,\Gntz\right).
\end{eqnarray*}
with 
\begin{eqnarray*}
&&\!\!\!\!\!\!\Gntz\\
&=&\!\!\!\!-\!\sum_{k=1}^K\!\sum_{k'=1}^K\!\sum_{\ell=1}^{\Lk}\!\sum_{\ell'=1}^{L_{k'}}\KL{\mus_{k\ell}}{\mu_{k'\ell'}}\RK_{k,k'}\NLk_{\ell,\ell'}.%;\\
%v_n(\btheta, \bT, \bz) &= \sum_{k=1}^K\sum_{k'=1}^K\sum_{\ell=1}^{\Lk}\sum_{\ell'=1}^{L_{k'}}\left(\mus_{k\ell}-\mu_{k'\ell'}\right)^2\RK_{k,k'}\NLk_{\ell,\ell'}.
\end{eqnarray*}
\end{prop}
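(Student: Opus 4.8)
The plan is to compute the log-likelihood ratio $\Fntz$ explicitly under the Gaussian model with unit variance, and to show it is an affine function of the data $\bY$; since $\bY$ is Gaussian (conditionally on the true partition $\bzs$), any affine functional of it is Gaussian, and it then suffices to identify its mean and variance. First I would write out, using the factorized form of the density in \eqref{model_multiv} with $p=1$ and $\sigma\equiv 1$,
\[
\Fntz=\sum_{i=1}^n\sum_{j=1}^d\Bigl[\tfrac12\bigl(Y_{ij}-\mu^\star_{\zs_i,w^\star_j}\bigr)^2-\tfrac12\bigl(Y_{ij}-\mu_{z_i,w_j}\bigr)^2\Bigr],
\]
where $\zs_i,z_i$ denote the cluster assignments and $w^\star_j,w_j$ the segment assignments induced by $\bTs,\bT$. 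Expanding the squares, the quadratic term $Y_{ij}^2$ cancels, leaving an expression that is affine in each $Y_{ij}$, with coefficients depending only on the two mean parameters governing coordinate $(i,j)$ under the two models.

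Next I would take the conditional expectation under $\bY|\bzs;\bthetas,\bTs$, for which $\Esp{Y_{ij}}=\mu^\star_{\zs_i,w^\star_j}$, to obtain $\Gntz$. For a fixed pair $(i,j)$ the summand has conditional expectation exactly $-\KL{\mu^\star_{\zs_i,w^\star_j}}{\mu_{z_i,w_j}}=-\tfrac12(\mu^\star_{\zs_i,w^\star_j}-\mu_{z_i,w_j})^2$, using \eqref{KL}. Summing over $(i,j)$ and regrouping the indices by which true cluster/segment and which candidate cluster/segment they fall into, each individual $i$ with $\zs_{ik}=1,z_{ik'}=1$ contributes, and each coordinate $j$ lying in ${D^\star}^{k}_{\ell}\cap D^{k'}_{\ell'}$ contributes. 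The number of such individuals is $\RK_{k,k'}$ by \eqref{RK} and the number of such coordinates is $\NLk_{\ell,\ell'}$ by \eqref{NLk}, yielding the stated quadruple sum for $\Gntz$.

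Finally I would compute the conditional variance. Because the $Y_{ij}$ are independent with unit variance, the variance of the affine functional is the sum over $(i,j)$ of the squared coefficient of $Y_{ij}$. That coefficient is $\mu_{z_i,w_j}-\mu^\star_{\zs_i,w^\star_j}$, whose square is $(\mu^\star_{\zs_i,w^\star_j}-\mu_{z_i,w_j})^2=2\,\KL{\mu^\star_{\zs_i,w^\star_j}}{\mu_{z_i,w_j}}$, so the variance equals $\sum_{i,j}2\KL{\cdot}{\cdot}$; comparing with the mean computation shows this is exactly $-\Gntz$ (note $\Gntz\le 0$, so the variance is positive). This coincidence that mean and variance agree in magnitude is the familiar fact that for a log-likelihood ratio between two unit-variance Gaussians, $\mathbb{V}[\Fntz]=-2\,\Esp{\Fntz}$ per coordinate up to the factor carried by the $\tfrac12$ in the KL; I would double-check this constant carefully, since the precise statement is $\Fntz\sim\mathcal{N}(\Gntz,\Gntz)$ after the identity $\mathbb V=-\Gntz$ is verified.

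The main obstacle is bookkeeping rather than conceptual: correctly tracking the signs and the factor of $\tfrac12$ through the expansion so that the mean lands on $-\sum\KL{\cdot}{\cdot}\,\RK_{k,k'}\NLk_{\ell,\ell'}$ and the variance lands on exactly $-\Gntz$ (and not $-2\Gntz$ or $-\tfrac12\Gntz$). The regrouping of the double sum over $(i,j)$ into the quadruple sum over $(k,k',\ell,\ell')$ via the contingency matrices $\RK$ and $\NLk$ is routine given \eqref{RK}, \eqref{NLk} and \eqref{sumNk=Dk}, but it is the step where an index-mismatch would be easy to introduce, so I would verify it on the marginal identities \eqref{Rk+}--\eqref{R+k'}.
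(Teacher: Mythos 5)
Your proof takes essentially the same route as the paper's: expand the per-coordinate log-ratio of unit-variance Gaussian densities so that the $Y_{ij}^2$ terms cancel (this is the paper's Eq.~\eqref{ratioLogLik}), conclude that $\Fntz$ is affine in the independent Gaussians $Y_{ij}$ and hence Gaussian, compute the mean coordinatewise as $-\KL{\mus_{\zs_i,w^\star_j}}{\mu_{z_i,w_j}}$, and regroup the sum over $(i,j)$ into the quadruple sum through $\RK$ and $\NLk$. On Gaussianity and on the mean you agree with the paper and are in fact more explicit, since the paper stops at the affine representation and declares the expectation and variance computations ``straightforward''.

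The variance is where your write-up is internally inconsistent, and it should not be left hedged. You correctly identify the variance as the sum of squared coefficients, $\sum_{i,j}(\mus_{\zs_i,w^\star_j}-\mu_{z_i,w_j})^2=\sum_{i,j}2\,\KL{\mus_{\zs_i,w^\star_j}}{\mu_{z_i,w_j}}$, but this equals $-2\Gntz$, not $-\Gntz$ as you assert two lines later; your own identity $\mathbb{V}\left[\Fntz\right]=-2\,\Esp{\Fntz}$, which is exact under the convention \eqref{KL}, already contradicts that assertion. Carried to the end, your computation gives $\Fntz\sim\mathcal{N}\left(\Gntz,-2\Gntz\right)$. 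The reason you could not make the constant land on the stated value is that the stated value is itself defective: since $\Gntz\leq 0$, the proposition's $\mathcal{N}\left(\Gntz,\Gntz\right)$ asserts a nonpositive variance, and the correct second parameter is $-2\Gntz$. The paper never confronts this because its proof omits the variance computation entirely; note that in the subsequent deviation bounds only the upper estimate $\mathbb{V}\left[\Fntz\right]\leq\text{Diam}(\bTheta)^2nd$ is used, which the correct variance $-2\Gntz$ also satisfies, so nothing downstream is harmed. In short: same method as the paper and a correct mean, but you should replace your tentative identification of the variance with the firm conclusion $-2\Gntz$ and flag the proposition's second parameter as an error rather than something to be ``verified''.
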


The proof stands in Section \ref{sec:tools}.

We are also particularly interested in the maximum of $g_n$ %Enfin, nous notons les maximums de ces deux fonctions:
\begin{eqnarray*}
\Lambdat(\bz,\bT)&=&\max_{\btheta\in\Theta}\Gntz\\
\end{eqnarray*}
%\end{defi}

\begin{prop}\label{prop:lambdat}
For all $\bz, \bT$, 
\begin{eqnarray*}
&&\!\!\!\!\!\!\!\!\!\!\!\!\Lambdat(\bz,\bT)\\
&=&\!\!\!\!-\frac{1}{2}\sum_{k=1}^K\sum_{\ell=1}^{L_{k}}\frac{1}{\left|{D^\star}_{\ell}^{k}\right|\bzs_{+,k}}\sum_{k_1=1}^K\sum_{\ell_1=1}^{L_{k_1}}\RK_{k_1,k}\NL{k_1}{k}_{\ell_1,\ell}\\
&&\times\sum_{k_2=1}^K\sum_{\ell_2=1}^{L_{k_2}}\RK_{k_2,k}\NL{k_2}{k}_{\ell_2,\ell}\KL{\mus_{k_1,\ell_1}}{\mus_{k_2,\ell_2}}\\
\end{eqnarray*}
\end{prop}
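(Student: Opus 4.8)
The plan is to start from the closed form of $\Gntz$ in Proposition~\ref{prop:Fn} and to carry out the maximization over $\btheta$ by hand, using that the only free parameters are the estimated means $(\mu_{k'\ell'})$ (recall $\sigma\equiv 1$ and $p=1$), while the true means $(\mus_{k\ell})$ are fixed. Applying the elementary identity \eqref{KL}, I would first rewrite
\[\Gntz=-\frac12\sum_{k=1}^K\sum_{k'=1}^K\sum_{\ell=1}^{\Lk}\sum_{\ell'=1}^{L_{k'}}\left(\mus_{k\ell}-\mu_{k'\ell'}\right)^2\RK_{k,k'}\NLk_{\ell,\ell'},\]
and observe that the objective \emph{decouples across the estimated cells} $(k',\ell')$: each unknown $\mu_{k'\ell'}$ enters a single block $\sum_{k,\ell}\left(\mus_{k\ell}-\mu_{k'\ell'}\right)^2 w^{k'\ell'}_{k\ell}$ with nonnegative weights $w^{k'\ell'}_{k\ell}:=\RK_{k,k'}\NLk_{\ell,\ell'}$. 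Maximizing $\Gntz$ thus reduces to $K\times(\text{number of estimated segments})$ independent one-dimensional weighted least-squares problems.

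For each fixed $(k',\ell')$ I would solve $\min_{\mu}\sum_{k,\ell}w^{k'\ell'}_{k\ell}(\mus_{k\ell}-\mu)^2$. The unconstrained minimizer is the weighted barycenter $\mu^{\mathrm{opt}}_{k'\ell'}=\left(\sum_{k,\ell}w^{k'\ell'}_{k\ell}\mus_{k\ell}\right)/W_{k'\ell'}$, with $W_{k'\ell'}:=\sum_{k,\ell}w^{k'\ell'}_{k\ell}$. Being a convex combination of the true means, it lies in $[-M,M]$, so by Assumption~\eqref{ass:C.1} the constrained maximum over $\bTheta$ equals the unconstrained one (and when $W_{k'\ell'}=0$ the cell is empty and contributes nothing, so it can be discarded). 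Substituting the barycenter and invoking the weighted-variance identity $\min_\mu\sum_i w_i(a_i-\mu)^2=\frac{1}{2W}\sum_{i,j}w_iw_j(a_i-a_j)^2$, then using \eqref{KL} once more to turn each $(\mus_{k_1\ell_1}-\mus_{k_2\ell_2})^2$ back into $2\,\KL{\mus_{k_1\ell_1}}{\mus_{k_2\ell_2}}$, yields for each cell the contribution $\frac{1}{2W_{k'\ell'}}\sum_{k_1,\ell_1}\sum_{k_2,\ell_2}w^{k'\ell'}_{k_1\ell_1}w^{k'\ell'}_{k_2\ell_2}\KL{\mus_{k_1\ell_1}}{\mus_{k_2\ell_2}}$. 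Summing over $(k',\ell')$ and relabelling the estimated cell as $(k,\ell)$ reproduces the quadruple sum in the statement of $\Lambdat(\bz,\bT)$.

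It then remains to identify the normalizing factor $W_{k'\ell'}$ with the denominator appearing in the statement. I would evaluate it in two stages: summing first over the true-segment index, where the partition identity \eqref{sumNk=Dk} collapses $\sum_{\ell}\NLk_{\ell,\ell'}$ to the relevant segment length $\left|{D^\star}_\ell^k\right|$ independently of the true cluster; and summing next over the true-cluster index, where the marginal relations \eqref{Rk+}--\eqref{R+k'} turn $\sum_{k}\RK_{k,k'}$ into the cluster marginal $\bzs_{+,k}$. Combining these gives exactly the factor $\left|{D^\star}_\ell^k\right|\bzs_{+,k}$ of Proposition~\ref{prop:lambdat}. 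The main obstacle here is purely bookkeeping: keeping the four running indices and the true/estimated roles straight through the substitution and the weighted-variance identity, and checking that the two-stage collapse of $W_{k'\ell'}$ matches the stated denominator term by term.
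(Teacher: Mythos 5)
Your proposal is correct and follows essentially the same route as the paper: both identify the per-cell weighted barycenter $\widehat{\mu}_{k'\ell'}$ as the maximizer, substitute it back, rewrite the resulting minimum as a pairwise weighted sum of squared differences of the true means, and collapse the normalizer via \eqref{sumNk=Dk} and the marginal relations; your invocation of the weighted-variance identity is just a packaged form of the paper's hand expansion of $(\mus_{k_1,\ell_1}-\widehat{\mu}_{k\ell})^2$, where the cross term vanishes upon summation. Your additional checks (the barycenter lies in $[-M,M]$ so the constraint is inactive, and empty cells contribute nothing) are minor refinements the paper leaves implicit.
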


The proof stands in Section \ref{sec:tools}.

Then, we need a measure of the minimal difference between two different parameters, computed through the Kullback-Leibler divergence. 
%Nous avons besoin d'une notation pour expliquer la différence minimale entre deux paramètres différents:
\begin{defi}[Minimal Kullback-Leibler divergence]
Let $\delta(\bthetas)$ be the minimal nonzero Kullback-Leibler divergence: 
\[\delta(\bthetas)=\min_{\underset{\mus_{k_1,\ell_1}\neq \mus_{k_2,\ell_2}}{k_1,\ell_1,k_2,\ell_2}}\KL{\mus_{k_1,\ell_1}}{\mus_{k_2,\ell_2}}>0.\]
\end{defi}

\subsection{Proof of Proposition \ref{prop:part_equi}: equivalent partitions}
Let $\sigma\in\mathfrak{S}(\left\{1,\ldots,K\right\})$ be a permutation. We say that $(\bpi, \bmu,\bT)$ has a \emph{symmetry} for   $\sigma$ if we have, for all  $k\in\{1,\ldots,K\}$:
\[\pi_{\sigma(k)}=\pik,\,\forall\ell\in\{1,\ldots,\Lk\},\mu_{\sigma(k),\ell}=\mu_{k\ell}\text{ and }T_{\sigma(k),\ell}=T_{k\ell}.\]
%If we denote $\left(\btheta^{(\sigma)},\bT^{(\sigma)}\right) := (\pi_{\sigma(k)},\theta_{\sigma(k),\ell},T_{\sigma(k),\ell})$,it means  $\left(\btheta^{(\sigma)},\bT^{(\sigma)}\right)=\left(\btheta,\bT\right)$.
We denote  $\text{Sym}(\btheta, \bT)$ the set of permutations such that  $\left(\btheta,\bT\right)$ has a symmetry. % and  $\#\text{Sym}\left(\btheta,\bT\right)$ its cardinal. 

Remark that under Assumption (ID.3), we have
\[\#\text{Sym}\left(\btheta,\bT\right)=1,\]
which makes the next computations easier than in \cite{brault2020consistency}, where we directly get, in our particular case,  $\{\sigma\in\mathfrak{S}_{\bz,\bzs}\} = \{(\btheta',\bT')\sim(\btheta,\bT)\}$.

Let $\sigma$  be a permutation such that for all  $i\in\{1,\ldots,n\}$ and $k\in\{1,\ldots,K\}$, we have
\[z_{i\sigma(k)}=\zsik,\]
$\mathfrak{S}_{\bz,\bzs}$ the set of all possible permutations and we denote $\bz^{(\sigma)} :=\left(z_{i\sigma(k)}\right)_{ik}$. We have: 
\[p(\bY,\bz;\btheta,\bT)=p\left(\bY,{\bzs}^{(\sigma)};\btheta,\bT\right)=p\left(\bY,\bzs;\btheta^{(\sigma)},\bT^{(\sigma)}\right).\]
If   $\sigma\in\text{Sym}(\btheta)$, it leads to 
$p(\bY,\bz;\btheta,\bT)=p\left(\bY,\bzs;\btheta,\bT\right)$. By summing, we get
\begin{eqnarray*}
\sum_{\bz\sim\bzs}p(\bY,\bz;\btheta,\bT)&=&\sum_{\sigma\in\mathfrak{S}_{\bz,\bzs}}p\left(\bY,{\bzs}^{(\sigma)};\btheta,\bT\right)\\
&=&\sum_{\sigma\in\mathfrak{S}_{\bz,\bzs}}p\left(\bY,\bzs;\btheta^{(\sigma)},\bT^{(\sigma)}\right)\\
%&=&\sum_{(\btheta',\bT')\sim(\btheta,\bT)}\#\text{Sym}\left(\btheta',\bT'\right)p\left(\bY,\bzs;\btheta',\bT'\right)\\
&=&\sum_{(\btheta',\bT')\sim(\btheta,\bT)}p\left(\bY,\bzs;\btheta',\bT'\right).\\
%&=&\left(\btheta,\bT\right)\sum_{(\btheta',\bT')\sim(\btheta,\bT)}p\left(\bY,\bzs;\btheta',\bT'\right).
\end{eqnarray*}
However, the function $\btheta\mapsto p\left(\bY,\bzs;\btheta,\bT\right)$ is unimodal and maximal for the maximum of the complete likelihood. As the estimator is consistent as soon as we have the true partition, under Assumption (ID.1), we have $p\left(\bY,\bzs;\btheta,\bT\right)=\mathcal{O}_P\left[p\left(\bY,\bzs;\bthetas,\bTs\right)\right]$ when $\theta$ is in a neighborhood of $\thetas$ and $p\left(\bY,\bzs;\btheta,\bT\right)=o_P\left[p\left(\bY,\bzs;\bthetas,\bTs\right)\right]$ elsewhere. If  $\btheta$ is close to  $\bthetas$, the set of equivalent $\btheta'$ but not symmetric are far and we get:
\[p\left(\bY,\bzs;\btheta',\bT'\right)=o_P\left[p\left(\bY,\bzs;\bthetas,\bTs\right)\right].\]
Then,
\[\sum_{(\btheta',\bT')\sim(\btheta,\bT)}\frac{p\left(\bY,\bzs;\btheta',\bT'\right)}{p\left(\bY,\bzs;\bthetas,\bTs\right)}=\max_{(\btheta',\bT')\sim(\btheta,\bT)}\frac{p\left(\bY,\bzs;\btheta',\bT'\right)}{p\left(\bY,\bzs;\bthetas,\bTs\right)}\left[1+o_P(1)\right].\]

\subsection{Proof of Proposition \ref{prop:contrib_local}: partitions that are close}

\begin{lem}
Assume Ass. (ID.4.s) with parameter $c>0$.
Let the event 
$$\Omega_1(c) = \left\{\bzs\in\mathcal{Z}\left|\forall k\in\{1,\ldots,K\},\,\zs_{+,k}\geq nc/2\right.\right\},$$
 where $\mathcal{Z}$ is the set of all possible partitions. 
Then, 
\[\mathbb{P}_{\thetas}\left(\overline{\Omega_1(c)}\right)\leq K e^{-\frac{nc^2}{2}}.\]
\end{lem}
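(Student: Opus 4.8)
The plan is to recognize that $\zs_{+,k}=\sum_{i=1}^n\zs_{ik}$ is exactly the number of individuals assigned to cluster $k$ by the true partition, and to control its lower tail by a standard concentration inequality. Under $\mathbb{P}_{\thetas}$ the latent labels are drawn independently across individuals with $\Prob{\zs_{ik}=1}=\pisk$, so for a fixed $k$ the indicators $(\zs_{ik})_{1\leq i\leq n}$ are independent Bernoulli variables taking values in $[0,1]$. Hence $\zs_{+,k}\sim\text{Binomial}(n,\pisk)$, with mean $\Esp{\zs_{+,k}}=n\pisk$.

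First I would reduce to a single cluster by a union bound over $k$:
\[\Prob{\overline{\Omega_1(c)}}=\Prob{\exists\, k:\ \zs_{+,k}<nc/2}\leq\sum_{k=1}^K\Prob{\zs_{+,k}<nc/2}.\]
Next, for a fixed $k$, Assumption \eqref{ID:4:s} gives $\pisk>c$, so $\Esp{\zs_{+,k}}=n\pisk>nc$ and the event $\{\zs_{+,k}<nc/2\}$ forces a downward deviation from the mean of at least
\[n\pisk-\tfrac{nc}{2}>nc-\tfrac{nc}{2}=\tfrac{nc}{2}.\]
I would then apply Hoeffding's inequality to the sum of the $n$ independent $[0,1]$-valued summands, which with deviation $t=nc/2$ yields
\[\Prob{\zs_{+,k}<\tfrac{nc}{2}}\leq\Prob{\zs_{+,k}-n\pisk\leq-\tfrac{nc}{2}}\leq\exp\!\left(-\frac{2(nc/2)^2}{n}\right)=e^{-nc^2/2}.\]
Summing over the $K$ clusters then gives the claimed bound $\Prob{\overline{\Omega_1(c)}}\leq Ke^{-nc^2/2}$.

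I do not expect any genuine obstacle here: the argument is a textbook Hoeffding (Chernoff) bound combined with a union bound. The only point requiring minor care is matching the constant exactly. One must use that the threshold $nc/2$ sits below the mean $n\pisk$ by at least $nc/2$ — which is precisely where the strengthened lower bound $\pisk>c$ of \eqref{ID:4:s} enters — and that the Hoeffding constant $2/n$ for a sum of $n$ Bernoulli summands of range $1$ produces the exponent $nc^2/2$ rather than a weaker one.
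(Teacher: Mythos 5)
Your proof is correct and follows essentially the same route as the paper's: a union bound over the $K$ clusters combined with Hoeffding's inequality for the Binomial counts $\zs_{+,k}$, using $\pisk>c$ from Assumption (ID.4.s) to guarantee the required deviation. The only cosmetic difference is that you bound the deviation directly by $nc/2$, whereas the paper first bounds $\Prob{\zs_{+,k}<nc/2}$ by $\Prob{\zs_{+,k}<n\pisk/2}$ and then uses $\pisk^2>c^2$ in the exponent; both yield the identical bound $Ke^{-nc^2/2}$.
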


\begin{proof}
First, remark that for all $k\in\{1,\ldots,K\}$, $\Zs_{+,k} \sim \mathcal{B}(\pis_k)$ with $\pis_k>c$ according to Assumption (ID.4.s). Then, using Hoeffding inequality,

\begin{eqnarray*}
\mathbb{P}_{\thetas}\left(\overline{\Omega_1(c)}\right)&=&\mathbb{P}_{\thetas}\left(\bigcup_{k=1}^K\{\Zs_{+,k}< nc/2\}\right)
\leq \sum_{k=1}^K\mathbb{P}_{\thetas}\left(\Zs_{+,k}< nc/2\right)\\
&\leq&\sum_{k=1}^K\mathbb{P}_{\thetas}\left(\Zs_{+,k}< n\pis_k/2\right)
\leq \sum_{k=1}^K\mathbb{P}_{\thetas}\left(\Zs_{+,k}-n\pis_k< -n\pis_k/2\right)\\
&\leq&\sum_{k=1}^K\exp\left[-\frac{2\left(-n\pis_k/2\right)^2}{\sum_{i=1}^n(1-0)^2}\right]
\leq \sum_{k=1}^K\exp\left[-\frac{2n^2{\pis_k}^2}{4n}\right]\\
&\leq&\sum_{k=1}^K\exp\left[-\frac{n{\pis_k}^2}{2}\right]
\leq \sum_{k=1}^K\exp\left[-\frac{nc^2}{2}\right]
\leq K\exp\left[-\frac{nc^2}{2}\right].
\end{eqnarray*}
\end{proof}

In the next lemma, we split the balls into slices. 
\begin{lem}[Upper bounding of  $\Fntz$]
Under Assumptions (ID.3.s), (C.1) and (C.2), for all $r\geq 1/n$, for all $\btheta\in\bTheta$ and $\bz\in\mathcal{Z}$ such that\linebreak$d_{0,\sim}(\bz,\bzs)=rn$, we have:
\begin{equation}\label{eq:cor:majFn}
    \Fntz\leq-\frac{d\tau_{\min}\delta\left(\bthetas\right)}{2}rn\left[1+o_P(1)\right].
\end{equation}
\end{lem}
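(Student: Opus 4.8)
\emph{Proof plan.} The plan is to replace the random quantity $\Fntz$ by its mean $\Gntz$, and then to bound this mean from above by the number $rn$ of misclassified individuals. By Proposition~\ref{prop:Fn}, $\Fntz$ is Gaussian with mean $\Gntz\leq 0$ and variance $-2\Gntz$; writing $\Fntz=\Gntz+\sqrt{-2\Gntz}\,\xi$ with $\xi\sim\mathcal{N}(0,1)$, its standard deviation is of order $\sqrt{-\Gntz}$, so the relative fluctuation is $\sqrt{2/(-\Gntz)}$. Hence, as soon as $-\Gntz\to+\infty$, we have $\Fntz=\Gntz\,[1+o_P(1)]$. I would therefore reduce the statement to the \emph{deterministic} bound
\[
\Gntz\leq-\frac{d\tau_{\min}\delta(\bthetas)}{2}\,rn ,
\]
and conclude by multiplying by the positive factor $1+o_P(1)$, which preserves the inequality since $\Gntz\leq 0$; the bound itself forces $-\Gntz\geq \tfrac{d\tau_{\min}\delta(\bthetas)}{2}\to+\infty$ (as $rn\geq1$), justifying a posteriori the reduction.

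Since the bound must hold for every $\btheta\in\bTheta$ and every $\bT\in\mathcal{T}_{\tau_{\min}}$, it suffices to control the least favorable case $\Gntz=\max_{\btheta}\Gntz=\Lambdat(\bz,\bT)$ and to bound $\Lambdat(\bz,\bT)$ uniformly in $\bT$. I would read off the closed form of Proposition~\ref{prop:lambdat}, which expresses $-\Lambdat(\bz,\bT)$ as a sum, over each candidate cell (candidate cluster $\times$ candidate segment), of the weighted within-cell dispersion of the \emph{true} means $\mus_{k\ell}$ of the data landing in that cell. Using the elementary identity $\sum_a w_a(\mu_a-\bar\mu)^2=\tfrac{1}{2W}\sum_{a,b}w_aw_b(\mu_a-\mu_b)^2$ together with $\KL{\mu}{\mu'}=\tfrac12(\mu-\mu')^2$ from~\eqref{KL}, this dispersion is exactly a nonnegative weighted sum of pairwise divergences $\KL{\mus_{k_1\ell_1}}{\mus_{k_2\ell_2}}$ between the distinct true types that the wrong partition has merged into a common cell.

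The heart of the argument — and the main obstacle — is to bound this dispersion below by the misclassification count. Fix the permutation $\sigma$ realizing $d_{0,\sim}(\bz,\bzs)=rn$; by its optimality, and in the relevant regime where $r$ is small, the correctly matched true cluster occupies more than half of each candidate cluster, so that for a candidate cluster of composition $(n_1,\dots,n_K)$ with matched type $k^\ast$ one has $n_{k^\ast}\geq n_{k'}/2$. Each of the $rn$ misclassified individuals (true cluster $k\neq k^\ast$) is then pooled with the $n_{k^\ast}$ matched ones, and by Assumption~(ID.3.s) these two clusters have distinct distributions on at least $\tau_{\min}d$ timepoints, at each of which the pairwise divergence is at least $\delta(\bthetas)$ by definition of $\delta(\bthetas)$. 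The per-type contribution to a cell, $\tfrac{n_k n_{k^\ast}}{n_{k'}}\,\KL{\cdot}{\cdot}\geq \tfrac{n_k}{2}\,\delta(\bthetas)$, summed over the $\geq\tau_{\min}d$ differing timepoints and over all candidate clusters, and using that the misclassified individuals total $rn$, yields $-\Lambdat(\bz,\bT)\geq \tfrac{d\tau_{\min}\delta(\bthetas)}{2}\,rn$, the required deterministic bound. Assumptions~(C.1) and~(C.2) are used here to keep the cell sizes and the parameter range controlled, so that $\delta(\bthetas)>0$ is attained and the plurality/variance bookkeeping is valid.

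The delicate point is precisely this last step: because the candidate means $\btheta$ are arbitrary, a wrong partition cannot be compared to a fixed true configuration directly; passing through $\Lambdat$ and the within-cell variance identity is what converts ``the best possible fit of a wrong partition'' into an irreducible dispersion of true means, and showing that the matched majority dominates each candidate cluster (so that the factor $n_{k^\ast}/n_{k'}$ is bounded below) is where optimality of $\sigma$ and the smallness of $r$ are genuinely needed. Finally, when the bound is later required \emph{uniformly} over the exponentially many partitions of the slice $\{d_{0,\sim}(\cdot,\bzs)=rn\}$, the pointwise $o_P(1)$ above must be upgraded through a union bound on the Gaussian fluctuations $\xi$, for which the entropy of order $rn\log(nK)$ is dominated by the Gaussian scale of order $d$, i.e.\ by Assumption~(C.3).
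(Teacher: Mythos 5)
Your proposal has the same skeleton as the paper's proof: both reduce the lemma to (a) the deterministic separability bound $\Lambdat(\bz,\bT)\leq-\frac{d\tau_{\min}\delta(\bthetas)}{2}d_{0,\sim}(\bz,\bzs)$ and (b) Gaussian control of the fluctuation $\Fntz-\Gntz$. Where you genuinely differ is in (b): the paper bounds the variance of the centered Gaussian $\Fntz-\Gntz$ by the worst case $\text{Diam}(\bTheta)^2nd$ (this is where (C.1) enters) and applies Chernoff's bound (Lemma~\ref{lem:inegalitemesouviensplus}), whereas you exploit the self-normalizing structure of Proposition~\ref{prop:Fn} — correctly reading its variance as $-2\Gntz$, the statement there having a sign slip — so that the relative fluctuation is $\mathcal{O}_P\left(\left|\Gntz\right|^{-1/2}\right)$ and vanishes as soon as $-\Gntz\geq\frac{d\tau_{\min}\delta(\bthetas)}{2}rn\geq\frac{d\tau_{\min}\delta(\bthetas)}{2}\to\infty$. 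This variant is valid, and in fact sharper: the paper's tail exponent is of order $dr^2n$, which need not diverge when $r$ is as small as $1/n$ (unless $d/n\to\infty$), while your self-normalized argument covers the whole range $r\geq 1/n$ claimed in the statement using only $d\to\infty$.

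The gap is in (a), where you chose to re-derive the separability bound rather than take the paper's route. Your plurality claim — that by optimality of $\sigma$, and for small $r$, the matched true cluster occupies more than half of each candidate cluster — does not follow from optimality of the permutation, even for small $r$. For a candidate cluster $k'$ matched to the true cluster $k^\ast$, one has $\RK_{k^\ast,k'}\geq\zs_{+,k^\ast}-rn$ and at most $rn$ foreign points, so plurality requires $\zs_{+,k^\ast}\geq 2rn$: a lower bound on the \emph{realized} true-cluster sizes. That control is exactly the event $\Omega_1(c)$, which rests on Assumption (ID.4.s) — an assumption that is not among the lemma's hypotheses and is not supplied by (C.1) or (C.2) as you assert ((C.2) constrains segment lengths in time, not cluster sizes in the population). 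If some $\zs_{+,k^\ast}<2rn$, a candidate cluster can contain no point of its matched type; the misclassified points parked there generate no within-cell dispersion, and your per-individual accounting of $\frac{d\tau_{\min}\delta(\bthetas)}{2}$ per misclassified point breaks down. The paper instead obtains the bound as Proposition~\ref{prop:separability}, by manipulating the closed form of Proposition~\ref{prop:lambdat} with the marginal identities of the contingency tables $\RK$ and of the segment-overlap counts, with no plurality step (and, notably, restricted to a ball $\mathcal{B}\left(\bzs;R\right)$). To repair your route, either invoke Proposition~\ref{prop:separability} directly, or add the cluster-size event $\Omega_1(c)$ to the hypotheses — which is effectively what happens anyway when the lemma is consumed inside Proposition~\ref{prop:contrib_local}.
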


\begin{proof}
First, we remark that for all $\btheta$, $\bT\in\mathcal{T}_{\tau_{\min}}$ and $\bz$, we have:
\begin{eqnarray*}
\Fntz&\leq&\Fntz-\Gntz+\Lambdat\left(\bz\right)\\
&\leq&\Fntz-\Gntz-\frac{d\tau_{\min}\delta(\bthetas)}{2}d_{0,\sim}(\bz,\bzs)\\
&\leq&\Fntz-\Gntz-\frac{d\tau_{\min}\delta(\bthetas)}{2}rn.
\end{eqnarray*}
By Lemma~\ref{prop:Fn},  $\Fntz-\Gntz$ is a centered Gaussian.  
We also know that 
\begin{align*}
(\mu_{k\ell} - \mu_{k'\ell'})^2 &\leq (\text{Diam} \Theta)^2\\
\sum_{\ell=1}^{\Lk}\sum_{\ell'=1}^{L_{k'}}\NLk_{\ell,\ell'} &\leq d\\
\sum_{k=1}^K\sum_{k'=1}^K\RK_{k,k'}&\leq n,
\end{align*}
then using Lemma~\ref{lem:inegalitemesouviensplus}, we get for all $t>0$:
\begin{eqnarray*}
\!\!\!\!\!\!\!\!\!\!\!\!\Prob{\Fntz-\Gntz\geq t} \leq\exp\left\{-\frac{t^2}{2\text{Diam}(\bTheta)^2nd}\right\},
\end{eqnarray*}
then for all partition such that  $d_{0,\sim}(\bz,\bzs)=rn$, we have:
\begin{eqnarray*}
&&\Prob{\Fntz-\Gntz\geq \frac{dd_{0,\sim}(\bz,\bzs)\tau_{\min}\delta\left(\bthetas\right)}{2}}\\
&\leq&\exp\left\{-\frac{\frac{d^2\tau_{\min}^2d_{0,\sim}(\bz,\bzs)^2\delta\left(\bthetas\right)^2}{2^2}}{2\text{Diam}(\bTheta)^2nd}\right\}\\
%&\leq&\exp\left\{-\frac{\delta\left(\bthetas\right)^2d\tau_{\min}d_{0,\sim}(\bz,\bzs)rn}{8\text{Diam}(\bTheta)^2n}\right\}\\
&\leq&\exp\left\{-\frac{\delta\left(\bthetas\right)^2d\tau_{\min}d_{0,\sim}(\bz,\bzs)r}{8\text{Diam}(\bTheta)^2}\right\} \underset{n, d \rightarrow +\infty}\longrightarrow 0, 
\end{eqnarray*}
where $d_{0,\sim}$ depends on $n$. Then,
\begin{eqnarray*}
\Fntz&\leq& o_P\left[\frac{d\tau_{\min}d_{0,\sim}(\bz,\bzs)\delta\left(\bthetas\right)}{2}\right]-\frac{d\tau_{\min}\delta\left(\bthetas\right)}{2}rn\\
&\leq&-\frac{d\tau_{\min}\delta\left(\bthetas\right)}{2}rn\left[1+o_P(1)\right]
\end{eqnarray*}
\end{proof}

\begin{lem}[Having different partitions]
For all $c>0$, considering the event $\Omega_1(c)$, we have for all $\tilde{c}\leq c/4$ and $\bz\in\mathcal{B}\left(\bzs,\tilde{c}\right)$:
\begin{equation}\label{eq:demo:borneProbaParition}
    \frac{p\left(\bz;\btheta,\bT\right)}{p\left(\bzs;\bthetas,\bTs\right)}\leq \mathcal{O}_P\left(1\right)e^{M_{c/4}\left\|\bz-\bzs\right\|_0}.
\end{equation}
\end{lem}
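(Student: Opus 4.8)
The plan is to exploit that $p(\bz;\btheta,\bT)$ is nothing but the marginal law of the labels under the mixture, i.e. the multinomial weight $p(\bz;\btheta,\bT)=\prod_{k=1}^K\pi_k^{z_{+,k}}$, which depends on $\btheta$ only through the proportions $\bpi$ and not at all on $\bmu$ or on the segmentation $\bT$; here $z_{+,k}=\sum_i z_{ik}$ and $\zs_{+,k}=\sum_i\zs_{ik}$. Taking logarithms,
\[\log\frac{p(\bz;\btheta,\bT)}{p(\bzs;\bthetas,\bTs)}=\sum_{k=1}^K z_{+,k}\log\pi_k-\sum_{k=1}^K\zs_{+,k}\log\pis_k.\]
First I would add and subtract $\sum_k\zs_{+,k}\log\pi_k$ to split this into a \emph{change-of-partition} term $\sum_k(z_{+,k}-\zs_{+,k})\log\pi_k$, evaluated at the fixed parameter $\btheta$, and a \emph{change-of-parameter} term $\sum_k\zs_{+,k}\log(\pi_k/\pis_k)$, evaluated at the true partition.

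For the change-of-parameter term I would bound it uniformly in $\btheta$ by its maximum over the simplex. Since $\bpi\mapsto\sum_k\zs_{+,k}\log\pi_k$ is maximised at the empirical frequencies $\widehat{\pi}_k=\zs_{+,k}/n$, this term is at most $n\,\KL{\widehat{\pi}}{\pis}$, which no longer depends on $\btheta$. Using Assumption \eqref{ID:4:s} (so $\pis_k>c$) together with the central limit theorem for the multinomial counts $\zs_{+,k}$, a second-order expansion of the Kullback--Leibler divergence gives $n\,\KL{\widehat{\pi}}{\pis}=\mathcal{O}_P(1)$; exponentiating yields the $\mathcal{O}_P(1)$ prefactor, uniform over $\bTheta$.

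For the change-of-partition term I would use the lower bound on the proportions: on the admissible parameter set each $\pi_k>c$, hence $|\log\pi_k|\le\log(1/c)$, so
\[\sum_{k=1}^K(z_{+,k}-\zs_{+,k})\log\pi_k\le\log(1/c)\sum_{k=1}^K|z_{+,k}-\zs_{+,k}|\le M_{c/4}\left\|\bz-\bzs\right\|_0,\]
where the last step uses $\sum_k|z_{+,k}-\zs_{+,k}|=\sum_k\bigl|\sum_i(z_{ik}-\zs_{ik})\bigr|\le\sum_{i,k}|z_{ik}-\zs_{ik}|=\|\bz-\bzs\|_0$, and $M_{c/4}$ is a constant of order $\log(1/c)$. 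This is where the hypotheses $\bz\in\mathcal{B}(\bzs,\tilde{c})$ with $\tilde{c}\le c/4$ and the event $\Omega_1(c)$ enter: they guarantee that each cluster of $\bzs$ carries at least $nc/2$ points while $\bz$ disagrees with $\bzs$ on at most $\tilde{c}n\le nc/4$ points, so every cluster of $\bz$ still contains at least $nc/4$ points. Consequently no relabelling can reduce the number of disagreements below that of the identity, i.e. the permutation realising $d_{0,\sim}(\bz,\bzs)$ may be taken to be the identity; this is exactly what legitimises comparing $\bz$ directly with $\bzs$ (rather than with a relabelled copy) in the decomposition above.

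Combining the two estimates and exponentiating gives $\frac{p(\bz;\btheta,\bT)}{p(\bzs;\bthetas,\bTs)}\le\mathcal{O}_P(1)\,e^{M_{c/4}\|\bz-\bzs\|_0}$ uniformly over $\btheta\in\bTheta$. I expect the main obstacle to be the control of the change-of-parameter term: one must verify that the maximum over $\bpi$ is attained at the empirical frequencies and that $n\,\KL{\widehat{\pi}}{\pis}$ remains bounded in probability, which crucially relies on the non-degeneracy of the proportions from \eqref{ID:4:s}. The combinatorial bound on the change-of-partition term and the identity-permutation argument are comparatively routine once $\Omega_1(c)$ and $\tilde{c}\le c/4$ are in force.
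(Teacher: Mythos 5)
Your proof is correct in substance, but it takes a genuinely different route from the paper's, and the difference matters for one hypothesis. The paper never bounds $\log\pi_k$ for the candidate parameter: it first profiles the proportions out, bounding $p(\bz;\bpi)\leq p\left(\bz;\pih{\bz}\right)$ with $\pih{\bz}_k=z_{+,k}/n$, so that the ratio is controlled by $\frac{p\left(\bz;\pih{\bz}\right)}{p\left(\bzs;\pih{\bzs}\right)}$ times a remainder involving only the true partition; the first, parameter-free factor is then bounded via the mean value theorem applied to $x\mapsto x\log x$, using that on $\Omega_1(c)$ and for $\bz\in\mathcal{B}\left(\bzs,\tilde{c}\right)$ with $\tilde{c}\leq c/4$ all empirical frequencies lie in $]c/4;1-c/4[$ --- this is where the hypotheses of the lemma and the name of the constant $M_{c/4}$ actually come from --- while the remainder is $\exp\left(n\,\KL{\pih{\bzs}}{\bpis}\right)=\mathcal{O}_P(1)$. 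Your decomposition instead keeps the candidate $\bpi$ inside the change-of-partition term and bounds $\left|\log\pi_k\right|\leq\log(1/c)$, which requires every admissible $\bpi$ to have components bounded below by $c$. If Assumption (ID.4.s) is read as a constraint on the whole parameter space $\bTheta$ (as (C.1) is for the means), your argument is complete, and on one point it is sharper than the paper: your CLT justification that $n\,\KL{\pih{\bzs}}{\bpis}=\mathcal{O}_P(1)$ is more careful than the paper's appeal to the law of large numbers. But if (ID.4.s) only constrains the true parameter, your bound degenerates ($\log(1/\pi_k)$ is unbounded as $\pi_k\to 0$), whereas the paper's profiled bound is uniform over the entire simplex --- and this uniformity is what is needed downstream, since the lemma feeds a supremum over $\bTheta$ in Proposition \ref{prop:contrib_local}. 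Finally, your paragraph about the identity permutation is a red herring: the right-hand side of the lemma involves the raw distance $\left\|\bz-\bzs\right\|_0$, no permutation minimization occurs in the statement or in your two estimates, and in fact neither of your bounds ever uses $\Omega_1(c)$ or $\tilde{c}\leq c/4$; in the paper's proof these hypotheses are used for a different purpose, namely keeping the empirical frequencies away from $0$ and $1$ so that the entropy is Lipschitz.
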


\begin{proof}
Considering $\Omega_1(c)$: for all  $k\in\{1,\ldots,K\}$, we have $\zs_{+,k}\geq nc/2$. As $\bz\in\mathcal{B}\left(\bzs,\tilde{c}\right)$ with $\tilde{c}\leq c/4$ then for all $k\in\{1,\ldots,K\}$, we have  $z_{+,k}\geq nc/4$.

For a partition $\bz$, let $\pih{\bz}$ be the maximum of $\bpi\mapsto p\left(\bz;\bpi\right)$:  for all $k\in\{1,\ldots,K\}$,
\[\pih{\bz}_k=\frac{z_{+,k}}{n}.\]

Then, we have
\begin{eqnarray*}
\frac{p\left(\bz;\btheta,\bT\right)}{p\left(\bzs;\bthetas,\bTs\right)}&=&\frac{p\left(\bz;\bpi\right)}{p\left(\bzs;\bpis\right)}
\leq\frac{p\left(\bz;\pih{\bz}\right)}{p\left(\bzs;\pih{\bzs}\right)}\times\frac{p\left(\bz;\pih{\bzs}\right)}{p\left(\bzs;\bpis\right)},
\end{eqnarray*}
by definition of $\pih{\bz}$. Using \cite{brault2020consistency}[Lemma D.2] we get:
\begin{eqnarray*}
\log\left[\frac{p\left(\bz;\pih{\bz}\right)}{p\left(\bzs;\pih{\bzs}\right)}\right]
&=&\log p\left(\bz;\pih{\bz}\right)-\log p\left(\bzs;\pih{\bzs}\right)\\
&=&\sum_{i=1}^n\sum_{k=1}^K\zik\log\pih{\bz}_k-\sum_{i=1}^n\sum_{k=1}^K\zsik\log\pih{\bzs}_k\\
%&=&\sum_{k=1}^K\log\pih{\bz}_k\sum_{i=1}^n\zik-\sum_{k=1}^K\log\pih{\bzs}_k\sum_{i=1}^n\zsik\\
%&=&\sum_{k=1}^Kn\pih{\bz}_k\log\pih{\bz}_k-\sum_{k=1}^Kn\pih{\bzs}_k\log\pih{\bzs}_k\\
&=&n\sum_{k=1}^K\left[\pih{\bz}_k\log\pih{\bz}_k-\pih{\bzs}_k\log\pih{\bzs}_k\right].
\end{eqnarray*}
Let $H(\bpi)=\sum_{k=1}^K\pik\log\pik$. This function is differentiable and we can use the mean value theorem to the function $x\mapsto -x\log x$ with derivative $x\mapsto \log x+1$. Then, for all $k\in\{1,\ldots,K\}$, there exists $\kappa_k\in]\pisk;\pik[$ such that 
\begin{eqnarray*}
&&\frac{\pik\log\pik-\pisk\log\pisk}{\pik-\pisk}=-\log \kappa_k-1\\
\Rightarrow&&\left|\frac{\pik\log\pik-\pisk\log\pisk}{\pik-\pisk}\right|=\left|-\log \kappa_k-1\right|.
%&\Rightarrow&\left|\frac{\pik\log\pik-\pisk\log\pisk}{\pik-\pisk}\right|=\max\left(-\log \kappa_k-1,\log \kappa_k+1\right).
\end{eqnarray*}
However, as $\bpi$ and $\bpis$ are regular enough, we know that $\kappa_k\in]c/4;1-c/4[$ then there exists a constant $M_{c/4}$ such that
\begin{eqnarray*}
\left|\frac{\pik\log\pik-\pisk\log\pisk}{\pik-\pisk}\right| \leq \frac{M_{c/4}}{2}.
\end{eqnarray*}
By summing, we get
\begin{eqnarray*}
\left|H(\bpi)-H(\bpis)\right|&=&\left|\sum_{k=1}^K\pik\log\pik-\sum_{k=1}^K\pisk\log\pisk\right|\\
&\leq&\sum_{k=1}^K\frac{M_{c/4}}{2}\left|\pik-\pisk\right|
\leq\frac{M_{c/4}}{2}\left\|\bpi-\bpis\right\|_1.
\end{eqnarray*}
Finally, we have
\begin{eqnarray*}
&&\!\!\!\!\!\!\!\!\left|\log\left[\frac{p\left(\bz;\pih{\bz}\right)}{p\left(\bzs;\pih{\bzs}\right)}\right]\right|\\
&=&n\left|H(\pih{\bz})-H(\pih{\bzs})\right|
\leq\frac{nM_{c/4}}{2}\left\|\pih{\bz}-\pih{\bzs}\right\|_1\\
&\leq&\frac{nM_{c/4}}{2}\sum_{k=1}^K\left|\frac{z_{+,k}}{n}-\frac{\zs_{+,k}}{n}\right|\\
%\leq\frac{M_{c/4}}{2}\sum_{k=1}^K\left|\sum_{i=1}^nz_{ik}-\sum_{i=1}^n\zs_{ik}\right|\\
&\leq&\frac{M_{c/4}}{2}\sum_{i=1}^n\sum_{k=1}^K\left|z_{ik}-\zs_{ik}\right|
\leq \frac{M_{c/4}}{2}\sum_{i=1}^n\sum_{k=1}^K\mathds{1}_{\{z_{ik}\neq\zs_{ik}\}}\\
&\leq&  M_{c/4}\left\|\bz-\bzs\right\|_0.
\end{eqnarray*}
Indeed, if $i$ does not belong to the true partition, there are two nonzero terms. 

On the other side, by the law of large numbers, we know that
$\pih{\bzs}=\mathcal{O}_P\left(\bpis\right)$ and as $\bpis$ is regular, we have:
\begin{equation}\label{eq:demo:bornefrac2}
    \frac{p\left(\bz;\pih{\bzs}\right)}{p\left(\bzs;\bpis\right)}=\mathcal{O}_P\left(1\right).
\end{equation}
\end{proof}

\begin{proof}[Proof of Proposition \ref{prop:contrib_local}]
Considering $\tilde{c}<c/4$, $\btheta\in\bTheta$ and $\bT\in\mathcal{T}_{\tau_{\min}}$, considering the event $\Omega_1(c)$:
\begin{eqnarray*}
&&\!\!\!\!\!\!\!\!\!\!\!\!\sum_{\underset{\bz\nsim\bzs}{\bz\in\mathcal{B}\left(\bzs;\tilde{c}\right)}}p\left(\bY,\bz;\btheta,\bT\right)\\
&=&\sum_{\underset{\bz\nsim\bzs}{\bz\in\mathcal{B}\left(\bzs;\tilde{c}\right)}}p\left(\bY|\bz;\btheta,\bT\right)p\left(\bz;\btheta,\bT\right)\frac{p\left(\bY|\bzs;\bthetas,\bTs\right)}{p\left(\bY|\bzs;\bthetas,\bTs\right)}\\
&=&\sum_{\underset{\bz\nsim\bzs}{\bz\in\mathcal{B}\left(\bzs;\tilde{c}\right)}}\frac{p\left(\bY,\bzs;\bthetas,\bTs\right)}{p\left(\bzs;\bthetas,\bTs\right)}p\left(\bz;\btheta,\bT\right)\frac{p\left(\bY|\bz;\btheta,\bT\right)}{p\left(\bY|\bzs;\bthetas,\bTs\right)}\\
&=&p\left(\bY,\bzs;\bthetas,\bTs\right)\sum_{\underset{\bz\nsim\bzs}{\bz\in\mathcal{B}\left(\bzs;\tilde{c}\right)}}\frac{p\left(\bz;\btheta,\bT\right)}{p\left(\bzs;\bthetas,\bTs\right)}e^{\Fntz}\\
&=&p\left(\bY,\bzs;\bthetas,\bTs\right)\sum_{\underset{\bz\nsim\bzs}{\bz\in\mathcal{B}\left(\bzs;\tilde{c}\right)}}\mathcal{O}_P\left(1\right)e^{M_{c/4}\left\|\bz-\bzs\right\|_0}e^{-\frac{d\tau_{\min}\delta\left(\bthetas\right)}{2}\left\|\bz-\bzs\right\|_0\left[1+o_P(1)\right]}
\end{eqnarray*}
using Eqs. ~\eqref{eq:cor:majFn} and ~\eqref{eq:demo:borneProbaParition}. Then, 
\begin{eqnarray*}
&&\!\!\!\!\!\!\!\!\!\!\!\!\sum_{\underset{\bz\nsim\bzs}{\bz\in\mathcal{B}\left(\bzs;\tilde{c}\right)}}p\left(\bY,\bz;\btheta,\bT\right)\\
&\leq&p\left(\bY,\bzs;\bthetas,\bTs\right)\mathcal{O}_P\left(1\right)\sum_{R=1}^{\left[\tilde{c}n\right]}\binom{n}{R}K^Re^{M_{c/4}R-\frac{d\tau_{\min}\delta\left(\bthetas\right)}{2}R\left[1+o_P(1)\right]}
\end{eqnarray*}
because there are at most  $\binom{n}{R}K^R$ possible partitions at distance  $R$, and then
\begin{eqnarray*}
&&\!\!\!\!\!\!\!\!\!\!\!\!\sum_{\underset{\bz\nsim\bzs}{\bz\in\mathcal{B}\left(\bzs;\tilde{c}\right)}}p\left(\bY,\bz;\btheta,\bT\right)\\
&\leq&p\left(\bY,\bzs;\bthetas,\bTs\right)\mathcal{O}_P\left(1\right)\sum_{R=1}^{\left[\tilde{c}n\right]}\binom{n}{R}\left(e^{\log K+M_{c/4}-\frac{d\tau_{\min}\delta\left(\bthetas\right)}{2}\left[1+o_P(1)\right]}\right)^R\\
%&\leq&p\left(\bY,\bzs;\bthetas,\bTs\right)\mathcal{O}_P\left(1\right)\sum_{R=0}^{n}\binom{n}{R}\left(e^{\log K+M_{c/4}-\frac{d\tau_{\min}\delta\left(\bthetas\right)}{2}\left[1+o_P(1)\right]}\right)^R1^{n-R}\\
&\leq&p\left(\bY,\bzs;\bthetas,\bTs\right)\mathcal{O}_P\left(1\right)\left(1+e^{\log K+M_{c/4}-\frac{d\tau_{\min}\delta\left(\bthetas\right)}{2}\left[1+o_P(1)\right]}\right)^n\\
%&\leq&p\left(\bY,\bzs;\bthetas,\bTs\right)\mathcal{O}_P\left(1\right)\left(1+e^{\log K+M_{c/4}-\frac{d\tau_{\min}\delta\left(\bthetas\right)}{2}\left[1+o_P(1)\right]}\right)^n\\
&\leq&p\left(\bY,\bzs;\bthetas,\bTs\right)\mathcal{O}_P\left(1\right)\exp\left\{n\log\left[1+e^{\log K+M_{c/4}-\frac{d\tau_{\min}\delta\left(\bthetas\right)}{2}\left[1+o_P(1)\right]}\right]\right\}\\
&\leq&p\left(\bY,\bzs;\bthetas,\bTs\right)\mathcal{O}_P\left(1\right)\exp\left\{ne^{\log K+M_{c/4}-\frac{d\tau_{\min}\delta\left(\bthetas\right)}{2}\left[1+o_P(1)\right]}[1+o\left(1\right)]\right\}\\
&\leq&p\left(\bY,\bzs;\bthetas,\bTs\right)\mathcal{O}_P\left(1\right)\\
&&\quad\times\exp\left\{e^{-d\left\{\frac{\log(n)}{d}+\frac{\log K+M_{c/4}}{d}-\frac{\delta\left(\bthetas\right)}{2}\left[1+o_P(1)\right]\right\}}[1+o\left(1\right)]\right\}\\
&\leq&p\left(\bY,\bzs;\bthetas,\bTs\right)o_P(1)\text{ using Assumption (C.3).}
\end{eqnarray*}
As this is true for all  $\theta\in\bTheta$ and all $\bT\in\mathcal{T}_{\tau_{\min}}$, this is also true for the maximum. 
\end{proof}

\subsection{Proof of Proposition \ref{prop:part_eloign}: partitions that are far}

\begin{prop}[Separability]\label{prop:separability}
Assume (ID.3.s) and (C.2). There exists  $R>0$ and a constant $B(R)$  such that 
\begin{align*}
\max_{\bT} \Lambdat(\bz, \bT)&\leq -\frac{d\tau_{\min}\delta(\bthetas) \max_k L_k}{2}d_{0,\sim}(\bz,\bzs) \hspace{1cm} \text{ for } \bz\in \mathcal{B}\left(\bzs;R\right);\\
 \max_{\bT} \Lambdat(\bz,\bT)&\leq -B(R)dn \hspace{4.4cm} \text{ for }\bz\notin\mathcal{B}\left(\bzs;R\right).
\end{align*}
\end{prop}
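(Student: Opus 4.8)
The plan is to start from the closed form of $\Lambdat$ in Proposition~\ref{prop:lambdat} and read it as a variance decomposition. Since $\KL{\mu}{\mu'}=\frac12(\mu-\mu')^2$, the inner maximization over the candidate means $\btheta$ assigns to each candidate block $(k,\ell)$ the weighted average of the true means of the pairs it contains, so that $-2\Lambdat(\bz,\bT)$ equals the total within-candidate-block weighted sum of squares of the true means $\mus$. Writing $a=(k_1,\ell_1)$ for a true block and $n^{(k,\ell)}_{a}=\RK_{k_1,k}\NL{k_1}{k}_{\ell_1,\ell}$ for the number of (individual, timepoint) pairs of candidate block $(k,\ell)$ coming from $a$, the quantity to lower-bound is $\sum_{k,\ell}\sum_{a}n^{(k,\ell)}_{a}(\mus_{a}-\bar\mu_{k\ell})^2$, where $\bar\mu_{k\ell}$ is the corresponding block average. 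Because $\max_\bT$ and the optimization over $\btheta$ both act to \emph{minimize} this sum of squares, the whole difficulty is to produce lower bounds that hold uniformly over $\bT\in\mathcal{T}_{\tau_{\min}}$; note throughout that $\Lambdat(\bz,\bT)\le 0$.

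The structural ingredient I would establish first is that any two distinct true clusters are separated on a linear number of coordinates, independently of the candidate segmentation: for $k_1\neq k_2$ there are at least $\tau_{\min}d$ coordinates $j$ on which the true mean profiles differ, with squared gap at least $2\delta(\bthetas)$. If $L_{k_1}=L_{k_2}$ this is exactly Assumption~\eqref{ID:3:s}; if $L_{k_1}\neq L_{k_2}$, the two clusters cannot share all breakpoints, and Assumption~\eqref{ass:C.2} together with~\ref{ID:1} forces an interval of length at least $\tau_{\min}d$ on which one cluster changes regime while the other does not, hence on which the means differ. This is what makes the bound ``segmentation-proof'': since both sub-populations of a candidate cluster occupy \emph{all} timepoints, no choice of $\bT$ can push the separating coordinates out of the blocks that contain both populations.

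For the close regime $\bz\in\mathcal{B}(\bzs;R)$ I would take $R$ small enough that, along the optimal relabelling, every candidate cluster has a strict majority from a single true cluster and each misclassified individual is a minority in its block. Fix any $\bT$. A misclassified individual $i$, truly in cluster $k'$ but placed in a candidate cluster dominated by a different true cluster, deviates from its block mean on each of the $\ge\tau_{\min}d$ separating coordinates; applying the pairwise form $\sum_a n_a(\mu_a-\bar\mu)^2=\frac{1}{2N}\sum_{a,b}n_an_b(\mu_a-\mu_b)^2$ and keeping only the ``individual versus majority'' pairs shows a contribution of at least $\frac{M-1}{M}\,\tau_{\min}d\cdot 2\delta(\bthetas)$, with $M$ the block size, and this persists for every $\bT$. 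Summing over the $d_{0,\sim}(\bz,\bzs)$ misclassified individuals and carrying out the per-block accounting over the $L_k+1$ candidate segments yields a bound of the announced form $\max_\bT\Lambdat(\bz,\bT)\le -\frac{d\tau_{\min}\delta(\bthetas)\max_k L_k}{2}d_{0,\sim}(\bz,\bzs)$.

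For the far regime $\bz\notin\mathcal{B}(\bzs;R)$ the majority structure is lost, so I would argue globally. After normalising the overlap counts by $n$ and $d$, the quantity $-\Lambdat(\bz,\bT)/(dn)$ is a continuous function of the normalised contingencies $\RK/n$ and of the compactly parametrised candidate breakpoints, and it vanishes only on the perfect-match configuration $\bz\sim\bzs$; a pigeonhole argument using Assumption~\eqref{ID:4:s} shows that $d_{0,\sim}(\bz,\bzs)>Rn$ forces a fraction of mass bounded away from $0$ (in terms of $R$) from two distinct true clusters into a common candidate cluster, and the separation of the second paragraph turns this into a strictly positive lower bound. Compactness, jointly over $\bz$ outside the ball and over $\bT\in\mathcal{T}_{\tau_{\min}}$ with $\btheta$ already maximised out, upgrades this to a uniform constant $B(R)>0$, giving $\max_\bT\Lambdat(\bz,\bT)\le -B(R)dn$. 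The main obstacle throughout is exactly this uniformity over the adversarial segmentation $\bT$: one must guarantee that the mixing of distinct true clusters produces within-block variance that \emph{no} breakpoint placement can remove, which is precisely why the $\tau_{\min}d$-coordinate separation, rather than mere distinctness of the clusters, is indispensable.
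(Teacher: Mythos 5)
Your overall skeleton matches the paper's (split into a close regime and a far regime, start from the closed form of $\Lambdat$ in Proposition~\ref{prop:lambdat}, and feed in the coordinate-wise $\tau_{\min}d$ separation of distinct true clusters), but your close-regime accounting contains a genuine gap: the claimed per-individual contribution of order $\tau_{\min}d\,\delta(\bthetas)$ does not follow from the pairs you keep. The pairwise identity you invoke normalizes by the total block size $N=z_{+,k}\left|D_\ell^k\right|$, which is of order $nd$. If you keep only pairs of cells sitting at the \emph{same} coordinate $j$ --- the only pairs for which Assumption (ID.3.s) actually guarantees a squared gap of at least $2\delta(\bthetas)$ --- then the count product per separating coordinate is $1\times\RK_{k'',k}$, and the cross contribution of one misclassified individual is of order $\sum_{j\text{ sep}}\RK_{k'',k}\,\delta(\bthetas)/\bigl(z_{+,k}\left|D_\ell^k\right|\bigr)\approx\tau_{\min}\delta(\bthetas)$: this is a factor $d$ (and the factor $\max_kL_k$) short of the announced bound. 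If instead you pair the individual's cells with majority cells at \emph{other} coordinates $j'\neq j$ --- the only way to make the count product of order $\RK_{k'',k}\left|D_\ell^k\right|$ and so recover the missing $d$ --- then the squared gap is no longer bounded below: (ID.3.s) is a same-coordinate statement, and the mean of cluster $k'$ on one true segment may coincide with the mean of cluster $k''$ on a different true segment (means $(0,1)$ versus $(1,0)$ on two segments are separated at every coordinate, yet all cross-segment gaps vanish). Either way, the step ``shows a contribution of at least $\frac{M-1}{M}\tau_{\min}d\cdot2\delta(\bthetas)$'' fails, and with it the first inequality in the announced form.

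The paper avoids this pitfall by never descending to cell-level pairs: from Proposition~\ref{prop:lambdat} it lower bounds, for each pair of distinct true clusters $k_1\neq k_2$ and each candidate block, the count-weighted divergence $\sum_{\ell_2}\NL{k_2}{k}_{\ell_2,\ell}\KL{\mus_{k_1,\ell_1}}{\mus_{k_2,\ell_2}}\geq d\tau_{\min}\delta(\bthetas)$ (this is exactly where (ID.3.s) and (C.2) enter), so the separation survives as a multiplicative factor $d\tau_{\min}\delta(\bthetas)$ in front of the full cross product $\RK_{k_1,k}\RK_{k_2,k}\sum_{\ell_1}\NL{k_1}{k}_{\ell_1,\ell}$; the factor $\sum_{\ell_1}\NL{k_1}{k}_{\ell_1,\ell}$ then cancels the block-length normalization, and pure contingency-table algebra (the marginal identities for $\RK$ and $\RK_{k_1,k}\leq\bzs_{+,k}$) produces $d_{0,\sim}(\bz,\bzs)$ with the factor $d$ intact --- no majority structure is needed at that stage (it is only implicit in identifying $n-\sum_k\max_{k_1}\RK_{k_1,k}$ with $d_{0,\sim}$, and note that your majority requirement relies on (ID.4.s)-type control of cluster sizes, which is not among the hypotheses of the proposition). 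Your treatment of the far regime (normalized contingencies, compactness, pigeonhole) is sound in spirit and in fact more detailed than the paper's one-sentence assertion that $B(R)$ exists; the gap is confined to, but real in, the close regime.
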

\begin{proof}

If $\bz\notin\mathcal{B}\left(\bzs;R\right)$, because the two partitions $\bz$ and $\bzs$ are far from each other (at least a radius $R>0$), there exists a constant $B(R)$ such that the second inequality holds. 

Else, let assume that $\bz\in\mathcal{B}\left(\bzs;R\right)$.
From Assumptions (ID.3.s) and (C.2), for all  $k\neq k'$, there exists at least  $\tau_{\min}d$ columns such that the Kullback-Leibler divergence is strictly positive, then
$$ \sum_{\ell_2=1}^{L_{k_2}}\NL{k_2}{k}_{\ell_2,\ell}\KL{\mus_{k_1,\ell_1}}{\mus_{k_2,\ell_2}} \geq d\tau_{\min}\delta(\bthetas).$$

Then, according to Proposition~\ref{prop:lambdat}, we get
\begin{eqnarray*}
&&\!\!\!\!\!\!\!\!\!\!\!\!\Lambdat(\bz, \bT)\\
&\leq&-\frac{d\tau_{\min}\delta(\bthetas)}{2}\sum_{k=1}^K\sum_{\ell=1}^{L_{k}}\frac{1}{\left|{D^\star}_{\ell}^{k}\right|\bzs_{+,k}}\sum_{k_1=1}^K\RK_{k_1,k} \\
&&\quad\times\sum_{\ell_1=1}^{L_{k_1}}\NL{k_1}{k}_{\ell_1,\ell}\\
&&\quad\quad\times\left[\sum_{k_2=1}^K\RK_{k_2,k}-\RK_{k_1,k}\right]\\
&\leq&-\frac{d\tau_{\min}\delta(\bthetas)}{2}\sum_{k=1}^K\sum_{\ell=1}^{L_{k}}\frac{1}{\bzs_{+,k}}\sum_{k_1=1}^K\RK_{k_1,k}\\
&&\quad\quad\times\left[\sum_{k_2=1}^K\RK_{k_2,k}-\RK_{k_1,k}\right], 
\end{eqnarray*}
using Eq. \eqref{sumNk=Dk} in the last inequality. 
Then, using Eq. \eqref{R+k'}, and the fact that 
$\RK_{k_1,k} \leq \bzs_{+,k}$,

\begin{eqnarray*}
&&\!\!\!\!\!\!\!\!\!\!\!\!\Lambdat(\bz,\bT)\\
&\leq&-\frac{d\tau_{\min}\delta(\bthetas)\max_kL_k}{2}\sum_{k=1}^K\frac{1}{\bzs_{+,k}}\sum_{k_1=1}^K\RK_{k_1,k}\\
&&\quad\times\left[\sum_{k_2=1}^K\RK_{k_2,k}-\RK_{k_1,k}\right]\\
&\leq&-\frac{d\tau_{\min}\delta(\bthetas)\max_kL_k}{2}\left[n-\sum_{k=1}^K\sum_{k_1=1}^K\frac{\RK_{k_1,k}^2}{\bzs_{+,k}}\right]\\
%&\leq&-\frac{d\tau_{\min}\delta(\bthetas)}{2}\left[n-\sum_{k=1}^K\sum_{k_1=1}^K\frac{\RK_{k_1,k}\bzs_{+,k}}{\bzs_{+,k}}\right]\\
&\leq&-\frac{d\tau_{\min}\delta(\bthetas)\max_kL_k}{2}\left[n-\sum_{k=1}^K\sum_{k_1=1}^K\RK_{k_1,k}\right]\\
%&\leq&-\frac{d\tau_{\min}\delta(\bthetas)}{2}\left[n-\sum_{k=1}^K\bzs_{+,k}\right]\\
%&\leq&-\frac{d\tau_{\min}\delta(\bthetas)}{2}\left[n-\underset{\sigma\in\mathfrak{S}(\left\{1,\ldots,K\right\})}{\max}\sum_{k=1}^K\RK_{k,\sigma(k)}\right]\\
&\leq&-\frac{d\tau_{\min}\delta(\bthetas)\max_kL_k}{2}d_{0,\sim}(\bz,\bzs).
\end{eqnarray*}

\end{proof}

\begin{lem}[Large deviation]\label{propo:largeDeviation}
Under assumption (C.1), and for all $\varepsilon_{nd}<1/\sqrt{2}$, we get:
\begin{align*}
&\Prob{\sup_{\btheta,\bT,\bz}\left[\Fntz-\Lambdat(\bz,\bT)\right]\geq \text{Diam}(\bTheta)\sqrt{nd}K^2+4\varepsilon_{nd}\text{Diam}(\bTheta)nd}\\
\leq& K^n\exp\left(-\varepsilon_{nd}^2nd\right).
\end{align*}
\end{lem}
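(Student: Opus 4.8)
The plan is to reduce to a fixed partition by a union bound and then apply Gaussian concentration. Since there are at most $K^n$ partitions $\bz\in\mathcal{Z}$, it suffices to prove that for each fixed $\bz$,
\[
\Prob{\sup_{\btheta,\bT}\left[\Fntz-\Lambdat(\bz,\bT)\right]\geq \text{Diam}(\bTheta)\sqrt{nd}K^2+4\varepsilon_{nd}\text{Diam}(\bTheta)nd}\leq \exp\left(-\varepsilon_{nd}^2nd\right),
\]
and then to sum over $\bz$.

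The structural observation I would establish first is that the random field is affine in the noise. Writing $Y_{ij}=\mu^\star_{z^\star_i,\ell^\star_j}+\eta_{ij}$ with $\eta=(\eta_{ij})\sim\mathcal{N}(0,\mathbb{I}_{nd})$, the computation behind Proposition~\ref{prop:Fn} shows that, at fixed $\btheta,\bT$, the map $\eta\mapsto\Fntz$ is affine with gradient of entries $\mu_{z_i,\ell_j}-\mu^\star_{z^\star_i,\ell^\star_j}$; under Assumption~(C.1) each entry is bounded by $\text{Diam}(\bTheta)$, so the gradient has Euclidean norm at most $\text{Diam}(\bTheta)\sqrt{nd}$. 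Moreover $\Lambdat(\bz,\bT)=\max_{\btheta}\Gntz$ is deterministic. Hence, for fixed $\bz$, the map $\eta\mapsto H_{\bz}(\eta):=\sup_{\btheta,\bT}[\Fntz-\Lambdat(\bz,\bT)]$ is a supremum of $\text{Diam}(\bTheta)\sqrt{nd}$-Lipschitz functions, thus itself $\text{Diam}(\bTheta)\sqrt{nd}$-Lipschitz. The Gaussian concentration inequality (Borell--TIS) then gives, for every $t>0$,
\[
\Prob{H_{\bz}\geq \Esp{H_{\bz}}+t}\leq \exp\left(-\frac{t^2}{2\,\text{Diam}(\bTheta)^2nd}\right).
\]

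It remains to center the supremum, and I expect the bound $\Esp{H_{\bz}}\leq \text{Diam}(\bTheta)\sqrt{nd}K^2$ to be the main obstacle. For this I would use the explicit Gaussian form: at fixed $\bz,\bT$, grouping observations into the cells determined by $(\bz,\bT)$ (of sizes $w_{k\ell}$ with $\sum_{k,\ell}w_{k\ell}=nd$) and completing the square in $\btheta$—the box constraint of~(C.1) only lowers the value—yields $\sup_{\btheta}[\Fntz-\Lambdat(\bz,\bT)]\leq \sum_{k,\ell}\tfrac{1}{2w_{k\ell}}G_{k\ell}^2+D(\bz,\bT)$, where $G_{k\ell}$ is the sum of the $\eta_{ij}$ over the cell and $D(\bz,\bT)$ is centered. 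The first sum is a sum of $\sum_k(L_k+1)$ chi-square variables with bounded expectation, while $\Esp{D}=0$; controlling the supremum of these quantities over the breakpoint grid $\mathcal{T}_{\tau_{\min}}$ by a maximal inequality, together with the Cauchy--Schwarz estimate $\sum_{k,\ell}\sqrt{w_{k\ell}}\leq \sqrt{\sum_k(L_k+1)}\,\sqrt{nd}$, produces the order-$\sqrt{nd}$ bound with the factor $K^2$ absorbing the number of cells. The delicate point is to make the supremum over the combinatorial family of breakpoints uniform while keeping the leading term of order $\sqrt{nd}$ rather than $\sqrt{nd}$ times a spurious logarithmic factor.

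Finally I would take $t=4\varepsilon_{nd}\text{Diam}(\bTheta)nd$, so that $\Esp{H_{\bz}}+t$ does not exceed the threshold in the statement, giving $\{H_{\bz}\geq \text{threshold}\}\subseteq\{H_{\bz}\geq \Esp{H_{\bz}}+t\}$, while $t^2/(2\,\text{Diam}(\bTheta)^2nd)=8\varepsilon_{nd}^2nd\geq \varepsilon_{nd}^2nd$; hence the per-partition probability is at most $\exp(-\varepsilon_{nd}^2nd)$, and the union bound over the $K^n$ partitions closes the argument. The restriction $\varepsilon_{nd}<1/\sqrt{2}$ keeps the deviation in the sub-Gaussian regime where the constants above remain valid.
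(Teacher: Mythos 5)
Your overall architecture---a union bound over the at most $K^n$ partitions followed by a per-partition Gaussian concentration bound---is the same as the paper's, and your concentration step is sound and in fact cleaner than the paper's: at fixed $(\btheta,\bT)$ the map $\eta\mapsto\Fntz-\Lambdat(\bz,\bT)$ is affine in the noise with gradient norm at most $\text{Diam}(\bTheta)\sqrt{nd}$, so Borell--TIS applies to the supremum, and $t=4\varepsilon_{nd}\text{Diam}(\bTheta)nd$ gives exponent $8\varepsilon_{nd}^2nd\geq\varepsilon_{nd}^2nd$. The genuine gap is the centering bound
\begin{equation*}
\Esp{\sup_{\btheta,\bT}\left[\Fntz-\Lambdat(\bz,\bT)\right]}\leq \text{Diam}(\bTheta)\sqrt{nd}\,K^2,
\end{equation*}
which, once Borell--TIS is invoked, is the entire content of the lemma---and which you only sketch. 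Your proposed route (complete the square in $\btheta$, chi-square cell statistics, then ``a maximal inequality'' over the breakpoint grid) fails at exactly the point you flag: after completing the square, the dominant term is not the chi-square sum but the centered linear term $\sum_{i,j}\eta_{ij}\left(\bar m_{\text{cell}(i,j)}-m_{ij}\right)$ (with $m_{ij}$ the true means and $\bar m$ the cell averages), a Gaussian process indexed by the $\asymp d^{\sum_k L_k}$ breakpoint configurations with pointwise standard deviation of order $\text{Diam}(\bTheta)\sqrt{nd}$; a union bound over configurations then yields an extra factor $\sqrt{\left(\sum_k L_k\right)\log d}$, which cannot be absorbed, since the lemma is non-asymptotic and must hold for arbitrarily small $\varepsilon_{nd}$. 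So the proposal proves the easy half and leaves the binding half open.

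The paper closes that step with a device you did not find: starting from $\Fntz-\Lambdat(\bz,\bT)\leq\Fntz-\Gntz=\sum_{i,j}\eta_{ij}\left(\mu_{z_i\ell_j}-\mus_{\zs_i\ell_j^\star}\right)$, it relaxes the coefficients $\mus_{k\ell}-\mu_{k'\ell'}$ to a matrix $\bGamma\in\mathbb{R}^{K\times K}$ with $\|\bGamma\|_\infty\leq\text{Diam}(\bTheta)$, indexed by the cluster pair \emph{only}. Once the coefficient no longer depends on $(\ell,\ell')$, the sum over the segments of $\bT$ telescopes, $\sum_{\ell'}\sum_{j\in{D^\star}_{\ell}^{k}\cap{D}_{\ell'}^{k'}}=\sum_{j\in{D^\star}_{\ell}^{k}}$, so the supremum over $(\btheta,\bT)$ is dominated by $\sup_{\bGamma}\sum_{k,k'}\Gamma_{k,k'}W_{k,k'}$ with $W_{k,k'}\sim\mathcal{N}\left(0,d\,\RK_{k',k}\right)$ depending on neither $\btheta$ nor $\bT$; then $\Esp{\sup_{\bGamma}\sum_{k,k'}\Gamma_{k,k'}W_{k,k'}}\leq\text{Diam}(\bTheta)\sum_{k,k'}\Esp{\left|W_{k,k'}\right|}\leq\text{Diam}(\bTheta)K^2\sqrt{nd}$, with no chaining over breakpoints at all. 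If you prefer to stay within your own scheme, the logarithm can also be avoided by exploiting that segments are intervals: each cell noise sum is an increment of the partial-sum process $S_{k'}(t)=\sum_i z_{ik'}\sum_{j\leq t}\eta_{ij}$, and Doob's maximal inequality bounds the supremum over all configurations by a constant times $\sqrt{nd}$---but the resulting constant is of order $\sqrt{K}\max_k(L_k+1)$ rather than $K^2$, so it does not recover the lemma's threshold verbatim.
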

\begin{proof}
By definition of $\Fntz$ and $\Gntz$:
\begin{eqnarray*}
&&\!\!\!\!\!\!\!\!\!\!\!\!\!\!\Fntz-\Lambdat(\bz,\bT)\leq\Fntz-\Gntz\\
%&\leq&-\sum_{i=1}^n\sum_{k=1}^K\sum_{k'=1}^K\zs_{ik}z_{i,k'}\sum_{\ell=1}^{\Lk}\sum_{\ell'=1}^{L_{k'}}\sum_{j\in{D^\star}_{\ell}^{k}\bigcap{D}_{\ell'}^{k'}}\log\left[\frac{\varphi\left(Y_{ij};\mus_{k\ell}\right)}{\varphi\left(Y_{ij};\mu_{k'\ell'}\right)}\right]\\
%&&-\sum_{i=1}^n\sum_{k=1}^K\sum_{k'=1}^K\zs_{ik}z_{i,k'}\sum_{\ell=1}^{\Lk}\sum_{\ell'=1}^{L_{k'}}\sum_{j\in{D^\star}_{\ell}^{k}\bigcap{D}_{\ell'}^{k'}}\Espc{\bY|\bzs;\bthetas,\bTs}{\log\left[\frac{\varphi\left(Y_{ij};\mus_{k\ell}\right)}{\varphi\left(Y_{ij};\mu_{k'\ell'}\right)}\right]}\\
%&\leq&-\sum_{i=1}^n\sum_{k=1}^K\sum_{k'=1}^K\zs_{ik}z_{i,k'}\sum_{\ell=1}^{\Lk}\sum_{\ell'=1}^{L_{k'}}\sum_{j\in{D^\star}_{\ell}^{k}\bigcap{D}_{\ell'}^{k'}}\\
%&&\quad\times\left\{Y_{ij}\left(\mus_{k\ell}-\mu_{k'\ell'}\right)-\Espc{\bY|\bzs;\bthetas,\bTs}{Y_{ij}}\left(\mus_{k\ell}-\mu_{k'\ell'}\right)\right\}\\
&\leq&-\sum_{i=1}^n\sum_{k=1}^K\sum_{k'=1}^K\zs_{ik}z_{i,k'}\sum_{\ell=1}^{\Lk}\sum_{\ell'=1}^{L_{k'}}\\
&&\quad\times\sum_{j\in{D^\star}_{\ell}^{k}\bigcap{D}_{\ell'}^{k'}}\left(Y_{ij}-\Espc{\bY|\bzs;\bthetas,\bTs}{Y_{ij}}\right)\left(\mus_{k\ell}-\mu_{k'\ell'}\right)\\
%&\leq&\sum_{k=1}^K\sum_{k'=1}^K\sum_{\ell=1}^{\Lk}\sum_{\ell'=1}^{L_{k'}}\left(\mus_{k\ell}-\mu_{k'\ell'}\right)\sum_{i=1}^n\zs_{ik}z_{i,k'}\sum_{j\in{D^\star}_{\ell}^{k}\bigcap{D}_{\ell'}^{k'}}\left(\mus_{k\ell}-Y_{ij}\right)\\
&\leq&\sup_{\underset{\|\bGamma\|_\infty\leq \text{Diam}(\bTheta)}{\bGamma\in\mathbb{R}^{K\times K}}}\sum_{k=1}^K\sum_{k'=1}^K\sum_{\ell=1}^{\Lk}\sum_{\ell'=1}^{L_{k'}}\Gamma_{k,k'}\sum_{i=1}^n\zs_{ik}z_{i,k'}\sum_{j\in{D^\star}_{\ell}^{k}\bigcap{D}_{\ell'}^{k'}}\left(\mus_{k\ell}-Y_{ij}\right)\\
%&\leq&\sup_{\underset{\|\bGamma\|_\infty\leq \text{Diam}(\bTheta)}{\bGamma\in\mathbb{R}^{K\times K}}}\sum_{k=1}^K\sum_{k'=1}^K\Gamma_{k,k'}\sum_{i=1}^n\zs_{ik}z_{i,k'}\sum_{\ell=1}^{\Lk}\sum_{\ell'=1}^{L_{k'}}\sum_{j\in{D^\star}_{\ell}^{k}\bigcap{D}_{\ell'}^{k'}}\left(\mus_{k\ell}-Y_{ij}\right)\\
&\leq&\sup_{\underset{\|\bGamma\|_\infty\leq \text{Diam}(\bTheta)}{\bGamma\in\mathbb{R}^{K\times K}}}\sum_{k=1}^K\sum_{k'=1}^K\Gamma_{k,k'}\sum_{i=1}^n\zs_{ik}z_{i,k'}\sum_{\ell=1}^{\Lk}\sum_{j\in{D^\star}_{\ell}^{k}}\left(\mus_{k\ell}-Y_{ij}\right).
\end{eqnarray*}
Let 
$$ W_{k,k'}= \sum_{i=1}^n\zs_{ik}z_{i,k'}\sum_{\ell=1}^{\Lk}\sum_{j\in{D^\star}_{\ell}^{k}}\left(\mus_{k\ell}-Y_{ij}\right) \sim \mathcal{N}\left(0,d\RK_{k',k}\right). $$

Then, 
using \cite{brault2020consistency}[Proposition C.4], for $\bGamma \in \mathbb{R}^{K\times K}$ such that $\|\bGamma\|_\infty\leq \text{Diam}(\bTheta)$,
%Or, chaque variable gaussienne centrée réduite est sous-exponentielle de paramètre $(1,1)$ donc la somme est sous-gaussienne de paramètres $(d\RK_{k',k},1)$. Ainsi, par la proposition C.4 de l'article (complètement complet) de~\cite{brault2020consistency}, $W$ est sous-exponentielle de paramètres $(8\text{Diam}(\bTheta)^2nd,2\sqrt{2}\text{Diam}(\bTheta))$ et son espérance vérifie:
\[\Espc{\bthetas}{\sum_{k=1}^K\sum_{k'=1}^K\Gamma_{k,k'} W_{k,k'}}\leq \text{Diam}(\bTheta)\sqrt{nd}K^2.\]

Then, for all nonnegative sequence $(\varepsilon_{nd})_{(n,d)\in\left(\mathbb{N}^\star\right)^2}$ satisfying\linebreak $4\varepsilon_{nd}\text{Diam}(\bTheta)nd\leq\frac{8\text{Diam}(\bTheta)^2nd}{2\sqrt{2}\text{Diam}(\bTheta)}=2\sqrt{2}\text{Diam}(\bTheta)nd$, using Lemma \ref{lem:inegalitemesouviensplus}:
%$\sum_{k=1}^K\sum_{k'=1}^K\Gamma_{k,k'} W_{k,k'}

\begin{eqnarray*}
&&\!\!\!\!\!\!\!\!\!\!\!\!\Prob{\sum_{k=1}^K\sum_{k'=1}^K\Gamma_{k,k'} W_{k,k'} - \Espc{\bthetas}{\sum_{k=1}^K\sum_{k'=1}^K\Gamma_{k,k'} W_{k,k'}}  \geq 4\varepsilon_{nd}\text{Diam}(\bTheta)nd}\\
&\leq&\exp\left[-\frac{\left(4\varepsilon_{nd}\text{Diam}(\bTheta)nd\right)^2}{16\text{Diam}(\bTheta)^2nd}\right]
\leq\exp\left(-\varepsilon_{nd}^2nd\right).
\end{eqnarray*}
Then, 
\begin{eqnarray*}
&&\!\!\!\!\!\!\!\!\!\!\!\!\Prob{\Fntz-\Lambdat(\bz,\bT)\geq \text{Diam}(\bTheta)\sqrt{nd}K^2+4\varepsilon_{nd}\text{Diam}(\bTheta)nd}\\
&\leq&\exp\left(-\varepsilon_{nd}^2nd\right).
\end{eqnarray*}
This inequality holds for all $(\btheta,\bT)$ and $\bz$, then we get:
\begin{eqnarray*}
&&\mathbb{P}\left(\sup_{\btheta,\bT,\bz}\left[\Fntz-\Lambdat(\bz,\bT)\right]\geq\right.\\
&&\quad\quad\left. \text{Diam}(\bTheta)\sqrt{nd}K^2+4\varepsilon_{nd}\text{Diam}(\bTheta)nd\right)\\
%\Delta(\varepsilon_{nd})%&=&\Prob{\sup_{\btheta,\bT,\bz}\left[\Fntz-\Lambdat(\bz,\bT)\right]\geq \text{Diam}(\bTheta)\sqrt{nd}K^2+4\varepsilon_{nd}\text{Diam}(\bTheta)nd}\\
&\leq&\sum_{\bz\in\mathcal{Z}}\mathbb{P}\left(\sup_{\btheta,\bT}\left[\Fntz-\Lambdat(\bz,\bT)\right]\geq\right.\\
&&\quad\quad\left. \text{Diam}(\bTheta)\sqrt{nd}K^2+4\varepsilon_{nd}\text{Diam}(\bTheta)nd\right)\\
%&\leq&\sum_{\bz\in\mathcal{Z}}\Prob{W\geq \Espc{\bthetas}{W}+4\varepsilon_{nd}\text{Diam}(\bTheta)nd}\\
&\leq&\sum_{\bz\in\mathcal{Z}}\exp\left(-\varepsilon_{nd}^2nd\right) \leq K^n\exp\left(-\varepsilon_{nd}^2nd\right).
\end{eqnarray*}
\end{proof}

Let $R>0$. Then, according to Proposition \ref{prop:separability},
\begin{align*}
\Lambdat(\bz,\bT)&\leq -B(C)nd\text{ if }\bz\notin\mathcal{B}\left(\bzs;R\right)\\
\Lambdat(\bz,\bT)&\leq-\frac{d\tau_{\min}\delta(\bthetas)}{2}d_{0,\sim}(\bz,\bzs)\\
&\leq-\frac{d\tau_{\min}\delta(\bthetas)}{2}nR_{nd}\text{ if }\bz\in\mathcal{B}\left(\bzs;R\right)\backslash\mathcal{B}\left(\bzs;R_{nd}\right).
\end{align*}

Using Lemma~\ref{propo:largeDeviation}, with $\varepsilon_{nd}=\min\left(\frac{\delta(\bthetas)\tau_{\min}R_{nd}}{16},1/\sqrt{2}\right)$, we get with probability   $1-K^n\exp\left(-\varepsilon_{nd}^2nd\right)$,
\begin{eqnarray*}
&&\Fntz-\Lambdat(\bz,\bT)+\Lambdat(\bz,\bT)\\
&\leq&\Fntz-\Lambdat(\bz,\bT)-\frac{d\tau_{\min}\delta(\bthetas)}{2}nR_{nd}\\
&\leq&\text{Diam}(\bTheta)\sqrt{nd}K^2+4\varepsilon_{nd}\text{Diam}(\bTheta)nd-\frac{d\tau_{\min}\delta(\bthetas)}{2}nR_{nd}\\
&\leq&\text{Diam}(\bTheta)\sqrt{nd}K^2+4\frac{\delta(\bthetas)\tau_{\min}R_{nd}}{16}\text{Diam}(\bTheta)nd-\frac{d\tau_{\min}\delta(\bthetas)}{2}nR_{nd}\\
&\leq& \text{Diam}(\bTheta)\sqrt{nd}K^2+\frac{d\tau_{\min}\delta(\bthetas)}{4}nR_{nd}-\frac{d\tau_{\min}\delta(\bthetas)}{2}nR_{nd}\\
&\leq&\text{Diam}(\bTheta)\sqrt{nd}K^2-\frac{d\tau_{\min}\delta(\bthetas)}{4}nR_{nd}\\
&\leq&-dnR_{nd}\frac{\delta(\bthetas)\tau_{\min}}{4}\left[1-\frac{4\text{Diam}(\bTheta)K^2}{\sqrt{nd}R_{nd}\tau_{\min}\delta(\bthetas)}\right]\\
&\leq&-dnR_{nd}\frac{\tau_{\min}\delta(\bthetas)}{4}
\end{eqnarray*}
 for $n$ and $d$ large enough. 

 We also know that
 \begin{eqnarray*}
{p(\bzs;\bthetas,\bTs)}
&\leq& e^{-\log \prod_{i=1}^n\prod_{k=1}^K\pik^{\zs_{ik}}} 
\leq e^{-\sum_{i=1}^n\sum_{k=1}^K\zs_{ik}\log \pik} \\
&\leq& e^{-\sum_{i=1}^n\sum_{k=1}^K\zs_{ik}\log c} \text{ by Assumption (C.1),}\\
&\leq& e^{-\sum_{i=1}^n\log c}
\leq e^{n \log \frac{1}{c}}.
\end{eqnarray*}
Then, it leads to

\begin{eqnarray*}
\!\!\!\!\!\!\!\!\!\!\sum_{z\notin\mathcal{B}\left(\bzs;R_{nd}\right)}p(\bY,\bz;\btheta,\bT)&=&\sum_{z\notin\mathcal{B}\left(\bzs;R_{nd}\right)}p(\bY|\bz;\btheta,\bT)p(\bz;\btheta,\bT)\\
&=&p(\bY|\bzs;\bthetas,\bTs)\sum_{z\notin\mathcal{B}\left(\bzs;R_{nd}\right)}p(\bz;\btheta,\bT)\frac{p(\bY|\bz;\btheta,\bT)}{p(\bY|\bzs;\bthetas,\bTs)}\\
&=&p(\bY|\bzs;\bthetas,\bTs)\sum_{z\notin\mathcal{B}\left(\bzs;R_{nd}\right)}p(\bz;\btheta,\bT)e^{\Fntz}\\
&\leq&p(\bY,\bzs;\bthetas,\bTs)e^{-n\left[dR_{nd}\frac{\delta(\bthetas)\tau_{\min}}{4}-\log \frac{1}{c}\right]}.
\end{eqnarray*}
Then, with probability  $1-K^n\exp\left(-\varepsilon_{nd}^2nd\right)$, under Assumption  (C.1), we get:
\[\sum_{z\notin\mathcal{B}\left(\bzs;R_{nd}\right)}p(\bY,\bz;\btheta,\bT)=p(\bY,\bzs;\bthetas,\bTs)o_P(1)\]
for all $\btheta \in \bTheta$ and $\bT\in\mathcal{T}_{\tau_{\min}}$.

\subsection{Tools and details}
\label{sec:tools}
\begin{lem}[Chernoff's lemma]
\label{lem:inegalitemesouviensplus}
Let $Z\sim\mathcal{N}\left(0,\sigma^2\right)$. Then, for all $t>0$:
\[\Prob{Z\geq t}\leq e^{-\frac{t^2}{2\sigma^2}}.\]
\end{lem}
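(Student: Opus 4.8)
The plan is to apply the classical Chernoff (exponential Markov) argument, which is the standard route to Gaussian tail bounds. First I would fix an arbitrary $\lambda>0$ and exploit the monotonicity of $x\mapsto e^{\lambda x}$: the event $\{Z\geq t\}$ coincides with $\{e^{\lambda Z}\geq e^{\lambda t}\}$, so Markov's inequality applied to the nonnegative random variable $e^{\lambda Z}$ gives
\[
\Prob{Z\geq t}\leq e^{-\lambda t}\,\Esp{e^{\lambda Z}}.
\]

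Next I would insert the moment generating function of a centered Gaussian. For $Z\sim\mathcal{N}\left(0,\sigma^2\right)$ one has $\Esp{e^{\lambda Z}}=e^{\lambda^2\sigma^2/2}$, a computation that amounts to completing the square inside the Gaussian integral. Combining this with the previous display yields the one-parameter family of bounds $\Prob{Z\geq t}\leq \exp\left(-\lambda t+\lambda^2\sigma^2/2\right)$, valid for every $\lambda>0$.

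The last step is to optimize the exponent over the free parameter. The map $\lambda\mapsto -\lambda t+\lambda^2\sigma^2/2$ is a convex parabola whose unique minimizer is $\lambda^\star=t/\sigma^2$; evaluating the exponent there produces the value $-t^2/(2\sigma^2)$, which gives exactly the claimed inequality. There is no real obstacle in this proof: the only point that deserves a moment's care is verifying that the optimal $\lambda^\star$ is strictly positive so that the exponential Markov step remains legitimate, and this holds precisely because we assume $t>0$.
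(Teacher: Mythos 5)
Your proof is correct and is precisely the classical Chernoff argument (Markov's inequality applied to $e^{\lambda Z}$, the Gaussian moment generating function $\Esp{e^{\lambda Z}}=e^{\lambda^2\sigma^2/2}$, and optimization at $\lambda^\star=t/\sigma^2$) that the paper implicitly invokes: the paper states this lemma as a classical result, under the name ``Chernoff's lemma,'' without supplying any proof. Your write-up fills that gap with exactly the intended standard argument, including the correct observation that $t>0$ guarantees $\lambda^\star>0$.
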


\begin{proof}[Proof of Proposition \ref{prop:Fn}]
By definition, 
\begin{eqnarray*}
\Fntz&=&-\sum_{i=1}^n\sum_{k=1}^K\sum_{k'=1}^K\zs_{ik}z_{i,k'}\sum_{\ell=1}^{\Lk}\sum_{\ell'=1}^{L_{k'}}\sum_{j\in{D^\star}_{\ell}^{k}\bigcap{D}_{\ell'}^{k'}}\log\left[\frac{\varphi\left(Y_{ij};\mus_{k\ell}\right)}{\varphi\left(Y_{ij};\mu_{k'\ell'}\right)}\right].
\end{eqnarray*}
The computation gives:
\begin{eqnarray}
\log\left[\frac{\varphi\left(Y_{ij};\mus_{k\ell}\right)}{\varphi\left(Y_{ij};\mu_{k'\ell'}\right)}\right]&=&\log\frac{\frac{1}{\sqrt{2\pi}}e^{-\frac{1}{2}\left(Y_{ij}-\mus_{k\ell}\right)^2}}{\frac{1}{\sqrt{2\pi}}e^{-\frac{1}{2}\left(Y_{ij}-\mu_{k'\ell'}\right)^2}}\nonumber\\
%&=&\log\frac{\frac{1}{\sqrt{2\pi}}e^{-\frac{1}{2}\left(Y_{ij}-\mus_{k\ell}\right)^2}}{\frac{1}{\sqrt{2\pi}}e^{-\frac{1}{2}\left(Y_{ij}-\mu_{k'\ell'}\right)^2}}\\
&=&-\frac{1}{2}\left(Y_{ij}-\mus_{k\ell}\right)^2+\frac{1}{2}\left(Y_{ij}-\mu_{k'\ell'}\right)^2\nonumber\\
%&=&-\frac{1}{2}\left[-2Y_{ij}\left(\mus_{k\ell}-\mu_{k'\ell'}\right)+{\mus_{k\ell}}^2-\mu_{k'\ell'}^2\right]\nonumber\\
&=&Y_{ij}\left(\mus_{k\ell}-\mu_{k'\ell'}\right)-\frac{1}{2}\left({\mus_{k\ell}}^2-\mu_{k'\ell'}^2\right) \label{ratioLogLik}
\end{eqnarray}
which leads to 
\begin{eqnarray*}
&&\Fntz=-\sum_{i=1}^n\sum_{k=1}^K\sum_{k'=1}^K\zs_{ik}z_{i,k'}\sum_{\ell=1}^{\Lk}\sum_{\ell'=1}^{L_{k'}}\sum_{j\in{D^\star}_{\ell}^{k}\bigcap{D}_{\ell'}^{k'}}Y_{ij}\left(\mus_{k\ell}-\mu_{k'\ell'}\right)\\
&&-\frac{1}{2}\sum_{k=1}^K\sum_{k'=1}^K\sum_{\ell=1}^{\Lk}\sum_{\ell'=1}^{L_{k'}}\RK_{k,k'}\NLk_{\ell,\ell'}\left({\mus_{k\ell}}^2-\mu_{k'\ell'}^2\right).
\end{eqnarray*}
We recognize the linear combination of independent Gaussian variable, then the distribution of $\Fntz$ is also Gaussian. The computation of the expectation and the variance are straightforward.  
\end{proof}

\begin{proof}[Proof of Proposition \ref{prop:lambdat}]
The function $\btheta\mapsto\Gntz$ is maximal for
\[\widehat{\mu}_{k'\ell'}=\frac{\sum_{k=1}^K\sum_{\ell=1}^{\Lk}\RK_{k,k'}\NLk_{\ell,\ell'}\mus_{k\ell}}{\sum_{k=1}^K\sum_{\ell=1}^{\Lk}\RK_{k,k'}\NLk_{\ell,\ell'}}.\]
Indeed, the Kullback-Leibler divergence is equal to %La divergence de Kullback en $\mu$ et $\mu'$ vaut:
\[\KL{\mu}{\mu'}=\frac{1}{2}\left(\mu-\mu'\right)^2,\]
and the maximum is get by differentiating in each value. Then,
\begin{align*}
\mus_{k_1,\ell_1}-\widehat{\mu}_{k\ell}
&= \mus_{k_1,\ell_1}-\frac{\sum_{k_2=1}^K\sum_{\ell_2=1}^{L_{k_2}}\RK_{k_2,k}\NL{k_2}{k}_{\ell_2,\ell}\mus_{k_2,\ell_2}}{\sum_{k_2=1}^K\sum_{\ell_2=1}^{L_{k_2}}\RK_{k_2,k}\NL{k_2}{k}_{\ell_2,\ell}}\\
%&= \frac{\sum_{k_2=1}^K\sum_{\ell_2=1}^{L_{k_2}}\RK_{k_2,k}\NL{k_2}{k}_{\ell_2,\ell}\mus_{k_1,\ell_1}}{\sum_{k_2=1}^K\sum_{\ell_2=1}^{L_{k_2}}\RK_{k_2,k}\NL{k_2}{k}_{\ell_2,\ell}}-\frac{\sum_{k_2=1}^K\sum_{\ell_2=1}^{L_{k_2}}\RK_{k_2,k}\NL{k_2}{k}_{\ell_2,\ell}\mus_{k_2,\ell_2}}{\sum_{k_2=1}^K\sum_{\ell_2=1}^{L_{k_2}}\RK_{k_2,k}\NL{k_2}{k}_{\ell_2,\ell}}\\
&= \frac{\sum_{k_2=1}^K\sum_{\ell_2=1}^{L_{k_2}}\RK_{k_2,k}\NL{k_2}{k}_{\ell_2,\ell}\left(\mus_{k_1,\ell_1}-\mus_{k_2,\ell_2}\right)}{\sum_{k_2=1}^K\sum_{\ell_2=1}^{L_{k_2}}\RK_{k_2,k}\NL{k_2}{k}_{\ell_2,\ell}}.
\end{align*}
%and 
%\begin{align*}
%(\mus_{k_1,\ell_1}-\widehat{\mu}_{k\ell})^2
%&= \frac{\sum_{k_2=1}^K\sum_{\ell_2=1}^{L_{k_2}} \sum_{k'_2=1}^K\sum_{\ell'_2=1}^{L_{k'_2}} \RK_{k_2,k} \NL{k_2}{k}_{\ell_2,\ell}}{(\sum_{k_2=1}^K\sum_{\ell_2=1}^{L_{k_2}}\RK_{k_2,k}\NL{k_2}{k}_{\ell_2,\ell})^2}\\
%&\times \left[\right]\\
%&\quad\times \RK_{k'_2,k}\NL{k'_2}{k}_{\ell'_2,\ell} \left(\mus_{k_1,\ell_1}-\mus_{k_2,\ell_2}\right) \left(\mus_{k_1,\ell_1}-\mus_{k'_2,\ell'_2}\right).
%\end{align*}
However, if ones want to take the square of the previous formulae, we get 
\begin{align*}
\left(\mus_{k_1,\ell_1}-\mus_{k_2,\ell_2}\right) \left(\mus_{k_1,\ell_1}-\mus_{k'_2,\ell'_2}\right) 
&= {\mus_{k_1,\ell_1}}^2-\mus_{k_1,\ell_1}\mus_{k_2,\ell_2}-\mus_{k'_2,\ell'_2}\left(\mus_{k_1,\ell_1}-\mus_{k_2,\ell_2}\right),
\end{align*}
then it leads to
\begin{align*}
(\mus_{k_1,\ell_1}-\widehat{\mu}_{k\ell})^2
&= \frac{\sum_{k_2=1}^K\sum_{\ell_2=1}^{L_{k_2}} \sum_{k'_2=1}^K\sum_{\ell'_2=1}^{L_{k'_2}} \RK_{k_2,k} \NL{k_2}{k}_{\ell_2,\ell}}{(\sum_{k_2=1}^K\sum_{\ell_2=1}^{L_{k_2}}\RK_{k_2,k}\NL{k_2}{k}_{\ell_2,\ell})^2}\\
&\quad\times \RK_{k'_2,k}\NL{k'_2}{k}_{\ell'_2,\ell} \left({\mus_{k_1,\ell_1}}^2-\mus_{k_1,\ell_1}\mus_{k_2,\ell_2} \right)\\
& -  \frac{\sum_{k_2=1}^K\sum_{\ell_2=1}^{L_{k_2}} \sum_{k'_2=1}^K\sum_{\ell'_2=1}^{L_{k'_2}} \RK_{k_2,k} \NL{k_2}{k}_{\ell_2,\ell}}{(\sum_{k_2=1}^K\sum_{\ell_2=1}^{L_{k_2}}\RK_{k_2,k}\NL{k_2}{k}_{\ell_2,\ell})^2}\\
&\quad\times \RK_{k'_2,k}\NL{k'_2}{k}_{\ell'_2,\ell} \mus_{k'_2,\ell'_2}\left(\mus_{k_1,\ell_1}-\mus_{k_2,\ell_2}\right)\\
&= \frac{\sum_{k_2=1}^K\sum_{\ell_2=1}^{L_{k_2}}  \RK_{k_2,k} \NL{k_2}{k}_{\ell_2,\ell}  \left({\mus_{k_1,\ell_1}}^2-\mus_{k_1,\ell_1}\mus_{k_2,\ell_2} \right)}{\sum_{k_2=1}^K\sum_{\ell_2=1}^{L_{k_2}}\RK_{k_2,k}\NL{k_2}{k}_{\ell_2,\ell}}\\
%&\quad\times \\
& -  \frac{\sum_{k_2=1}^K\sum_{\ell_2=1}^{L_{k_2}} \sum_{k'_2=1}^K\sum_{\ell'_2=1}^{L_{k'_2}} \RK_{k_2,k} \NL{k_2}{k}_{\ell_2,\ell}}{(\sum_{k_2=1}^K\sum_{\ell_2=1}^{L_{k_2}}\RK_{k_2,k}\NL{k_2}{k}_{\ell_2,\ell})^2}\\
&\quad\times \RK_{k'_2,k}\NL{k'_2}{k}_{\ell'_2,\ell} \mus_{k'_2,\ell'_2}\left(\mus_{k_1,\ell_1}-\mus_{k_2,\ell_2}\right).
\end{align*}
When considering $\Lambdat(\bz,\bT)$, we are summing with respect to $k_1,\ell_1$ as well, and the second term becomes $0$. Then, using the explicit form of $F_n$ given in the proof of Proposition \ref{prop:Fn}, it leads to
\begin{eqnarray*}
&&\!\!\!\!\!\!\!\!\!\!\!\!\Lambdat(\bz,\bT)\\
&=&-\frac{1}{2}\sum_{k=1}^K\sum_{\ell=1}^{L_{k}}
\frac{\sum_{k_1=1}^K\sum_{\ell_1=1}^{L_{k_1}} \sum_{k_2=1}^K\sum_{\ell_2=1}^{L_{k_2}} \RK_{k_1,k}\RK_{k_2,k}}{\sum_{k_2=1}^K\sum_{\ell_2=1}^{L_{k_2}}\RK_{k_2,k}\NL{k_2}{k}_{\ell_2,\ell}}\\
%&&\quad \times \\
&&\quad\times\NL{k_1}{k}_{\ell_1,\ell}\NL{k_2}{k}_{\ell_2,\ell}\frac{1}{2}\left({\mus_{k_1,\ell_1}}-\mus_{k_2,\ell_2}\right)^2.
\end{eqnarray*}
Then, using \ref{Rk+}, \ref{sumNk=Dk} and \ref{KL}, 
we finally get the desired formulae. 
\end{proof}

\begin{acks}[Acknowledgments]
This work has been partially supported by MIAI@Grenoble Alpes (ANR-19-P3IA-0003). All the computations presented in this paper were performed using the GRICAD infrastructure (\url{https://gricad.univ-grenoble-alpes.fr}), which is supported by Grenoble research communities.
\end{acks}

%%%%%%%%%%%%%%%%%%%%%%%%%%%%%%%%%%%%%%%%%%%%%%
%% Supplementary Material, if any, should %%
%% be provided in {supplement} environment %%
%% with title and short description. %%
%%%%%%%%%%%%%%%%%%%%%%%%%%%%%%%%%%%%%%%%%%%%%%
\begin{supplement}

\section*{Additional results on the simulation}
In this part, additional results for the section~\ref{sec:simu} are presented. 
Hereafter, we present additional results for the simulation part. Table \ref{tab:NCE} presents the NCE obtained with both MixSeg and the simple mixture model. Figure \ref{fig:time:d} presents the computation time recorded for MixSeg when $n$, $d$ and $\alpha$ vary.

\begin{table}[!htpb]
\caption{NCE MixSeg vs Simple mixture model}\label{tab:NCE}
    \begin{tabular}{ccccc}
    \toprule
    \multicolumn{5}{c}{NCE $\downarrow$}\\
         \midrule
         Setting&Model&0.1&0.2&1\\\midrule
 \multirow{2}{*}{(100,50)}&MixSeg&$\mathbf{0.76(0.16)}$&$\mathbf{0.34(0.18)}$&0 (0)\\
 &Simple&0.84 (0.1)&0.67 (0.18)&0.02 (0.09)\\\midrule
 \multirow{2}{*}{(100,100)}&MixSeg&$\mathbf{0.60(0.17)}$&$\mathbf{0.19(0.17)}$&0 (0)\\
 &Simple&0.75 (0.12)&0.51 (0.18)&0.01 (0.07)\\\midrule
 \multirow{2}{*}{(1000,50)}&MixSeg&$\mathbf{0.56(0.15)}$&$\mathbf{0.17(0.12)}$&0 (0)\\
 &Simple&0.82 (0.11)&0.62 (0.14)&0.00015 (0.00045)\\\midrule
 \multirow{2}{*}{(1000,100)}&MixSeg&$\mathbf{0.36(0.14)}$&$\mathbf{0.04(0.03)}$&0 (0)\\
 &Simple&0.72 (0.14)&0.51 (0.16)&0 (0)\\
 \bottomrule
    \end{tabular}
\end{table}

\begin{figure}[!ht]
    \centering
    \includegraphics[width=\linewidth]{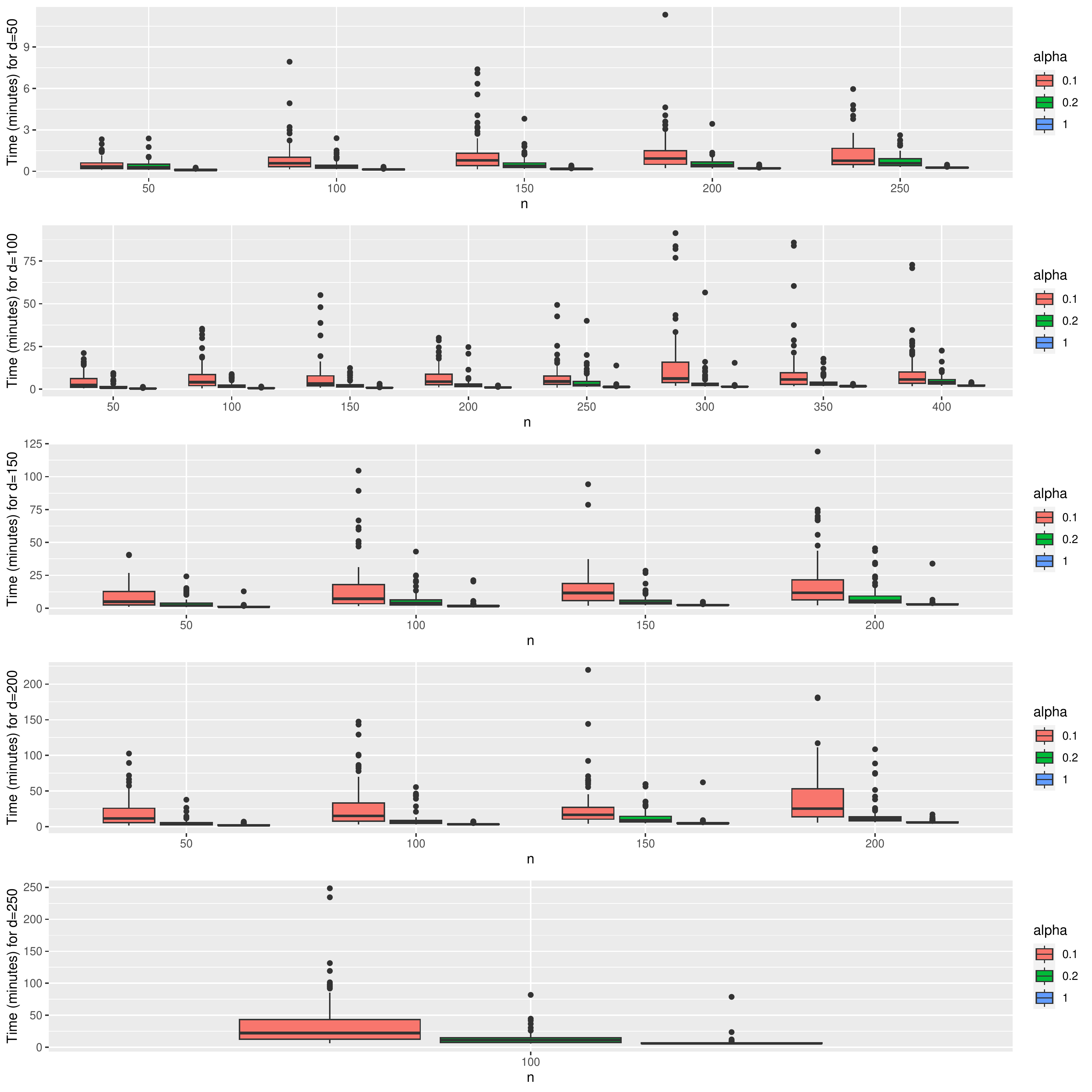}
    \caption{Computation time for varying $n$ and $d$}
    \label{fig:time:d}
\end{figure}

\FloatBarrier

\section*{Additional figures for Enedis data analysis}
In this part, three additional figures (\ref{SI:fig:cluster1}, \ref{SI:fig:cluster2} and~\ref{SI:fig:cluster3}) for the section~\ref{sec:real_data} are presented.

\begin{figure}[!ht]
    \centering
    \includegraphics[width=\linewidth]{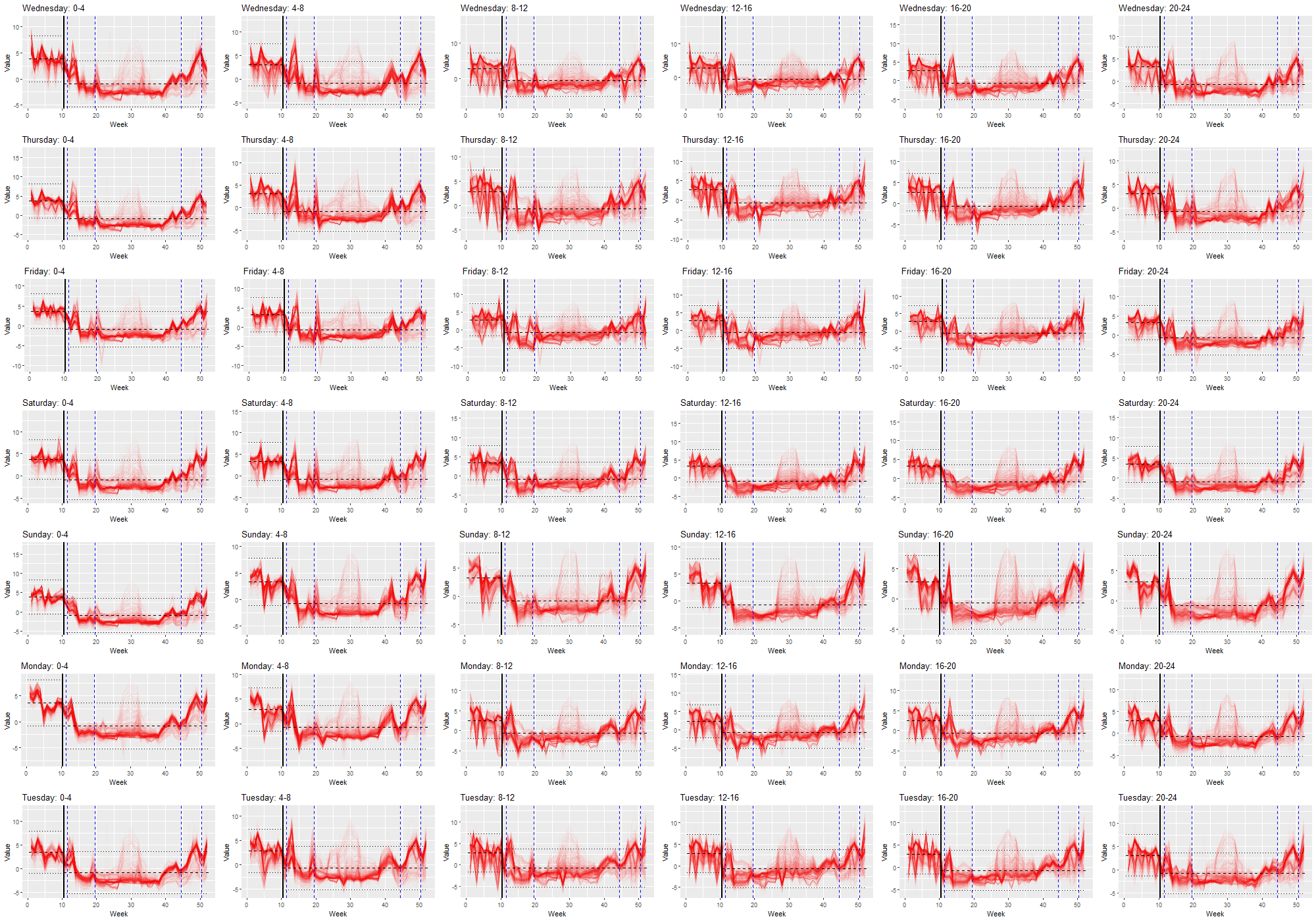}
    \caption{Evolution of the coefficients grouped per day (rows) and time slots (columns) for the cluster~1. For each graphic, the evolution week per week of each curve is represented in color, the vertical black line corresponds to the estimated break-point, the horizontal black dashed line at the mean, and the horizontal black dotted lines to the confident intervals. The vertical blue dashed lines correspond at the beginning and the end of both lock-down.}
    \label{SI:fig:cluster1}
\end{figure}

\begin{figure}[!ht]
    \centering
    \includegraphics[width=\linewidth]{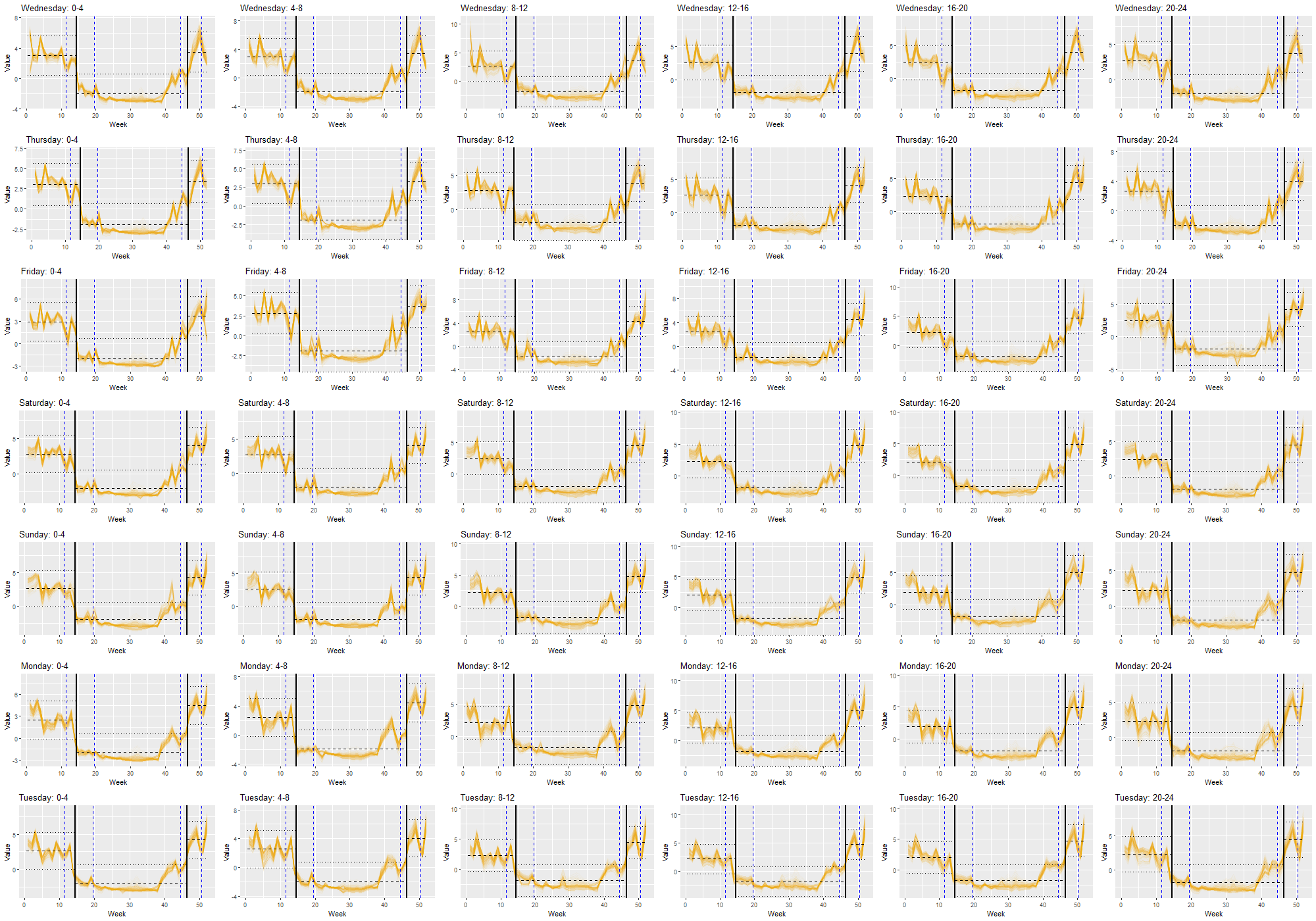}
    \caption{Evolution of the coefficients grouped per day (rows) and time slots (columns) for the cluster~2. For each graphic, the evolution week per week of each curve is represented in color, the vertical black lines correspond at the estimated break-points, the horizontal black dashed line at the mean, and the horizontal black dotted lines at the confident intervals. The vertical blue dashed lines correspond at the beginning and the end of both lock-down.}
    \label{SI:fig:cluster2}
\end{figure}

\begin{figure}[!ht]
    \centering
    \includegraphics[width=\linewidth]{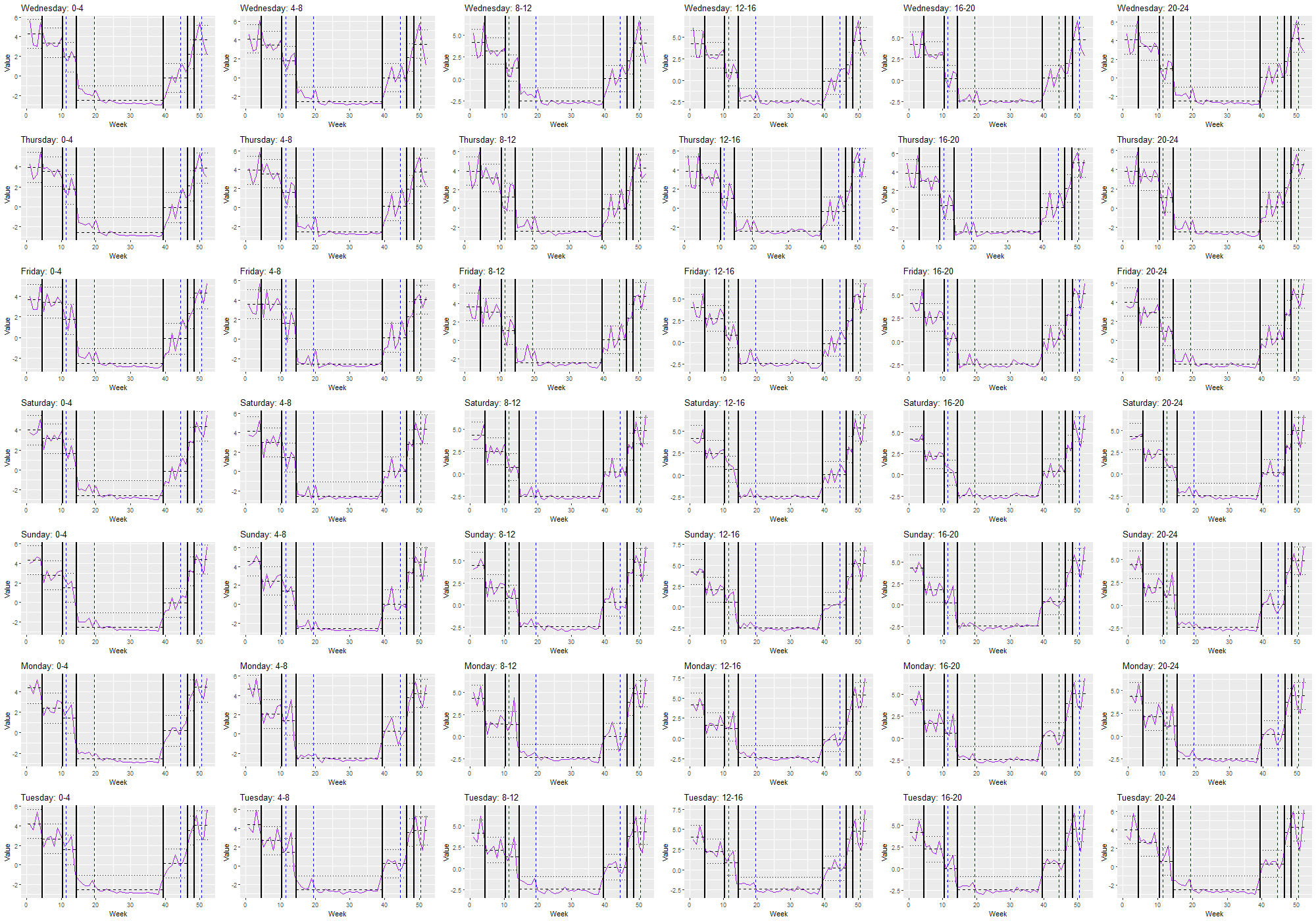}
    \caption{Evolution of the coefficients grouped per day (rows) and time slots (columns) for the cluster~3. For each graphic, the evolution week per week of each curve is represented in color, the vertical black lines correspond at the estimated break-points, the horizontal black dashed line at the mean, and the horizontal black dotted lines at the confident intervals. The vertical blue dashed lines correspond at the beginning and the end to both lock-down.}
    \label{SI:fig:cluster3}
\end{figure}

\textcolor{white}{Plus rien ne marche ?}\\
\end{supplement}

\clearpage
\bibliographystyle{imsart-nameyear}
\bibliography{_references}

\end{document}